\tikzstyle{internal} = [draw, fill, shape=circle, text=black, inner sep=4pt]
\tikzstyle{external} = [shape=circle, draw]
\tikzstyle{square}   = [draw, fill, rectangle, inner sep=5pt]
\tikzset{lnode/.style = {
		circle, 
		draw=cyan!30!black, 
		thick,
		inner sep=1.5pt,
		minimum size=15pt } }
\tikzset{uedge/.style = {
		draw=cyan!20!black, 
		very thick} }
\theoremstyle{plain}
\newtheorem{thm}{Theorem}[section]
\newtheorem{lem}[thm]{Lemma}
\newtheorem{claim}[thm]{Claim}
\newtheorem{prop}[thm]{Proposition}
\theoremstyle{definition}
\newtheorem{defn}[thm]{Definition}
\newtheorem{ex}[thm]{Example}
\newtheorem{rem}[thm]{Remark}
\def\final{0}  
\def\iflong{\iffalse}
\newcommand{\kanote}[1]{{\color{blue}[{\tiny \textbf{Karthik:} \bf #1}]\marginpar{\color{blue}*}}}
\newcommand{\kristof}[1]{{\color{red}[{\tiny \textbf{Kristóf:} \bf #1}]\marginpar{\color{red}*}}}
\newcommand{\tnote}[1]{{\color{blue}[{\tiny \textbf{Tamás:} \bf #1}]\marginpar{\color{blue}*}}}
\newcommand{\dnote}[1]{{\color{orange}[{\tiny \textbf{Dani:} \bf #1}]\marginpar{\color{orange}*}}}
\newcommand{\kristof}[1]{}
\newcommand{\tnote}[1]{}
\newcommand{\dnote}[1]{}
\DeclareMathOperator\head{head}
\DeclareMathOperator*{\argmax}{arg\,max}
\DeclareMathOperator*{\argmin}{arg\,min}
\newcommand{\bR}{\mathbb{R}}
\newcommand{\bZ}{\mathbb{Z}}
\newcommand{\cE}{\mathcal{E}}
\newcommand{\cP}{\mathcal{P}}
\newcommand{\cS}{\mathcal{S}}
\newcommand{\cQ}{\mathcal{Q}}
\newcommand{\Z}{\mathbb{Z}}
\newcommand{\newfontobj}[2]{
	\newcommand{#1}[1]{
		\expandafter\def\csname##1\endcsname{{#2 ##1}}}}
\newfontobj{\class}{\rm} 
\newtcolorbox{probbox}{arc=6pt,
                      colback=white!100,
                      colframe=black!50,
                      before skip=6pt,
                      after skip=6pt,
                      boxsep=1pt,
                      left=6pt,
                      right=6pt,
                      top=4pt,
                      bottom=4pt}
\newcommand{\searchprob}[3]{\small
   \begin{center}%
    \begin{minipage}{\linewidth}%
      \begin{probbox}
      \textsc{#1}\\[0.4ex]
      \textbf{Input:} #2\\[0.4ex]
      \textbf{Goal:} #3
      \end{probbox}
    \end{minipage}%
  \end{center}
}
\newcommand{\mcp}{\mathcal{P}}
\newcommand{\mcq}{\mathcal{Q}}
\newcommand{\mcr}{\mathcal{R}}
\newcommand{\mcs}{\mathcal{S}}
\newcommand{\R}{\mathbb{R}}
\newcommand{\graphkcut}{\ensuremath{\text{Graph-}k\text{-Cut}}\xspace}
\newcommand{\stgraphkcut}{\ensuremath{\{s,t\}\text{-Sep-Graph-}k\text{-Cut}}\xspace}
\newcommand{\kpartition}{\ensuremath{\text{Submod-}k\text{-Part}}\xspace}
\newcommand{\stkpartition}{\ensuremath{\{s,t\}\text{-Sep-Submod-}k\text{-Part}}\xspace}
\newcommand{\dG}{\mathpalette\dGaux G}
\newcommand{\dGaux}[2]{%
  \vphantom{#2}%
  \overset{%
    \smash{%
      \raisebox{%
        \ifx#1\scriptstyle -0.4ex 
        \else\ifx#1\scriptscriptstyle -0.2ex 
        \else -0.6ex 
        \fi\fi
      }{$\scriptstyle\rightharpoonup$}%
    }%
  }{#2}%
}
\newcommand{\dGp}{\mathpalette\dGaux {G'}}
\newlength{\bibitemsep}\setlength{\bibitemsep}{.1\baselineskip plus .05\baselineskip minus .05\baselineskip}
\newlength{\bibparskip}\setlength{\bibparskip}{1.2pt}
\let\oldthebibliography\thebibliography
\renewcommand\thebibliography[1]{%
  \oldthebibliography{#1}%
  \setlength{\parskip}{\bibitemsep}%
  \setlength{\itemsep}{\bibparskip}%
}
\renewcommand{\paragraph}{%
  \@startsection{paragraph}{4}%
  {\z@}{1.6ex \@plus 1ex \@minus .2ex}{-0.5em}%
  {\normalfont\normalsize\bfseries}%
}
\let\Right\bigr
\let\Left\bigl
\def\bigr#1{\Right#1\@ifnextchar){\!\bigr}{}}
\def\bigl#1{\Left#1\@ifnextchar({\!\bigl}{}}
\title{$\{s,t\}$-Separating Principal Partition Sequence\\ of Submodular Functions}
\date{}
\author{
Kristóf Bérczi\thanks{MTA-ELTE Matroid Optimization Research Group and HUN-REN–ELTE Egerváry Research Group, Department of Operations Research, ELTE Eötvös Loránd University, and HUN-REN Alfréd Rényi Institute of Mathematics, Budapest, Hungary. Email: \texttt{kristof.berczi@ttk.elte.hu}.}
\and
Karthekeyan Chandrasekaran\thanks{Grainger College of Engineering, University of Illinois, Urbana-Champaign. Email: \texttt{karthe@illinois.edu}.}
\and
Tamás Király\thanks{HUN-REN-ELTE Egerv\'ary Research Group, Department of Operations Research, E\"otv\"os Loránd University, Budapest, Hungary. Email: \texttt{tamas.kiraly@ttk.elte.hu}.}
\and
Daniel P. Szabo\thanks{Department of Operations Research, E\"otv\"os Loránd University, Budapest, Hungary. Email: \texttt{dszabo2@wisc.edu}.}
}
\begin{document}
	
	\maketitle
\vspace{-0.8cm}    


\begin{abstract}
Narayanan showed the existence of the the principal partition sequence of a submodular function, a structure with numerous applications in areas such as clustering, fast algorithms, and approximation algorithms.
In this work, motivated by two applications, we develop a theory of $\{s,t\}$-separating principal partition sequence of a submodular function. We define this sequence, show its existence, and design a polynomial-time algorithm to construct it. We show two applications: (1) approximation algorithm for the $\{s,t\}$-separating submodular $k$-partitioning problem for monotone and posimodular functions and (2) polynomial-time algorithm for the hypergraph orientation problem of finding an orientation that simultaneously has strong connectivity at least $k$ and $(s,t)$-connectivity at least $\ell$.

\medskip

\noindent \textbf{Keywords:} Submodular functions, Principal Partition Sequence, Submodular Partitioning, Hypergraph Orientation

\end{abstract}
\thispagestyle{empty}
\tableofcontents
\newpage
\pagenumbering{arabic}
\setcounter{page}{1}
\section{Introduction}
\label{sec:intro}


A set function $f\colon 2^V\to \bR$ is {\it submodular} if it satisfies the inequality $f(A)+f(B)\geq f(A\cap B)+f(A\cup B)$ for all $A,B\subseteq V$. Submodular functions arise throughout combinatorial optimization and economics; common examples include graph and hypergraph cut function, matroid rank function, and coverage function. 
The {\it principal partition sequence} 
is a central tool in submodular optimization, originating from the study of principal partitions by Kishi and Kajitani~\cite{kishi1969maximally} in the context of graph tri-partitions defined via maximally distant spanning trees. Initially viewed as decompositions of discrete systems into partially ordered components, principal partitions were later recognized to have a natural foundation in submodularity. Building on this foundation, Narayanan~\cite{pps} developed the theory of the principal partition sequence of submodular functions, while Fujishige~\cite{fujishige2009theory} provided a comprehensive survey of the theory of principal partitions. Over time, the theory of principal partition and principal partition sequence has been extended well beyond graphs~\cite{ozawa1974common} to matrices~\cite{iri1968min,iri1969maximum}, matroids~\cite{bruno1971principal,narayanan1974theory,tomizawa1976strongly}, and general submodular systems~\cite{fujishige1980lexicographically,fujishige1980principal,iri1979review,iri1984structural,nakamura1988structural, narayanan-book}, offering a lattice-theoretic decomposition framework for submodular functions. 

Principal partition sequence is a well-structured sequence of partitions of the ground set minimizing $\sum_{A\in \cP}f(A)-\lambda\cdot|\cP|$ as $\lambda$ varies from $-\infty$ to $\infty$ (see Section \ref{sec:stspps} for a formal definition). The sequence exhibits a refinement structure for submodular functions. 
A principal partition sequence of a submodular function given by its evaluation oracle can be found in polynomial time \cite{pps, narayanan-book, kolmogorovFast}. This is in contrast to minimizing $\sum_{A\in \cP}f(A)$ over partitions $\cP$ of the ground set with a given number of parts, which is $\NP$-hard. The computational tractability has made the principal partition sequence a powerful structural tool in submodularity with many applications. Examples include Cunningham's network strength measure that is used to quantify network vulnerability \cite{cunningham}, the realization of finite state machines~\cite{FSM-app}, recursive ideal tree packing \cite{CQX20}, approximation algorithms for graph and submodular partitioning problems \cite{Bar00, RS07, NRP96, ppskarthikwang}, graph clustering \cite{PN03, nagano2010minimum}, and dense subgraph decomposition \cite{fujishige1980lexicographically, hqc22, CCK25}.

In this paper, we extend the principal partition sequence to incorporate separation of two designated terminals -- say $s,t\in V$. In particular, we show that the sequence of $\{s,t\}$-separating partitions minimizing $\sum_{A\in \cP}f(A)-\lambda\cdot|\cP|$ as $\lambda$ varies from $-\infty$ to $\infty$ is well-structured and can be found in polynomial-time. The formal definition of $\{s,t\}$-separating principal partition sequence is fairly intricate, so we postpone its definition for now. Here, we discuss two concrete applications of this framework to illustrate its power. 

\subsection{Applications}

The first application concerns approximation algorithms and the second concerns polynomial-time algorithms. 

\subsubsection{\texorpdfstring{$\{s,t\}$}{{s,t}}-Separating Submodular \texorpdfstring{$k$}{k}-Partition}

Given a submodular function $f:2^V\rightarrow \R$ via its evaluation oracle, the {\it submodular $k$-partition} problem (abbreviated \kpartition) asks for a partition $\cP=\{V_1,\ldots,V_k\}$ of the ground set $V$ into $k$ non-empty parts that minimizes $f(\cP)\coloneqq \sum_{i=1}^k f(V_i)$. 
Many problems in combinatorial optimization can be formulated as special cases, including 
partitioning problems over graphs, hypergraphs, matrices, and matroids. 
\kpartition is NP-hard \cite{GH94}, does not admit a $(2-\epsilon)$-approximation assuming polynomial number of function evaluation queries \cite{San21}, does not admit a $n^{1/(\log\log n)^c}$-approximation for every constant $c$ assuming the Exponential Time Hypothesis \cite{k-wayhypergraphcuthardness}, and the best approximation factor that is known is $O(k)$ \cite{greedysplit, submod_kapx}. 
Nevertheless, constant factor approximations are known for broad subfamilies of submodular functions. 

We recall that a function $f$ is \emph{symmetric} if $f(S)=f(V\setminus S)$ for all $S\subseteq V$, \emph{monotone} if $f(S)\le f(T)$ for every $S\subseteq T\subseteq V$, and \emph{posimodular} if $f(S) + f(T) \ge f(S\setminus T) + f(T\setminus S)$ for every $S, T\subseteq V$. We observe that symmetric/monotone submodular functions are also posimodular. Prominent examples of symmetric submodular functions include graph and hypergraph cut functions, monotone submodular functions include matroid rank functions and coverage functions, and posimodular functions include positive combinations of symmetric submodular and monotone submodular functions. 
A well-studied special case of \kpartition for symmetric submodular functions is \graphkcut: the input is a graph and the goal is to find a minimum number of edges to delete so that the resulting graph has at least $k$ components. 
If $k$ is a fixed constant, then \graphkcut is polynomial-time solvable \cite{GH94}. For $k$ part of input, \graphkcut is NP-complete \cite{GH94} and does not have a polynomial-time $(2-\epsilon)$-approximation for every constant $\epsilon>0$ under the Small Set Expansion Hypothesis \cite{Ma18}. 
There are several approaches that achieve a $2$-approximation for \graphkcut \cite{RS07, Bar00, SV95, NK07}. 
One of these approaches is the principal partition sequence, which has been generalized to achieve a $2$-approximation for posimodular submodular $k$-partition and a $4/3$-approximation for monotone submodular $k$-partition \cite{ppskarthikwang}. 


Our first application of $\{s,t\}$-separating principal partition sequence is to the submodular $k$-partition problem with the additional constraint that the minimizing partition separates two specified terminals. We term this as \stkpartition and formally define it below. For a ground set $V$ and a pair of terminals $s,t\in V$, a partition $\mcp$ of $V$ is \textit{$\{s,t\}$-separating} if $|P\cap \{s,t\}|\le 1$ for every $P\in \mcp$. 
 {\searchprob{\stkpartition}{A submodular function $f\colon 2^V\to \bR_{\ge 0}$ given by a value oracle, distinct elements $s, t \in V$, and $k\in \Z_{\geq 0}$.}{
\[ \min\left\{ \sum_{i=1}^k f(V_i) \colon  \{ V_i\}_{i=1}^k \text{ is a }\{s,t\}\text{-separating partition of } V \text{ into $k$ non-empty parts}\right\}.\]}}





A concrete special case of \stkpartition is \stgraphkcut: the input is a graph with two specified terminal vertices $s,t$ and the goal is to find a minimum number of edges to delete so that the resulting graph has at least $k$ components with $s$ and $t$ being in different components. 
If $k$ is a fixed constant, \stgraphkcut is solvable in polynomial time \cite{berczi2019beating}. Thus, for fixed constant $k$, the complexity status of \stgraphkcut is identical to that of \graphkcut. This status for \stgraphkcut for constant $k$ inspired us to investigate \stgraphkcut, and more generally, \stkpartition, when $k$ is part of input. 
Is it possible to achieve the same approximation results for \stkpartition as for \kpartition? 

We observe that \stkpartition closely resembles \kpartition, but known approaches for \kpartition do not apply directly. It is easy to see that \stkpartition is a special case of matroid constrained submodular $k$-partition and consequently, it admits a $2$-approximation for symmetric submodular functions via the Gomory-Hu tree approach \cite{berczi2025approximating}. In this work, we exploit $\{s,t\}$-separating principal partition sequence to design approximation algorithms for \stkpartition with an approximation guarantee that matches that of \kpartition for monotone and posimodular submodular functions. 

\begin{thm}
    \stkpartition admits a $2$-approximation for posimodular submodular functions and a $4/3$-approximation for monotone submodular functions. 
\end{thm}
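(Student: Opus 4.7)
The plan is to mirror the principal partition sequence strategy of \cite{ppskarthikwang} for \kpartition, substituting the $\{s,t\}$-separating principal partition sequence for the ordinary one. The first step is to compute, in polynomial time, the $\{s,t\}$-separating principal partition sequence $\cP_1 \preceq \cP_2 \preceq \cdots \preceq \cP_r$ together with its critical Lagrangian parameters $\lambda_1 < \lambda_2 < \cdots < \lambda_{r-1}$; this is exactly what the main structural and algorithmic results of the paper provide. Each $\cP_j$ is a $\{s,t\}$-separating partition that minimizes $f(\cP)-\lambda|\cP|$ over all $\{s,t\}$-separating partitions for $\lambda=\lambda_{j-1}$, and $\cP_j$ refines $\cP_{j-1}$.

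To output a $k$-partition, I would locate the index $j$ with $|\cP_j|\le k \le |\cP_{j+1}|$ and return the better of two candidates: (a) a refinement of $\cP_j$ to exactly $k$ parts obtained by repeatedly splitting off singletons from a carefully chosen non-terminal part, and (b) a coarsening of $\cP_{j+1}$ to $k$ parts obtained by iteratively merging two cheapest parts subject to never merging the $s$-part with the $t$-part. Candidate (a) is automatically $\{s,t\}$-separating because $\cP_j$ already is and refinement preserves this property; candidate (b) requires only that a single forbidden merge be avoided at each step, so at worst a near-optimal merge replaces the cheapest merge.

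For the cost analysis, I would compare the algorithm's output against an optimal $\{s,t\}$-separating $k$-partition $\cP^*$. The Lagrangian optimality of $\cP_j$ yields $f(\cP_j)-\lambda_j|\cP_j| \le f(\cP^*)-\lambda_j k$, with the analogous inequality for $\cP_{j+1}$. Combining these two inequalities with the posimodular (respectively, monotone) inequalities that upper bound the incremental cost of splits (respectively, merges) — the same telescoping and convex-combination arguments used in \cite{ppskarthikwang} — should give a $2$-approximation for posimodular submodular functions and a $4/3$-approximation for monotone submodular functions. Those arguments rely only on submodularity plus posimodularity or monotonicity of $f$, not on any property specific to ordinary (non-separating) partitions, so they transfer cleanly once the PPS is replaced by its $\{s,t\}$-separating counterpart.

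The main obstacle will be verifying that the split/merge operations in candidates (a) and (b) can always be carried out \emph{within} the $\{s,t\}$-separating family without degrading the approximation factor. For the monotone $4/3$ argument in particular, the refined analysis selects splits based on a ranking of part values, and one must check that the one forbidden operation — merging the two terminal-containing parts — is either never chosen or can be replaced by a sibling of comparable cost. Handling this clean-up carefully, perhaps by a case analysis on whether an optimal PPS partition already places $s,t$ in different parts of the right size, is where the $\{s,t\}$-separating constraint materially enters the proof.
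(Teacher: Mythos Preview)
Your proposal has a genuine structural gap: you assume that consecutive partitions in the $\{s,t\}$-separating principal partition sequence satisfy the ordinary refinement relation (``$\cP_j$ refines $\cP_{j-1}$''), and that therefore the arguments of \cite{ppskarthikwang} ``transfer cleanly''. They do not. As the paper establishes (Definition~\ref{def:strefine}, Theorem~\ref{thm:refinement}, and the example in Figure~\ref{fig:example2}), the $\{s,t\}$-separating sequence only satisfies a weaker property: $\mcp_i$ is either a refinement of $\mcp_{i-1}$ up to one set \emph{or} an $\{s,t\}$-refinement of $\mcp_{i-1}$ up to two sets. In the latter case there exist $X\in\mcp_{i-1}$ and $Y\in\mcp_i$ that genuinely cross, with one terminal in $X\setminus Y$ and the other in $Y\setminus X$. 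Your candidate (a) --- refining $\mcp_{i-1}$ by peeling off pieces of a single part --- and candidate (b) --- coarsening $\mcp_i$ by merging parts --- both implicitly assume that every part of $\mcp_i$ sits inside some part of $\mcp_{i-1}$, which is false in the $\{s,t\}$-refinement case.

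The paper's actual algorithm (Algorithm~\ref{algo:stkpartition}) handles this second case with three tailored candidates $\sigma_1,\sigma_2,\pi$ that exploit the specific structure of the crossing pair $(X,Y)$: $\sigma_1$ and $\sigma_2$ start from $\mcp_{i-1}$, delete $X$, and reinsert the cheapest parts of $\mcp_i$ contained in $X\setminus Y$ together with either $(X\cap Y)\cup(\text{remainder})$ or $X\cap Y$ as a separate part; $\pi$ starts from $\mcp_i$ and merges the most expensive parts inside $X$. The approximation analysis for this case (Lemma~\ref{lem:alg-upper-bounds}, Case~2) is new and uses the counting inequality $|\{A\in\mcp_{i-1}:A\subseteq Y\}|\le|\{B\in\mcp_i:B\subseteq X\}|$ built into the definition of $\{s,t\}$-refinement, together with posimodularity applied to the pair $X$ and $(X\cap Y)\cup(\text{remainder})$, or monotonicity applied to $X\cap Y\subseteq Y$. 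None of this is a routine transfer, and your ``one forbidden merge'' framing in the last paragraph misidentifies the difficulty: the obstacle is not that a single merge is disallowed, but that the two adjacent partitions in the sequence are not nested at all.

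A secondary issue: even in the ordinary refinement case, the paper does not ``split off singletons'' but rather takes the cheapest parts of $\mcp_i$ that lie inside the refined set $X$; this is what makes the averaging argument against $f(\mcp_i)$ go through.
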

We note that the guarantee for posimodular submodular functions implies the same guarantee for symmetric submodular functions. Consequently, we have a $2$-approximation for \stgraphkcut. 

\newcommand{\stellConnOrient}{\ensuremath{(s,t)\text{-}\ell\text{-Conn-Orient}}\xspace}
\newcommand{\kConnOrient}{\ensuremath{k\text{-Conn-Orient}}\xspace}
\newcommand{\kstellConnOrient}{\ensuremath{(k,(s,t),\ell)\text{-Conn-Orient}}\xspace}
\newcommand{\graphstellConnOrient}{Graph-\ensuremath{(s,t)\text{-}\ell\text{-Conn-Orient}}\xspace}
\newcommand{\graphkConnOrient}{Graph-\ensuremath{k\text{-Conn-Orient}}\xspace}
\newcommand{\graphkstellConnOrient}{Graph-\ensuremath{(k,(s,t),\ell)\text{-Conn-Orient}}\xspace}

\subsubsection{Hypergraph Orientation}
In graph orientation problems, the goal is to orient a given undirected graph to obtain a directed graph with certain properties. Two fundamental properties of interest are $(s,t)$-connectivity and strong-connectivity. In \graphstellConnOrient, the input is an undirected graph with two specified terminals $s$ and $t$ and an integer $\ell\in \Z_{\geq 0}$. The goal is to verify if there exists an orientation of $G$ that has $\ell$ arc-disjoint paths from $s$ to $t$ and if so, find one. By Menger's theorem, an undirected graph $G$ has an orientation with $\ell$ arc-disjoint paths from $s$ to $t$ if and only if there exist $\ell$ edge-disjoint paths between $s$ and $t$ in the undirected graph $G$. Thus, \graphstellConnOrient can be solved using an application of max $(s,t)$-flow. 
In \graphkConnOrient, the input is an undirected graph and an integer $k\in \Z_{\geq 0}$. 
The goal is to verify if there exists a $k$-arc-connected orientation of the graph. 
We recall that a digraph is $k$-arc-connected if for every pair of distinct vertices $u, v\in V$, there exist $k$ arc-disjoint paths from $u$ to $v$. A classic result of Nash-Williams \cite{nash1960orientations} shows that an undirected graph $G$ has a $k$-arc-connected orientation if and only if $G$ is $2k$-edge-connected. We recall that an undirected graph $G=(V, E)$  is $2k$-edge-connected if for every pair of distinct vertices $u, v\in V$, there exist $2k$ edge-disjoint paths between $u$ and $v$. Nash-Williams' result is also constructive that leads to a polynomial time algorithm to solve \graphkConnOrient. 

Frank, Kir\'{a}ly, and Kir\'{a}ly \cite{tamas-frank-zoli} showed an intriguing result concerning a generalization of these two problems that we term as \graphkstellConnOrient. The input here is an undirected graph with two specified terminals $s$ and $t$ and integers $k, \ell\in \Z_{\geq 0}$ (with the interesting case being $k<\ell$). The goal is to verify if there exists an orientation $\dG$ of $G$ such that $\dG$ is $k$-arc-connected and has $\ell$ arc-disjoint paths from $s$ to $t$ -- we will call such an orientation to be $(k, (s,t), \ell)$-arc-connected orientation. They showed a complete characterization for the existence of a $(k, (s,t),\ell)$-arc-connected orientation while also giving a polynomial time algorithm to find such an orientation via Mader's splitting-off result. Our focus is on this generalized orientation problem over hypergraphs instead of graphs. 

Frank, Kir\'{a}ly, and Kir\'{a}ly \cite{tamas-frank-zoli} considered all three orientation problems mentioned above in hypergraphs as opposed to graphs. We begin by discussing the two constituent problems in hypergraphs. 
A hypergraph $G=(V, E)$ is specified by a vertex set $V$ and hyperedge set $E$ where each $e\in E$ is a subset of $V$. 
An orientation $\dG$ of a hypergraph $G=(V, E)$ is a directed hypergraph specified by $\dG=(V, E, \head\colon E\rightarrow V)$, where $\head(e)\in e$ for every $e\in E$. 
In hypergraph orientation problems, the goal is to orient a given hypergraph to obtain a directed hypergraph with certain properties. 
For vertices $a, b$, an $(a,b)$-path in a directed hypergraph is an alternating sequence, without repetition, of
vertices and hyperarcs $v_1, e_1, v_2, e_2, \ldots, v_k, e_k, v_{k+1}$ where $v_i\in e_i\setminus \head(e_i)$ and $v_{i+1}=\head(e_i)$. 
In \stellConnOrient, the input is a hypergraph with two specified terminals $s$ and $t$ and an integer $\ell\in \Z_{\geq 0}$. The goal is to verify if there exists an orientation of $G$ that has $\ell$ hyperarc-disjoint paths from $s$ to $t$ and if so, then find one. Menger's theorem extends to hypergraphs and thus, resolves $\stellConnOrient$ in hypergraphs. 
In \kConnOrient, the input is a hypergraph and an integer $k\in \Z_{\geq 0}$. 
The goal is to verify if there exists a $k$-hyperarc-connected orientation of the graph. A directed hypergraph is $k$-hyperarc-connected if for every pair of distinct vertices $u, v\in V$, there exist $k$ hyperarc-disjoint paths from $u$ to $v$. 
Frank, Kir\'{a}ly, and Kir\'{a}ly generalized Nash-Williams' result from graphs to hypergraphs by giving a complete characterization for the existence of a $k$-hyperarc-connected orientation of a hypergraph and moreover, their characterization is constructive, i.e., it leads to a polynomial-time algorithm to verify if a given hypergraph has a $k$-hyperarc-connected orientation and if so, then find one. 

We now address the main application of interest to our work, namely \kstellConnOrient that is defined as follows (with the interesting case being $k<\ell$). 
 {\searchprob{\kstellConnOrient}{A hypergraph $G=(V, E)$, vertices $s, t \in V$, and $k, \ell\in \Z_{\geq 0}$.}{
Verify if there exists an orientation $\dG$ of $G$ such that $\dG$ is $k$-hyperarc-connected and has $\ell$ hyperarc-disjoint paths from $s$ to $t$.}}
We will call a feasible orientation for \kstellConnOrient to be a $(k, (s,t), \ell)$-hyperarc-connected orientation. Frank, Kir\'{a}ly, and Kir\'{a}ly gave a characterization for the existence of a $(k, (s,t),\ell)$-connected orientation of a hypergraph. However, their proof of the characterization 
relies on certain steps which did not seem constructive. Consequently, it was unclear whether there exists a polynomial-time algorithm to verify 
whether a given hypergraph has a $(k, (s,t),\ell)$-hyperarc-connected orientation and if so, then find it. In this work, we exploit algorithmic aspects of $\{s,t\}$-separating principal partition sequence of a submodular function to design a polynomial time algorithm for this problem. 

\begin{thm}
    There exists a polynomial time algorithm to verify whether a given hypergraph admits a $(k, (s,t),\ell)$-hyperarc-connected orientation and if so, then find one. 
\end{thm}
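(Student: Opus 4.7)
The plan is to start from the characterization of Frank, Király, and Király, which expresses existence of a $(k,(s,t),\ell)$-hyperarc-connected orientation as a family of partition inequalities on a suitable submodular function $f$ on $V$ derived from the hypergraph (for example, $f(A)$ counting the hyperedges meeting both $A$ and $V\setminus A$, or an analogous cut-type quantity). After reformulation, these inequalities take the form $\sum_{A\in\mcp} f(A) - \lambda|\mcp| \ge c$, with $\lambda$ and $c$ determined by $k$ and $\ell$, to be verified over (a)~all partitions $\mcp$ of $V$, and (b)~all $\{s,t\}$-separating partitions $\mcp$ of $V$. Hence feasibility of \kstellConnOrient is exactly the question of whether the minima of the partition functionals underlying (a) and (b) satisfy the stated lower bounds. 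Item (a) is handled by the classical principal partition sequence of Narayanan and Fujishige, and item (b) is precisely the object this paper introduces; both sequences can be computed in polynomial time, so the verification step is immediate.

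For the construction phase, I would iteratively fix the head of one hyperedge at a time while preserving extendability to a full valid orientation. Concretely, pick an unoriented hyperedge $e$ and, for each candidate $v\in e$, test whether setting $\head(e):=v$ leaves the resulting partial instance feasible by applying the verification procedure above to the residual submodular function that accounts for already-oriented hyperedges. The FKK characterization guarantees at least one candidate succeeds at every step, so a polynomial number of such tests yields a complete orientation. An appealing alternative is to reformulate the problem as a submodular flow, since the characterization implies nonemptiness of the associated submodular flow polyhedron; even in that route, however, one still needs the $\{s,t\}$-separating principal partition sequence to set up and verify the residual constraints efficiently.

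The main obstacle is the step at which the argument appeared non-constructive in FKK: showing that in every intermediate instance at least one head assignment is valid, and exhibiting one algorithmically. The danger is that some candidates could be blocked by tight partitions for the $k$-arc-connectivity family while the remaining candidates are blocked by tight $\{s,t\}$-separating partitions for the $(s,t)$-$\ell$-connectivity family, so that no single choice respects both. The resolution is an uncrossing argument on the two types of tight partitions: using the lattice structure supplied by the two principal partition sequences, a pair of blocking partitions can be uncrossed into a strictly smaller blocking configuration, contradicting minimality. The technical heart of the proof is therefore making this uncrossing constructive, for which the explicit tight families produced in polynomial time by the $\{s,t\}$-separating principal partition sequence, together with their interaction with the unconstrained sequence, are what drive the algorithm.
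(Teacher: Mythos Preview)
Your verification step is essentially the paper's: both reduce the FKK partition inequality to two minimizations, one over all partitions and one over $\{s,t\}$-separating partitions, of the form $\min\{f(\mcp)-\lambda|\mcp|\}$ with $f=d^{in}_{\dG}$ for an arbitrary orientation $\dG$. A small overclaim: neither the paper nor your argument needs the full $\{s,t\}$-separating principal partition sequence here, only the single-$\lambda$ minimization subroutine (Theorem~\ref{thm:min-g}).

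For construction, the paper takes a different and more direct route. It follows the skeleton of FKK's own (non-constructive) proof: for each vertex $v$, it computes
\[
x_v=\min\bigl\{|\delta_G(\cP)|-p^{s,t}_{k,\ell}(\cP):\cP\text{ is a partition of }V\text{ with }\{v\}\text{ a singleton part}\bigr\},
\]
which reduces (by deleting $v$) to exactly the same two partition-minimization problems solved in the verification step, now over $G-v$. FKK already showed that this vector $x$ satisfies $x(V)=|E|$ and $x(Y)\ge i_G(Y)+p^{s,t}_{k,\ell}(Y)$ for all $Y$, so an orientation with indegree vector $x$ exists and can be found in polynomial time (Proposition~\ref{prop:vector_covering}); any such orientation is automatically $(k,(s,t),\ell)$-hyperarc-connected. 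No head-by-head search, no residual instances, no uncrossing.

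Your iterative head-fixing proposal has a genuine gap. After fixing $\head(e)=v$, the residual feasibility question is an orientation problem for $(V,E\setminus\{e\})$ with requirement function $p'(X)=p^{s,t}_{k,\ell}(X)-[\,e\in\delta^{in}(X)\,]$, which is no longer of the form~\eqref{eq:conditions}. You assert one can ``apply the verification procedure above to the residual submodular function,'' but Theorem~\ref{thm:st-orientation} as stated does not cover this $p'$, and you have not shown that a partition characterization holds for it nor that the corresponding minimum is still computable by the two fixed-$\lambda$ problems. This is fixable (FKK's framework does extend to mixed hypergraphs, and the residual minimization is still a Dilworth-truncation computation), but it is additional work you have not supplied.

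Separately, your ``main obstacle'' paragraph is misdirected. In the iterative scheme there is no danger that every candidate head is blocked: if a feasible full orientation $\dG$ exists, then choosing $v=\head_{\dG}(e)$ trivially leaves a feasible residual instance. So the uncrossing argument you sketch is unnecessary for your approach; the actual difficulty is the residual-verification issue above, which uncrossing does not address.
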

We also consider optimization variants of the above problem where $\ell$ (or $k$) is given as part of input and the goal is to maximize $k$ (or $\ell$ respectively) so that there exists a $(k, (s,t),\ell)$-connected orientation, present min-max relations, and show that the corresponding minimization problem is solvable in polynomial time via our results on $\{s,t\}$-separating principal partition sequence. 




\subsection{\{s,t\}-Separating Principal Partition Sequence}
\label{sec:stspps}

In this section, we present our definition and state our results concerning $\{s,t\}$-separating principal partition sequence. We begin by discussing principal partition sequence and known results to provide background and context. 
For a set function $f\colon 2^V\to\bR$ and a collection $\mcp$ of subsets of $V$, we write $f(\mcp)\coloneqq \sum_{P\in \mcp}f(P)$. A \textit{partition} of a set $S\subseteq V$ is a collection of pairwise disjoint non-empty subsets of $S$ whose union is $S$. 

\paragraph{Principal Partition Sequence.}
Let $f\colon 2^V\to\bR$ be a set function and $\mcp$ be a partition of the ground set $V$. 
We define functions $g_{f,\mcp}:\bR\to\bR$ and $g_f\colon \bR\to\bR$ as follows:
\begin{align*}
    g_{f,\mcp}(\lambda)&\coloneqq f(\mcp)-\lambda\cdot |\mcp|, \\
    g_f(\lambda)&\coloneqq \min\{g_{f,\mcp}(\lambda):\mcp\text{ is a partition of }V\}.
\end{align*}
We drop the subscript $f$ and simply write $g_{\mcp}$ and $g$, respectively, if the function $f$ is clear from context. A \emph{breakpoint} of a piecewise linear function is a point at which the slope of $f$ changes, or equivalently, a point where $f$ is continuous but not differentiable. 
By definition, the function $g_f$ is piecewise linear and has at most $|V|$ slopes and hence, at most $|V|-1$ breakpoints.

Narayanan~\cite{narayanan1974theory}  showed that if $f\colon 2^V\to \bR$ is submodular, then the partitions attaining the values of the function $g_f$ at its breakpoints exhibit a refinement structure, which is captured by the principal partition sequence. 

\begin{defn}[Refinement]\label{def:refine}
    Let $\cP$ and $\cQ$ be distinct partitions of $V$. 
    \begin{enumerate}\itemsep0em
        \item $\mcq$ is a \emph{refinement of $\mcp$} if for all $B\in \cQ$ there exists $A\in \cP$ such that $B\subseteq A$. \label{refinment:1}
        \item $\mcq$ is a refinement of $\mcp$ \emph{up to one set} if 
        \begin{enumerate}\itemsep0em
            \item $\mcq$ is a refinement of $\mcp$ and \label{refinment:2a}
            \item there exists $P\in \mcp$ such that for all $B\in \mcq$, either $B\subsetneq P$ or $B\in \mcp$. \label{refinment:2b}
        \end{enumerate} \label{refinment:2}
    \end{enumerate}
\end{defn}

See Figure~\ref{fig:refinement} for an example of a refinement. We observe that if $\mcp$ and $\mcq$ are distinct partitions and $\mcq$ is a refinement of $\mcp$, then $|\mcq|>|\mcp|$. 
Using this definition of refinements, a principal partition sequence is defined as follows.

\begin{defn}[Principal partition sequence] \label{def:pps}
Let $f\colon 2^V\to\bR$ be a set function. 
A sequence $\mcp_1, \dots, \mcp_{\ell}$ of partitions of $V$ is a \emph{principal partition sequence of $f$} if there exist $\lambda_1,  \ldots, \lambda_{\ell-1}\in\bR$ such that 
\begin{enumerate}[label=(P\arabic*)]\itemsep0em
    \item $\lambda_1\le \ldots \le \lambda_{\ell-1}$, \label{principal:1}
    \item 
    $g(\lambda)=\begin{cases}
        g_{\mcp_1}(\lambda) & \text{for $\lambda\in (-\infty, \lambda_1]$},\\
        g_{\mcp_{j+1}}(\lambda) & \text{for $\lambda\in [\lambda_j, \lambda_{j+1}]$ and $j\in [\ell-2]$},\\
        g_{\mcp_{\ell}}(\lambda) & \text{for $\lambda\in [\lambda_{\ell-1}, \infty)$},
    \end{cases}$\label{principal:2}
    \item $\mcp_1=\{V\}$ and $\mcp_\ell=\{\{v\}\colon v\in V\}$, and \label{principal:3}
    \item for every $j\in [\ell-1]$,  $\mcp_{j+1}$ is a refinement of $\mcp_j$ up to one set. \label{principal:4}
\end{enumerate}
The values $\lambda_1,  \ldots, \lambda_{\ell-1}$ are called the \emph{critical values associated with the principal partition sequence}.
\end{defn}

We observe that properties \ref{principal:1}, \ref{principal:2}, and \ref{principal:3} in the definition of a principal partition sequence are easy to achieve for every set function. The essence of principal partition sequences lies in the fact that, for submodular functions, property~\ref{principal:4} can also be achieved. This was shown by Narayanan~\cite{narayanan1974theory}. He also designed a polynomial-time algorithm to find such a sequence. 

\begin{thm}[Narayanan]\label{thm:PPS-exists-algo}
Let $f:2^V\rightarrow \R$ be a submodular function. Then, there exists a principal partition sequence of $f$. Moreover, given oracle access to a submodular function $f\colon 2^V\to \bR$, there exists a polynomial-time algorithm to compute a principal partition sequence of $f$ along with its associated critical values.
\end{thm}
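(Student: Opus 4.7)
The plan is to analyze the piecewise-linear structure of $g_f$, to use a submodularity-driven uncrossing to pin down a unique finest optimum on each linear piece, and then to turn this into a polynomial-time algorithm via Dilworth truncation. Each $g_{f,\mcp}(\lambda)=f(\mcp)-\lambda|\mcp|$ is affine in $\lambda$ with integer slope $-|\mcp|\in\{-|V|,\dots,-1\}$, so $g_f$ is concave and piecewise-linear with at most $|V|-1$ breakpoints $\lambda_1\le\dots\le\lambda_{\ell-1}$. The extremal optima $\mcp=\{V\}$ for $\lambda\to-\infty$ and $\mcp=\{\{v\}:v\in V\}$ for $\lambda\to+\infty$ handle conditions~\eqref{principal:1}--\eqref{principal:3} of Definition~\ref{def:pps} for any set function, and the substantive task is to establish~\eqref{principal:4}.

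The central step is an uncrossing lemma: for each $\lambda$, the set of partitions attaining $g_f(\lambda)$ contains a unique finest element $\mcp^\star(\lambda)$. I would prove this by showing that the optimum set is closed under the coarsest common refinement $\mcp\wedge\mcq$, whose parts are the nonempty intersections $P\cap Q$. The argument combines submodularity of $f$, iterated over pairwise crossings $P\in\mcp$, $Q\in\mcq$ to produce the partition-level inequality $f(\mcp)+f(\mcq)\ge f(\mcp\wedge\mcq)+f(\mcp\vee\mcq)$ (where $\mcp\vee\mcq$ is the finest common coarsening), with the supermodularity of the number-of-parts function on the partition lattice, $|\mcp|+|\mcq|\le|\mcp\wedge\mcq|+|\mcp\vee\mcq|$, which is an Euler-type count on the bipartite intersection graph of $\mcp$ and $\mcq$. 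Noting that adding a constant $c$ to $f$ on every nonempty set shifts each $g_{f,\mcp}(\lambda)$ by $c|\mcp|$---equivalently, shifts all breakpoints by $+c$ without altering the optimal partitions---we may assume the breakpoint of interest is nonnegative, so that the two inequalities combine into $g_{\mcp\wedge\mcq}(\lambda)+g_{\mcp\vee\mcq}(\lambda)\le g_\mcp(\lambda)+g_\mcq(\lambda)$, forcing both $\mcp\wedge\mcq$ and $\mcp\vee\mcq$ to be optimal whenever $\mcp$ and $\mcq$ are. Setting $\mcp_j\coloneqq\mcp^\star(\lambda)$ on the $j$-th linear piece yields a monotone refining sequence, and the ``up to one set'' sharpening follows from a local slope-counting argument at each breakpoint: if $\mcp_{j+1}$ refined two distinct parts $P,P'\in\mcp_j$, each partial refinement (refining only $P$ or only $P'$) would realize a slope strictly between $-|\mcp_j|$ and $-|\mcp_{j+1}|$ and be optimal near $\lambda_j$, producing an intermediate breakpoint and contradicting the definition of the sequence.

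For the algorithmic claim, observe that $g_f(\lambda)=\min_\mcp\sum_{P\in\mcp}(f(P)-\lambda)$ is the Dilworth truncation of the submodular function $S\mapsto f(S)-\lambda$ evaluated at $V$; both its value and the finest minimizing partition are computable in polynomial time from an evaluation oracle for $f$ via standard submodular function minimization, since the Dilworth truncation is itself submodular. The full sequence is then built iteratively: starting from $\mcp_1=\{V\}$, find the next breakpoint $\lambda_j$ as the largest $\lambda$ at which $g_{\mcp_j}(\lambda)=g_f(\lambda)$ by parametric submodular optimization (polynomially many SFM calls), and set $\mcp_{j+1}\coloneqq\mcp^\star(\lambda_j)$; the sequence has at most $|V|$ entries, so the overall procedure runs in polynomial time. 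I expect the main obstacle to be the uncrossing lemma together with the ``exactly one set'' sharpening: combining two opposite-oriented inequalities requires the shift trick above, and isolating the single-part refinement at each breakpoint requires careful bookkeeping of slope changes, which is the most delicate part of the argument.
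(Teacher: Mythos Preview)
The paper does not actually prove Theorem~\ref{thm:PPS-exists-algo}; it is cited as a known result of Narayanan and Fujishige, so there is no proof here to compare against. Your overall strategy---lattice structure of the optima at each $\lambda$, then Dilworth truncation for the algorithm---is the standard one, but your argument for property~\eqref{principal:4} has a real gap. You claim that if $\mcp_{j+1}$ refined two distinct parts of $\mcp_j$, a partial refinement would be ``optimal near $\lambda_j$, producing an intermediate breakpoint.'' This is wrong on both counts: a partial refinement $\mcr$ is optimal only \emph{at} $\lambda_j$, not on any open interval, so it creates no new breakpoint of $g_f$; and such multi-part jumps do occur. Concretely, take $f$ to be the cut function of two triangles $T_1,T_2$ joined by a single bridge edge. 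The linear pieces of $g_f$ are realized by $\{V\}$, $\{T_1,T_2\}$, and the singleton partition (breakpoints $\lambda=2,3$), and the last refines \emph{both} $T_1$ and $T_2$. The correct fix is not to rule such jumps out but to \emph{insert} intermediate partitions at the breakpoint with repeated critical value: if $\mcr$ refines only some parts of $\mcp_j$ as $\mcp_{j+1}$ does and $\mcr'$ performs the complementary refinement, then $g_\mcr(\lambda_j)+g_{\mcr'}(\lambda_j)=g_{\mcp_j}(\lambda_j)+g_{\mcp_{j+1}}(\lambda_j)=2g_f(\lambda_j)$, so both are optimal at $\lambda_j$ and you can interpolate one part at a time. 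This swap argument is exactly what the paper uses for its $\{s,t\}$ analogue in Lemma~\ref{lem:partitions-at-breakpoint}.

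A secondary issue: your route to the lattice property via $f(\mcp)+f(\mcq)\ge f(\mcp\wedge\mcq)+f(\mcp\vee\mcq)$ is shakier than you indicate. Iterated pairwise uncrossing of the multiset $\mcp\cup\mcq$ yields a laminar family that decomposes into two partitions, but these are in general \emph{not} $\mcp\wedge\mcq$ and $\mcp\vee\mcq$: with $\mcp=\{\{1,2\},\{3,4\}\}$ and $\mcq=\{\{1,3\},\{2,4\}\}$, greedy uncrossing can terminate at $\{\{1\},\{3\},\{2,4\},\{1,2,3,4\}\}$, whose fine partition $\{\{1\},\{3\},\{2,4\}\}$ is not the meet. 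The lattice property of minimizers does hold, but it comes from the Dilworth-truncation theory for intersecting submodular functions rather than from the inequality you wrote down; your sketch does not establish it.
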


Principal partition sequence resembles the notion of \emph{principal partition} of a submodular function. We will not define principal partition of a submodular function in this work since this is not directly relevant to our investigation. It is known that the principal partition sequence of the cut function of a graph is related to the principal partition of the rank function of the graphic matroid associated with the graph. We refer the reader to \cite{narayanan-book} and \cite{fujishige2009theory} for more details on similarities and differences between the two notions.

\paragraph{$\{s,t\}$-Refinement and $\{s,t\}$-Separating Principal Partition Sequence.}
Let $f\colon  2^V\rightarrow \bR$ be a submodular set function on ground set $V$ 
and $s, t\in V$ be distinct elements. 
A partition $\mcp$ of $V$ is an $\{s,t\}$-separating partition if $|\{s,t\}\cap P|\le 1$ for every $P\in \mcp$. 
We recall that for every partition $\mcp$ of $V$ and every $\lambda\in\bR$, we defined $g_{f,\mcp}(\lambda)=f(\mcp)-\lambda\cdot |\mcp|$. 
As a natural $\{s,t\}$-separating counterpart, we define 
the function $g_f^{s,t}\colon \bR\to\bR$ by setting
\begin{align*}
    g_f^{s,t}(\lambda)&\coloneqq \min\{g_{f,\mcp}(\lambda)\colon \mcp\text{ is an $\{s,t\}$-separating partition of }V\}.
\end{align*}
We drop the subscript $f$ and simply write $g^{s,t}$ if the function $f$ is clear from context. The function $g^{s,t}$ is piecewise linear with at most $|V|-2$ breakpoints (see Lemma \ref{lem:piecewise-linear} in appendix).

We observe that for every set function $f\colon 2^V\to\bR$, one can find a sequence $\cP_1,\dots,\cP_\ell$ of partitions and $\lambda_1,\dots,\lambda_\ell\in\bR$ satisfying properties \ref{principal:1}-\ref{principal:3} in Definition~\ref{def:pps}. Submodularity becomes crucial, however, in order to obtain the refinement property \ref{principal:4}. A key difference between principal partition and the $\{s,t\}$-separating partition sequences arises precisely in this property: minimizers for the breakpoints of $g^{s,t}$ do not necessarily satisfy the refinement property; an example is given in Figure~\ref{fig:example2}. 

\begin{figure}[ht]
    \centering
    \begin{tikzpicture}[scale=1.2, 
    vertex/.style={line width=1.5pt, circle, draw, fill=white, inner sep=3pt},
    edgelabel/.style={fill=none, inner sep=1pt, font=\small\itshape},
    every edge/.append style = {draw, thick}]        
        \node[vertex] (a) at (-1,1) {$s$};
        \node[vertex] (b) at (-1,-1) {$a$};
        \node[vertex] (c) at (0,0) {$b$};
        \node[vertex] (e) at (1,0) {$c$};
        \node[vertex] (f) at (2,1) {$d$};
        \node[vertex] (g) at (3,0) {$t$};
        \node[vertex] (h) at (2,-1) {$e$};
        
        \draw (a) -- node[edgelabel, left=1pt] {$\alpha_1$} (b);
        \draw (b) -- node[edgelabel, below right] {$\alpha_1$} (c);
        \draw (c) -- node[edgelabel, above right] {$\alpha_1$} (a);
            \draw (c) -- node[edgelabel, above=1pt] {$1$} (e);
        \draw (e) -- node[edgelabel, above left] {$\alpha_2$} (f);
        \draw (f) -- node[edgelabel, above right] {$\alpha_2$} (g);
        \draw (g) -- node[edgelabel, below right] {$\alpha_2$} (h);
        \draw (h) -- node[edgelabel, below left] {$\alpha_2$} (e);
        \draw (e) -- (g);
        \draw (f) -- (h);
        \node[edgelabel] at (1.57, 0.1) {$\alpha_2$};
        \node[edgelabel] at (2.15, -0.4) {$\alpha_2$};
    \end{tikzpicture}
    \caption{An edge-weighted graph whose cut function $f$ is such that minimizers for the breakpoints of $g_f^{s,t}$ do not necessarily satisfy the refinement property. 
    Here $0<\varepsilon<1/24$ is a small constant, and $\alpha_1 = 1/2+\varepsilon,$ $\alpha_2 = 1/3 + 2\varepsilon$. The sequence of minimizers of $g_f^{s,t}$ 
    is unique and is given by $\cP_1 = \{\{s,a,b\},\{t,c,d,e\}\}, \cP_2 = \{\{s\},\{a\},\{b,t,c,d,e\}\},$ $\cP_3:=\{\{s,a,b,c\}, \{d\}, \{e\}, \{t\}\},$ $\cP_4 = \{\{s\}, \{a\}, \{b, c\}, \{d\}, \{e\}, \{t\}\},$ and $\cP_5 = \{\{s\}, \{a\}, \{b\}, \{c\}, \{d\}, \{e\}, \{t\}\}$. The principal partition sequence, meanwhile, is $\{s,a,b,c,d,e,t\}, \cP_3, \cP_4, \cP_5$. }
    \label{fig:example2}
\end{figure}

Nevertheless, this sequence still exhibits a more intricate form of refinement. We now formalize this intricate form of refinement. Two sets $X\subseteq V$ and $Y\subseteq V$ are called \emph{intersecting} if all of $X\cap Y$, $X\setminus Y$, and $Y\setminus X$ are nonempty. A pair of intersecting sets $X\subseteq V$ and $Y\subseteq V$ is called {\it $\{s,t\}$-uncrossable} if they do not separate $s$ and $t$, meaning that either $|\{s,t\}\cap (X\setminus Y)|\neq 1$ or $|\{s,t\}\cap (Y\setminus X)|\neq 1$.

\begin{defn}[$\{s,t\}$-refinements]\label{def:strefine}
    Let $\cP$ and $\cQ$ be distinct $\{s,t\}$-separating partitions of $V$. 
    \begin{enumerate}
        \item $\mcq$ is a \emph{$\{s,t\}$-refinement of $\mcp$ along $(X,Y)$} if $X\in \mcp\setminus\mcq$, $Y\in \mcq\setminus \mcp$, and the following hold:
        \begin{enumerate}[label=(\alph*)]\itemsep0em
            \item $X$ and $Y$ are intersecting but they are \emph{not} $\{s,t\}$-uncrossable, \label{strefinement:1a}
            \item for all $B\in \cQ\setminus Y$ there exists $A\in \cP$ such that $B\subseteq A$,\label{strefinement:1b} 
            \item for all $A\in \cP \setminus X$, either $A\cap Y = \emptyset$ or $A\subseteq Y$, and \label{strefinement:1c}
            \item $|\{A\in \cP \colon A\subseteq Y\}| \leq |\{B\in \cQ \colon B\subseteq X\}|$. \label{strefinement:1d}
        \end{enumerate}\label{strefinement:1}
        \item $\mcq$ is a \emph{$\{s,t\}$-refinement of $\mcp$} if there exists $X\in \mcp\setminus \mcq$ and $Y\in \mcq\setminus \mcp$ such that $\mcq$ is an $\{s,t\}$-refinement of $\mcp$ along $(X,Y)$. \label{strefinement:2}
        \item $\mcq$ is a \emph{$\{s,t\}$-refinement of $\mcp$ up to two sets} if there exists $X\in \mcp\setminus \mcq$ and $Y\in \mcq\setminus \mcp$ such that 
        \begin{enumerate}\itemsep0em
            \item $\mcq$ is an $\{s,t\}$-refinement of $\mcp$ along $(X,Y)$ and 
            \item $\{P\in \mcp\colon P\subseteq V\setminus (X\cup Y)\}\subseteq \mcq$.
        \end{enumerate}\label{strefinement:3}
    \end{enumerate}
\end{defn}
See Figures~\ref{fig:refinement} and~\ref{fig:st-refinement} for examples of refinements and $\{s,t\}$-refinements. We observe that an $\{s,t\}$-refinement allows exactly one pair of intersecting sets in $\mcp \cup \mcq$, and these sets cannot be $\{s,t\}$-uncrossable. Moreover, if $\mcq$ is a refinement of $\mcp$, then $|\mcq| > |\mcp|$, whereas if $\mcq$ is an $\{s,t\}$-refinement of $\mcp$, then $|\mcq| \ge |\mcp|$.

\begin{figure}[t!]
\centering
\begin{subfigure}[t]{0.48\linewidth}
    \centering
    \includegraphics[width=\textwidth]{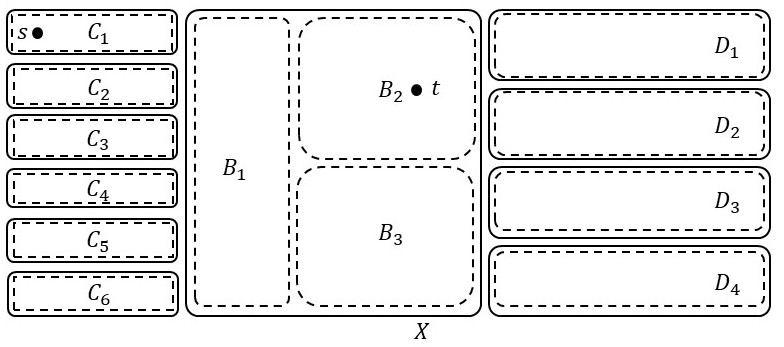}
\caption{An example where $\mcq$ is a refinement of $\mcp$, where $\mcp=\{X, C_1, C_2, C_3, C_4, C_5, C_6, D_1, D_2, D_3, D_4\}$ (solid lines) and $\mcq=\{B_1, B_2, B_3,  C_1, C_2, C_3, C_4, C_5, C_6, \allowbreak D_1, D_2, D_3, D_4\}$ (dashed lines).}
\label{fig:refinement}
\end{subfigure}
\hfill
\begin{subfigure}[t]{0.48\linewidth}
    \centering
    \includegraphics[width=\textwidth]{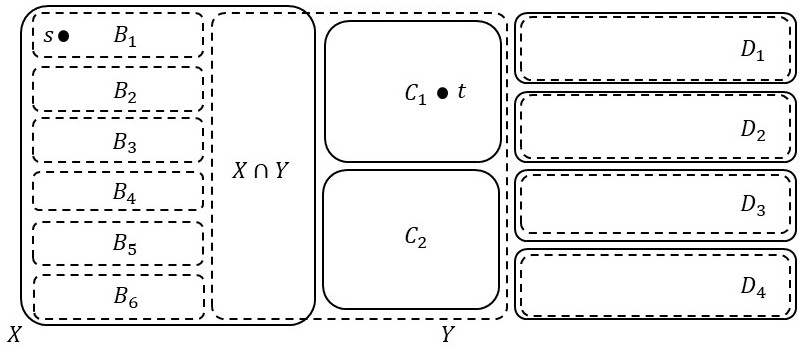}
    \caption{An example where $\mcq$ is an $\{s,t\}$-refinement of $\mcp$ up to $X$ and $Y$, where $\mcp=\{X, C_1, \allowbreak C_2, \allowbreak D_1, \allowbreak D_2,\allowbreak D_3, D_4\}$ (solid lines) and $\mcq=\{Y, B_1, B_2, \allowbreak B_3, \allowbreak B_4, \allowbreak B_5,  \allowbreak B_6, \allowbreak D_1, D_2, D_3, D_4\}$ (dashed lines).}
    \label{fig:st-refinement}    
\end{subfigure}
\caption{Examples of refinement and $\{s,t\}$-refinement.}
\label{fig:ref}
\end{figure}

Based on the notion of refinements and $\{s,t\}$-refinements, we define an $\{s,t\}$-separating principal partition sequence as follows. 

\begin{defn}[$\{s,t\}$-separating principal partition sequence]\label{def:st-pps}
    Let $f\colon 2^V\rightarrow \R$ be a set function. A sequence $\mcp_1, \ldots, \mcp_{\ell}$ of $\{s,t\}$-separating partitions of $V$ is a \emph{$\{s,t\}$-separating principal partition sequence} of $f$ if there exist $\lambda_1, \ldots, \lambda_{\ell-1}\in \R$  such that 
    \begin{enumerate}[label=(SP\arabic*)]
        \item $\lambda_1\le  \ldots \le \lambda_{\ell-1}$, \label{stpps:1}
        \item $g^{s,t}(\lambda)=\begin{cases}
        g_{\mcp_1}(\lambda) & \text{for $\lambda\in (-\infty, \lambda_1]$},\\
        g_{\mcp_{j+1}}(\lambda) & \text{for $\lambda\in [\lambda_j, \lambda_{j+1}]$ and $j\in [\ell-2]$},\\
        g_{\mcp_{\ell}}(\lambda) & \text{for $\lambda\in [\lambda_{\ell-1}, \infty)$}.
    \end{cases}$\label{stpps:2}
    \item $\mcp_1$ is $\{s,t\}$-separating partition with two parts such that $f(\mcp_1)\le f(A) + f(V\setminus A)$ for every $s\in A\subseteq V-t$,  and $\mcp_{\ell}=\{\{v\}\colon v\in V\}$, \label{stpps:3} 
    \item for every $j\in [\ell-1]$, either $\mcp_{j+1}$ is a refinement of $\mcp_j$ up to one set or $\mcp_{j+1}$ is an $\{s,t\}$-refinement of $\mcp_j$ up to two sets, and moreover, $|\mcp_{j}|<|\mcp_{j+1}|$. \label{stpps:4} 
    \end{enumerate}
    The values $\lambda_1,  \ldots, \lambda_{\ell-1}$ are called the \emph{critical values associated with the $\{s,t\}$-separating principal partition sequence}.
\end{defn}

We encourage the reader to compare and contrast Definitions~\ref{def:pps} and~\ref{def:st-pps}. 
A principal partition sequence can be viewed as a prefix of non-$\{s,t\}$-separating partition subsequence followed by a suffix of $\{s,t\}$-separating partition subsequence. However, the suffix by itself is not necessarily an $\{s,t\}$-separating principal partition sequence. 
We now state our main result regarding $\{s,t\}$-separating principal partition sequence of a submodular function.

\begin{thm}\label{thm:st-pps}
    Let $f\colon 2^V \to \bR$ be a submodular function and $s,t \in V$. Then, there exists an $\{s,t\}$-separating principal partition sequence of $f$. Moreover, given oracle access to $f$, such a sequence and its critical values can be computed in polynomial time.
\end{thm}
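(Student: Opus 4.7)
The plan is to prove existence constructively, since the construction will also guide the polynomial-time algorithm. By Lemma \ref{lem:piecewise-linear}, $g^{s,t}$ is piecewise linear; since it is a pointwise minimum of affine functions of $\lambda$, it is concave with finitely many breakpoints $\lambda_1 \le \ldots \le \lambda_{\ell-1}$. On each of the $\ell$ intervals between consecutive breakpoints, the minimizing $\{s,t\}$-separating partition has a fixed cardinality, and these cardinalities are strictly increasing in $j$ (since the slope of $g^{s,t}$ on the $j$-th interval equals $-|\mcp_j|$ and slopes strictly decrease at each breakpoint). I would take $\mcp_1$ to be the $\{s,t\}$-separating $2$-partition minimizing $f(A)+f(V\setminus A)$ (the minimizer of $g_{\mcp}^{s,t}(\lambda)$ as $\lambda\to-\infty$) and $\mcp_\ell=\{\{v\}:v\in V\}$ (the minimizer as $\lambda\to+\infty$), matching Definition \ref{def:st-pps}\eqref{stpps:3}. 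The remaining task is to choose, for each breakpoint $\lambda_j$, minimizers $\mcp_j$ and $\mcp_{j+1}$ of the correct cardinalities that satisfy the refinement structure in Definition \ref{def:st-pps}\eqref{stpps:4}.

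The core technical lemma I would establish is the following: whenever $\mcp$ and $\mcq$ are $\{s,t\}$-separating partitions that both minimize $g_{\cdot}^{s,t}(\lambda_j)$ with $|\mcq|>|\mcp|$, they can be replaced by equally good minimizers $\mcp'$ and $\mcq'$ for which $\mcq'$ is either a classical refinement of $\mcp'$ up to one set, or an $\{s,t\}$-refinement of $\mcp'$ up to two sets. The argument is a constrained uncrossing: whenever $X\in\mcp$ and $Y\in\mcq$ are intersecting, submodularity gives $f(X\cap Y)+f(X\cup Y)\le f(X)+f(Y)$, so one can substitute $X\cap Y$ and $X\cup Y$ (or, by a symmetric application, $X\setminus Y$ and $Y\setminus X$) in $\mcp$ and $\mcq$ without losing optimality. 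This substitution preserves the $\{s,t\}$-separating property precisely when $(X,Y)$ is $\{s,t\}$-uncrossable in the sense of the paper: an intersecting pair with $s\in X\setminus Y$ and $t\in Y\setminus X$ (or the symmetric case) is the only obstruction, since then $X\cup Y$ would contain both $s$ and $t$. I would iterate the substitutions, invoking the standard uncrossing lattice structure for submodular minimizers to show termination; the end state either has no intersecting pairs (classical refinement, Definition \ref{def:st-pps}\eqref{stpps:4} case 1) or has a single obstructing pair $(X,Y)$ of intersecting but not $\{s,t\}$-uncrossable sets, which by construction satisfies conditions \ref{strefinement:1a}--\ref{strefinement:1c} of Definition \ref{def:strefine}\eqref{strefinement:1} together with the ``up to two sets'' property. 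Condition \ref{strefinement:1d} would follow by combining these with the strict inequality $|\mcq|>|\mcp|$ after accounting for the parts of $\mcp$ and $\mcq$ contained in $V\setminus(X\cup Y)$ (which coincide by \ref{strefinement:1b}--\ref{strefinement:1c}).

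For the algorithm, I would first build a polynomial-time subroutine that evaluates $g^{s,t}(\lambda)$ and returns a minimizer for any rational $\lambda$, and then trace all breakpoints by parametric search. For the subroutine, let $\hat{f}_\lambda(S):=\min\{f(\mcp)-\lambda|\mcp| \colon \mcp \text{ partition of } S\}$ denote the Dilworth truncation of $f-\lambda$, which is classically submodular and polynomial-time computable from a value oracle of $f$. Any $\{s,t\}$-separating partition of $V$ decomposes into a partition of the $s$-side $A$ and a partition of $V\setminus A$, which can be optimized independently, so
\[
 g^{s,t}(\lambda)=\min\bigl\{\hat{f}_\lambda(A)+\hat{f}_\lambda(V\setminus A) : s\in A,\ t\notin A\bigr\}.
\]
A short calculation shows that $A\mapsto \hat{f}_\lambda(A)+\hat{f}_\lambda(V\setminus A)$ is submodular, so this minimization reduces to $(s,t)$-constrained submodular minimization, which is polynomial. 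Given the subroutine, all breakpoints can be recovered by a standard sweep: starting from $\mcp_1$, the next critical value $\lambda_j$ is the smallest $\lambda$ at which a strictly finer $\{s,t\}$-separating minimizer achieves $g_{\mcp_{j+1}}(\lambda)=g_{\mcp_j}(\lambda)$, and this is the intersection of two affine functions that can be located by binary search combined with $O(|V|)$ calls to the subroutine.

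The main obstacle I anticipate is the refinement lemma, specifically verifying the cardinality condition \ref{strefinement:1d} and ensuring that the iterative $\{s,t\}$-uncrossing procedure terminates in a configuration satisfying one of the two permitted refinement types rather than some intermediate state where an intersecting non-uncrossable pair remains but other bookkeeping conditions are violated. Careful invariant maintenance during the uncrossings, combined with choosing $\mcp$ as a ``coarsest'' minimizer and $\mcq$ as a ``finest'' minimizer at $\lambda_j$ (which is well-defined by the lattice of submodular minimizers after restricting to the $\{s,t\}$-separating family), should resolve this. A secondary obstacle is to verify that the resulting $\mcp_1,\ldots,\mcp_\ell$ really are consistent across breakpoints, i.e., the $\mcp_{j+1}$ chosen at $\lambda_j$ agrees with the $\mcp_{j+1}$ that must serve as the ``left'' minimizer at $\lambda_{j+1}$; this should be handled by the standard observation that $\mcp_{j+1}$ remains a minimizer on the entire $(\lambda_j,\lambda_{j+1})$ interval.
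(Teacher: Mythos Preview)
Your overall architecture matches the paper's—Dilworth truncation plus constrained submodular minimization to evaluate $g^{s,t}(\lambda)$ (this is Theorem~\ref{thm:min-g}), and an uncrossing argument for the refinement structure at each breakpoint—but the refinement lemma has genuine gaps. The uncrossing you describe (replace $X\in\mcp$ and $Y\in\mcq$ by $X\cap Y$ and $X\cup Y$ ``in $\mcp$ and $\mcq$'') does not preserve the partition structure: if $X\cap Y$ replaces $X$ in $\mcp$, then $X\setminus Y$ is left uncovered. The paper instead uncrosses inside the multiset $\mcp\cup\mcq$ and proves separately (Claim~\ref{cl:decomposes}) that the resulting family decomposes back into two $\{s,t\}$-separating partitions; strict submodularity then forces the \emph{original} $\mcp\cup\mcq$ to already be free of $\{s,t\}$-uncrossable pairs (Lemma~\ref{lem:key_lem}), so no iterative modification of $\mcp,\mcq$ is ever performed.

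More seriously, you assert without argument that the terminal state ``has a single obstructing pair $(X,Y)$,'' but eliminating $\{s,t\}$-uncrossable pairs does not by itself bound the number of remaining (non-$\{s,t\}$-uncrossable) intersecting pairs. The paper supplies the missing combinatorial fact as Lemma~\ref{lem:no_uncross}: in any $2$-regular $\{s,t\}$-separating family, if there is more than one intersecting pair then some pair must be $\{s,t\}$-uncrossable. Only this lemma, combined with Lemma~\ref{lem:key_lem}, lets you conclude that $\mcp\cup\mcq$ has at most one intersecting pair. Your appeal to a ``lattice of submodular minimizers after restricting to the $\{s,t\}$-separating family'' is also unfounded—if such a lattice existed, classical refinement would always hold and the example in Figure~\ref{fig:example2} would be impossible. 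Finally, the refinement property actually established (Theorem~\ref{thm:refinement}) yields only a refinement or $\{s,t\}$-refinement, not the ``up to one/two sets'' form that Definition~\ref{def:st-pps}\eqref{stpps:4} demands; the paper bridges this with an explicit interpolation (Lemma~\ref{lem:partitions-at-breakpoint}) that constructs a chain of intermediate minimizers at each breakpoint, a step your proposal does not account for.
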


\section{\texorpdfstring{$\{s,t\}$}{(s,t)}-Separating Principal Partition Sequence}
\label{sec:stsep}

In this section, we prove Theorem \ref{thm:st-pps}, i.e., we prove the existence of an $\{s,t\}$-separating principal partition sequence of a submodular function and give a polynomial-time algorithm to construct it. 
We say that a submodular function $f: 2^V\rightarrow \bR$ is \emph{strictly submodular on intersecting pairs} if $f(A) + f(B) > f(A\cap B) + f(A\cup B)$ for every $A, B\subseteq V$ for which $A\cap B, A\setminus B, B\setminus A$ are non-empty.
We begin by showing Theorem \ref{thm:st-pps} for submodular functions that are strictly submodular on intersecting pairs, as stated below. 

\begin{thm}\label{lem:strict-st-pps}
    Let $f\colon 2^V \to \bR$ be a submodular function \emph{that is strictly submodular on intersecting pairs} and $s,t \in V$. Then, there exists an $\{s,t\}$-separating principal partition sequence of $f$. Moreover, given oracle access to $f$, such a sequence and its critical values can be computed in polynomial time.
\end{thm}

In Section~\ref{subsec:st-pps-refinement}, we establish a refinement property of $\{s,t\}$-separating partitions (Theorem~\ref{thm:refinement}) that plays a key role in proving the existence of the sequence. We build on this in Section~\ref{subsec:st-pps-existence} to show that an $\{s,t\}$-separating principal partition sequence exists  (Theorem~\ref{thm:stppsexists}). We present a polynomial-time algorithm to compute an $\{s,t\}$-separating principal partition sequence in Section~\ref{subsec:st-pps-algo} thereby completing the proof of Theorem \ref{lem:strict-st-pps}. 
We prove Theorem \ref{thm:st-pps} using Theorem \ref{lem:strict-st-pps} in Section \ref{sec:relaxing-strict-submodularity}.
\subsection{Refinement Property}
\label{subsec:st-pps-refinement}

We first show a refinement property that will be useful for proving the existence of an $\{s,t\}$-separating principal partition sequence. The main result is the following. 


\begin{thm}\label{thm:refinement}
    Let $f\colon 2^V\to \bR$ be a submodular function 
    that is strictly submodular on intersecting pairs, 
    and let $s,t\in V$. Suppose that $\mcp$ and $\mcq$ are distinct $\{s,t\}$-separating partitions such that at least one of the following holds: 
        \begin{enumerate}\itemsep0em
            \item There exists $\lambda\in \R$ such that 
            \begin{align*}
                \cP &\in \argmin \{|\pi|\colon \pi\text{ is an }\{s,t\}\text{-separating partition with }g_\pi(\lambda) = g^{s,t}(\lambda) \},\\
                \cQ &\in \argmax \{|\pi|\colon \pi\text{ is an }\{s,t\}\text{-separating partition with }g_\pi(\lambda) = g^{s,t}(\lambda) \}.
            \end{align*}
            \item There exists $\alpha < \lambda $ such that $g^{s,t}(\alpha)=g_{\mcp}(\alpha)$ and $g^{s,t}(\lambda) = g_{\mcp}(\lambda) = g_{\mcq}(\lambda)$. 
        \end{enumerate} 
    Then, $\cQ$ is either a refinement or an $\{s,t\}$-refinement of $\mcp$.
\end{thm}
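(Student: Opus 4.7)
The strategy is a submodular uncrossing argument, using the hypothesis that both $\mcp$ and $\mcq$ are minimizers of $g_\pi(\lambda)$ over $\{s,t\}$-separating partitions $\pi$ at a common point $\lambda$. I will suppose that $\mcq$ is not a refinement of $\mcp$ and deduce that $\mcq$ must be an $\{s,t\}$-refinement of $\mcp$ along some pair $(X,Y)$, by verifying conditions (a)--(d) of Definition~\ref{def:strefine}.

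The first step will be to establish an uncrossing lemma tailored to the $\{s,t\}$-separating setting: given intersecting parts $A\in\mcp$ and $B\in\mcq$ that are $\{s,t\}$-uncrossable, one can produce new $\{s,t\}$-separating partitions $\mcp^*$ and $\mcq^*$ (by splitting $A$ into $A\cap B, A\setminus B$ in one partition and replacing $B$ by $A\cup B$ in the other, together with the corresponding rearrangement of the parts of $\mcp$ that are intersected by $B\setminus A$) satisfying $f(\mcp^*)+f(\mcq^*) < f(\mcp)+f(\mcq)$ by strict submodularity on the intersecting pair $(A,B)$, while $|\mcp^*|+|\mcq^*| = |\mcp|+|\mcq|$. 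The $\{s,t\}$-uncrossability of $(A,B)$ is exactly what is needed to ensure that both $\mcp^*$ and $\mcq^*$ remain $\{s,t\}$-separating. This lemma contradicts the hypothesis: in case~(1), it violates the combination of $g_{\mcp}(\lambda)=g_{\mcq}(\lambda)=g^{s,t}(\lambda)$ with the min-cardinality of $|\mcp|$ and max-cardinality of $|\mcq|$; in case~(2), one of $\mcp^*,\mcq^*$ will have strictly smaller $g_\pi(\lambda')$ than $g^{s,t}(\lambda')$ at some $\lambda'\in[\alpha,\lambda]$, contradicting $g_{\mcp}(\alpha)=g^{s,t}(\alpha)$ and $g_{\mcp}(\lambda)=g^{s,t}(\lambda)$. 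Consequently, no intersecting pair $A\in\mcp$, $B\in\mcq$ can be $\{s,t\}$-uncrossable.

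Now suppose $\mcq$ is not a refinement of $\mcp$, so there exists an intersecting pair $X\in\mcp$, $Y\in\mcq$; by the previous paragraph it is non-$\{s,t\}$-uncrossable. Since $|X\cap\{s,t\}|,|Y\cap\{s,t\}|\le 1$, non-$\{s,t\}$-uncrossability forces $|(X\setminus Y)\cap\{s,t\}|=|(Y\setminus X)\cap\{s,t\}|=1$. Without loss of generality $s\in X\setminus Y$ and $t\in Y\setminus X$, so $X$ is the unique part of $\mcp$ containing $s$ and $Y$ the unique part of $\mcq$ containing $t$, establishing (a). For (b) and (c), suppose there is another intersecting pair $A\in\mcp$, $B\in\mcq$ with $\{A,B\}\neq\{X,Y\}$. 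By the uncrossing step, $A,B$ must also be non-$\{s,t\}$-uncrossable, which forces $A$ or $B$ to contain $s$ or $t$; the only remaining possibility is that $A$ is the $t$-part of $\mcp$ and $B$ is the $s$-part of $\mcq$. A simultaneous uncrossing of $(X,Y)$ and $(A,B)$ (treating the four distinguished parts together and using strict submodularity twice) will then produce new $\{s,t\}$-separating partitions contradicting the hypothesis, ruling out this second pair. This gives uniqueness of the intersecting pair, hence (b) and (c). Finally, (d) will be obtained by an exchange argument: replace the $\mcp$-parts contained in $Y$ with the $\mcq$-parts contained in $X$ to form a new $\{s,t\}$-separating partition $\mcp'$; in case~(1) the minimality of $|\mcp|$ forces $|\mcp'|\ge|\mcp|$, which rearranges to $|\{A\in\mcp\colon A\subseteq Y\}|\le|\{B\in\mcq\colon B\subseteq X\}|$; in case~(2) the analogous inequality follows from comparing slopes of $g_{\mcp'}$ and $g_{\mcp}$ at $\alpha$ and $\lambda$.

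The main obstacle I anticipate is executing the single uncrossing cleanly when multiple parts of $\mcp$ intersect $B\setminus A$ (or symmetrically for $\mcq$), since the local swap at $(A,B)$ ripples through to the other parts of both partitions and must be tracked carefully to keep $\mcp^*$ and $\mcq^*$ both $\{s,t\}$-separating. A secondary difficulty is case~(2), where $\mcq$ need not attain the extreme cardinality among $g^{s,t}(\lambda)$-optimizers: the cardinality arguments used for case~(1) must be replaced with slope comparisons that use the tangency of $g_{\mcp}$ to $g^{s,t}$ on the entire interval $[\alpha,\lambda]$ to simultaneously encode $f(\mcp)$ and $|\mcp|$. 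The uniqueness argument ruling out a second non-$\{s,t\}$-uncrossable pair is the most technically delicate part and will likely require a dedicated simultaneous-uncrossing computation.
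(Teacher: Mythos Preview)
Your overall architecture---no $\{s,t\}$-uncrossable intersecting pair, hence at most one intersecting pair, hence verify (a)--(d)---matches the paper's. But two of your key steps do not go through as written.

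\textbf{The single uncrossing step.} Replacing $A\in\mcp$ and $B\in\mcq$ by $A\cap B$ and $A\cup B$ does not produce two partitions: $A\cup B$ will typically intersect the parts of $\mcp$ covering $B\setminus A$ and the parts of $\mcq$ covering $A\setminus B$, so your ``corresponding rearrangement'' would itself create new intersecting pairs, and you are back where you started. The paper avoids this by working in the multiset $\mcs=\mcp\cup\mcq$ rather than in the two partitions separately: it uncrosses \emph{all} $\{s,t\}$-uncrossable pairs iteratively (strict submodularity guarantees termination and a strict drop in $f(\mcs)$), and only at the end shows that the resulting family---which now has at most one intersecting pair and still covers each element twice---decomposes into two $\{s,t\}$-separating partitions. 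Also, for this step your case (2) treatment is overcomplicated: you only need $g_{\mcp}(\lambda)=g_{\mcq}(\lambda)=g^{s,t}(\lambda)$, which holds in both cases, so the same argument at $\lambda$ works; no slope comparison on $[\alpha,\lambda]$ is needed here.

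\textbf{Ruling out a second intersecting pair.} Your ``simultaneous uncrossing of $(X,Y)$ and $(A,B)$'' cannot work as stated: both pairs are non-$\{s,t\}$-uncrossable, so uncrossing either one produces a set containing both $s$ and $t$. The correct argument is purely combinatorial and you almost have it. You correctly deduce that the second pair must be $A=$ the $t$-part of $\mcp$, $B=$ the $s$-part of $\mcq$. Now look at $X$ and $B$: both contain $s$, and $X\cap Y\subseteq X\setminus B$ while $A\cap B\subseteq B\setminus X$, so $X$ and $B$ are intersecting; but $s\in X\cap B$ and $t\notin X\cup B$, so $(X,B)$ is $\{s,t\}$-uncrossable---contradicting step one. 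This is exactly the content of the paper's Lemma~\ref{lem:no_uncross}.

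One smaller omission: ``$\mcq$ not a refinement of $\mcp$'' does not by itself give an intersecting pair---$\mcp\cup\mcq$ could be laminar with some $B\in\mcq$ strictly containing several parts of $\mcp$. The paper handles this (and your condition (d)) uniformly via a rearrangement lemma: under either hypothesis, one cannot split $\mcp\cup\mcq$ into $\{s,t\}$-separating partitions $\mcp',\mcq'$ with $|\mcq'|>|\mcq|$. Your exchange argument for (d) is essentially this lemma, and the same device disposes of the laminar case.
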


    
    We begin with a few technical lemmas used in the proof of the theorem. The next lemma shows that every $\{s,t\}$-separating set system that covers each element exactly twice and contains more than two intersecting sets admits an $\{s,t\}$-uncrossable pair of intersecting sets. 
    
    \begin{lem}\label{lem:no_uncross}
        Let $\cS\subseteq 2^V$ be a set system such that
        \begin{enumerate}\itemsep0em
            \item\label{it:ex2} each $v\in V$ is contained in exactly two sets in $\cS$, and 
            \item $|\{s,t\}\cap A| \leq 1$ for all $A\in \cS$.
        \end{enumerate}
        Suppose that $\cS$ has more than two intersecting pairs of sets. Then, $\cS$ has a pair of intersecting sets that are $\{s,t\}$-uncrossable.
    \end{lem}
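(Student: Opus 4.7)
The plan is to prove the contrapositive: assuming no intersecting pair in $\cS$ is $\{s,t\}$-uncrossable, I will show that $\cS$ contains at most one intersecting pair, which is stronger than needed. I would begin by partitioning $\cS$ into $\cS_s \coloneqq \{A \in \cS : s \in A\}$, $\cS_t \coloneqq \{A \in \cS : t \in A\}$, and $\cS_0 \coloneqq \cS \setminus (\cS_s \cup \cS_t)$. Hypothesis~\ref{it:ex2} forces $|\cS_s| = |\cS_t| = 2$, and the other hypothesis ensures $\cS_s \cap \cS_t = \emptyset$.

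The first step is to classify intersecting pairs by the groups their sets belong to. For any intersecting pair whose two sets both lie in $\cS_s$, the element $s$ is in the intersection and $t$ is in neither set, so $|\{s,t\} \cap (X \setminus Y)| = |\{s,t\} \cap (Y \setminus X)| = 0$ and the pair is automatically $\{s,t\}$-uncrossable. Analogous checks handle pairs inside $\cS_t$ and inside $\cS_0$. For an intersecting pair with one set in $\cS_0$ and the other in $\cS_s \cup \cS_t$, exactly one of $\{s,t\}$ lies in the latter set (and hence on its private side of the symmetric difference), while the $\cS_0$ side contains neither $s$ nor $t$, again giving uncrossability. Under the no-uncrossable assumption, this leaves only intersecting pairs with exactly one set from $\cS_s$ and one from $\cS_t$.

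Next, I would extract a nested structure inside $\cS_s$ and $\cS_t$. Writing $\cS_s = \{A_1, A_2\}$, note that $s \in A_1 \cap A_2$, so the intersection is nonempty; but $\{A_1, A_2\}$ cannot be intersecting (otherwise uncrossable by the previous step), so one set contains the other, and being distinct I may take $A_1 \subsetneq A_2$. Analogously $\cS_t = \{B_1, B_2\}$ with $B_1 \subsetneq B_2$.

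The final step invokes exact double-coverage once more. Any $v \in A_1$ also lies in $A_2$, so $v$ belongs to exactly the two sets $A_1$ and $A_2$; in particular $v \notin B_1 \cup B_2$. Hence $A_1 \cap B_1 = A_1 \cap B_2 = \emptyset$, and symmetrically $B_1 \cap A_1 = B_1 \cap A_2 = \emptyset$. The only remaining candidate for nonempty intersection is $(A_2, B_2)$, so $\cS$ contains at most one intersecting pair, contradicting the hypothesis of more than two. The part I expect to require the most care is the classification in the first step: it splits into several sub-cases based on where $\{s,t\}$ lies relative to each side of the symmetric difference, but each sub-case is a short and direct check.
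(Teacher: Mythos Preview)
Your proof is correct and in fact establishes the sharper conclusion that $\cS$ has at most \emph{one} intersecting pair under the no-uncrossable assumption. Your route differs from the paper's: the paper picks two distinct intersecting pairs $\{X_1,X_2\}$ and $\{X_3,X_4\}$, assumes neither is $\{s,t\}$-uncrossable, and splits into the cases $\{X_1,X_2\}\cap\{X_3,X_4\}=\emptyset$ and $|\{X_1,X_2\}\cap\{X_3,X_4\}|=1$, in each case exhibiting an element covered by three sets of $\cS$. Your argument is more structural: you stratify $\cS$ into the two $s$-containing sets, the two $t$-containing sets, and the rest; observe that any intersecting pair not of mixed $s$/$t$ type is automatically uncrossable; deduce that each of the first two families is nested; and then use double-coverage on the smaller set of each nested pair to kill three of the four possible cross-pairs. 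This global picture makes the ``at most one'' bound transparent and explains exactly where the surviving pair must live, whereas the paper's direct contradiction is shorter to write but less informative about the structure. One small point: your phrase ``being distinct'' silently assumes $A_1\neq A_2$; if $\cS$ is read as a multiset (as it effectively is in the paper's application, where $\cS$ arises from $\cP\cup\cQ$), you should note that the degenerate case $A_1=A_2$ is even easier, since then every element of $A_1$ is already covered twice by that single repeated set and $A_1$ is disjoint from all other members of $\cS$.
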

    \begin{proof}
        Suppose none of the intersecting pairs in $\mathcal{S}$ are $\{s,t\}$-uncrossable, and let $X_1,X_2\in \mathcal{S}$ be an intersecting pair. Then $|\{s,t\}\cap (X_1\setminus X_2)|=1=|\{s,t\}\cap (X_2\setminus X_1)|$. Without loss of generality let $s\in X_1\setminus X_2$ and $t\in X_2\setminus X_1$. Since $\cS$ has more than two intersecting pairs, consider another such pair $X_3, X_4 \in \cS$ with $\{X_1, X_2\}\neq \{X_3,X_4\}$. Similarly as before, we have $|\{s,t\}\cap (X_3 \setminus X_4)| = 1 = |\{s,t\}\cap (X_4 \setminus X_3)|$. We consider two cases.
        \medskip
        
        \noindent \textbf{Case 1.} $\{X_1,X_2\}\cap\{X_3,X_4\}=\emptyset$.
        
        Suppose that $s\in X_3\setminus X_4$ and $t\in X_4\setminus X_3$; the other case is symmetric, and an analogous argument applies. Then $s\in X_1 \cap X_3$ and $t\notin X_1\cup X_3$. Since $X_1$ and $X_3$ are not $\{s,t\}$-uncrossable, it follows that either $X_1\subseteq X_3$ or $X_3\subseteq X_1$, in which case elements of $X_1\cap X_2$ or $X_3\cap X_4$ belong to at least three sets -- $X_1, X_2, X_3$ or $X_1, X_3, X_4$, respectively -- yielding a contradiction.
        \medskip
        
        \noindent \textbf{Case 2:} $|\{X_1, X_2\}\cap \{X_3, X_4\}|=1$. 
        
        Without loss of generality, assume $X_1=X_4$. Since $s\in X_1=X_4$ and $|\{s,t\}\cap (X_3\setminus X_1)|=1$, it follows that $t\in (X_2\setminus X_1)\cap (X_3\setminus X_1)$. If $X_2$ and $X_3$ are intersecting, they would be $\{s,t\}$-uncrossable, since $(X_2\setminus X_3)\cap \{s,t\}=\emptyset$. Therefore, either $X_2\subseteq X_3$ or $X_3\subseteq X_2$, and consequently $X_1\cap X_2\cap X_3\neq \emptyset$, contradicting the assumption that each element is contained in exactly two sets in $\mathcal{S}$. 
    \end{proof}
    
    The following lemma shows that a pair of $g^{s,t}$-minimizing partitions cannot contain an $\{s,t\}$-uncrossable pair of intersecting sets. The proof crucially relies on the strict submodularity of the set function $f$ on intersecting pairs.
    
    \begin{lem}\label{lem:key_lem}
        Let $\lambda\in\R$, and let $\cP$ and $\cQ$ be $\{s,t\}$-separating partitions such that $g^{s,t}(\lambda)=g_{\cP}(\lambda)=g_{\cQ}(\lambda)$. Then $\cP\cup \cQ$ contains no $\{s,t\}$-uncrossable pair of intersecting sets.
    \end{lem}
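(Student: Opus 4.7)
The plan is a proof by contradiction via an uncrossing argument. Suppose $\cP\cup\cQ$ contains an intersecting $\{s,t\}$-uncrossable pair. Because the parts within any single partition are pairwise disjoint, such a pair must consist of one set $X\in \cP$ and one set $Y\in \cQ$. The $\{s,t\}$-separating hypothesis gives $|X\cap\{s,t\}|\le 1$ and $|Y\cap\{s,t\}|\le 1$, so the $\{s,t\}$-uncrossability condition forces (after possibly swapping the roles of $X,Y$ by symmetry) $(X\setminus Y)\cap\{s,t\}=\emptyset$, equivalently $X\cap\{s,t\}\subseteq Y$; hence $(X\cup Y)\cap\{s,t\}=Y\cap\{s,t\}$ has at most one element.

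Next I would uncross $X$ and $Y$ to produce two new $\{s,t\}$-separating partitions $\cP^\ast$ and $\cQ^\ast$. The natural choice is to take $\cP^\ast := (\cP\setminus\{X\})\cup\{X\cap Y,\,X\setminus Y\}$ (splitting $X$), and to form $\cQ^\ast$ from $\cQ$ by replacing $Y$ with $X\cup Y$, deleting any part of $\cQ\setminus\{Y\}$ contained entirely in $X$, and trimming each straddling part $S\in \cQ\setminus\{Y\}$ (with $\emptyset\ne S\cap X\subsetneq S$) to $S\setminus X$. The WLOG condition makes both $\cP^\ast$ and $\cQ^\ast$ remain $\{s,t\}$-separating, since $X\setminus Y$ meets $\{s,t\}$ trivially, $X\cup Y$ meets it in $Y\cap\{s,t\}$, and trimmed pieces $S\setminus X$ inherit the constraint from $S$.

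The key inequality is the strict submodular bound $f(X\cap Y)+f(X\cup Y)<f(X)+f(Y)$ on the intersecting pair. Expanding, the combined change $\bigl(g_{\cP^\ast}(\lambda)+g_{\cQ^\ast}(\lambda)\bigr)-\bigl(g_\cP(\lambda)+g_\cQ(\lambda)\bigr)$ equals this strict-submodular gap plus remainder terms coming from the newly introduced part $X\setminus Y$ in $\cP^\ast$, from the trimmed pieces $S\setminus X$ in $\cQ^\ast$, and from a $\lambda$-correction reflecting the change in the total number of parts. Standard submodular inequalities bound the remainder terms, and the optimality of $\cP$ and $\cQ$ at $\lambda$ (any elementary refinement or coarsening of a part that preserves $\{s,t\}$-separation yields an $\{s,t\}$-separating partition with $g$-value at least $g^{s,t}(\lambda)$) pins down the $\lambda$-corrections tightly enough for the strict-submodular gap to dominate. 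Hence $g_{\cP^\ast}(\lambda)+g_{\cQ^\ast}(\lambda)<g_\cP(\lambda)+g_\cQ(\lambda)=2g^{s,t}(\lambda)$, forcing at least one of $\cP^\ast,\cQ^\ast$ to strictly beat $g^{s,t}(\lambda)$ and contradicting its definition as the minimum.

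The main obstacle is the remainder bookkeeping, because the effect of the uncrossing on the combined $g$-value depends on how $X\setminus Y$ is distributed across the parts of $\cQ\setminus\{Y\}$ — some may be entirely inside $X\setminus Y$, some may straddle its boundary, and some may carry the other element of $\{s,t\}$. Orchestrating the case analysis so that the strict-submodular gap always dominates the aggregated remainder (uniformly in $\lambda$) is the crux of the argument, and it is precisely where the standing assumption of strict submodularity on intersecting pairs is indispensable: without strict inequality, the gap could be absorbed by the remainder, and no contradiction would follow.
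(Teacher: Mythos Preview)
Your high-level plan---construct two new $\{s,t\}$-separating partitions whose combined $g$-value is strictly below $2g^{s,t}(\lambda)$---is exactly right, but the specific construction you propose does not achieve it, and the gap is not just bookkeeping. Write out the change explicitly: with $S_1,\dots,S_k$ the parts of $\cQ$ inside $X\setminus Y$ and $T_1,\dots,T_m$ the straddling parts, your $\cP^\ast,\cQ^\ast$ give
\[
\Delta \;=\; \bigl[f(X\cap Y)+f(X\cup Y)-f(X)-f(Y)\bigr] \;+\; f(X\setminus Y) \;-\;\sum_i f(S_i)\;+\;\sum_j\bigl(f(T_j\setminus X)-f(T_j)\bigr)\;-\;\lambda(1-k).
\]
Only the first bracket is controlled by the strict submodular inequality on $(X,Y)$. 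The remaining terms---in particular the \emph{added} cost $f(X\setminus Y)-\lambda$ from the new part you introduce in $\cP^\ast$, and the trimming corrections $f(T_j\setminus X)-f(T_j)$---are not bounded by that single inequality, and invoking ``optimality of $\cP$ and $\cQ$'' here goes the wrong direction: optimality tells you $g_{\cP^\ast}(\lambda)\ge g_\cP(\lambda)$ and $g_{\cQ^\ast}(\lambda)\ge g_\cQ(\lambda)$, i.e.\ $\Delta\ge 0$, which is precisely the opposite of what you need. So the ``$\lambda$-corrections pinned down by optimality'' cannot be used to push $\Delta$ below zero; you would need a purely submodular argument, and the extra part $X\setminus Y$ you inserted destroys any chance of one.

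The paper avoids this problem by never leaving the $2$-regular multiset $\cP\cup\cQ$. It repeatedly replaces an $\{s,t\}$-uncrossable intersecting pair $A,B$ by $A\cap B,\,A\cup B$; each step preserves the total number of sets and the property that every element lies in exactly two sets, while strictly decreasing the $f$-sum. Once no $\{s,t\}$-uncrossable pairs remain, Lemma~\ref{lem:no_uncross} forces at most one intersecting pair in the resulting family $\cS$, and the laminar remainder then splits cleanly into two $\{s,t\}$-separating partitions $\cP',\cQ'$ with $\cP'\cup\cQ'=\cS$. Because the multiset was preserved, one gets $f(\cP')+f(\cQ')=f(\cS)<f(\cP)+f(\cQ)$ and $|\cP'|+|\cQ'|=|\cP|+|\cQ|$ with \emph{no} remainder terms, so $g_{\cP'}(\lambda)+g_{\cQ'}(\lambda)<2g^{s,t}(\lambda)$ follows immediately. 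The missing idea in your attempt is exactly this: work on the combined multiset and keep it $2$-regular, rather than trying to patch each partition separately.
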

    \begin{proof}
        For a contradiction, suppose $\cP\cup\cQ$ contains an $\{s,t\}$-uncrossable pair of intersecting sets. As long as such a pair $A,B\in\cP\cup\cQ$ exists, replace it with $A\cap B$ and $A\cup B$. By strict submodularity, each uncrossing strictly decreases the sum of the function values, so this procedure terminates in finitely many steps. Let $\cS$ denote the resulting family. 
        
        By the indirect assumption, we have $f(\cS)<f(\cP)+f(\cQ)$, while $|\cS|=|\cP|+|\cQ|$ and each element remains in exactly two sets. We claim that $\cS$ is the disjoint union of two $\{s,t\}$-separating partitions. 

        \begin{claim}\label{cl:st-sep}
            None of the sets in $\cS$ contains both $s$ and $t$.
        \end{claim}
        \begin{proof} 
        We show that each uncrossing step preserves the property that no set contains both $s$ and $t$. Initially, both $\cP$ and $\cQ$ are $\{s,t\}$-separating, so the claim holds. For the sake of contradiction, suppose the property holds before uncrossing a pair of intersecting sets $A,B$ but fails after the uncrossing. 
        
        If $\{s,t\}\subseteq A\cap B$, then $\{s,t\}\subseteq A$, contradicting the property before uncrossing. Therefore, we must have $\{s,t\}\subseteq A\cup B$. Since neither $A$ nor $B$ contains both $s$ and $t$, it follows that $|\{s,t\}\cap A|=1=|\{s,t\}\cap B|$, which would imply that $A$ and $B$ were not $\{s,t\}$-uncrossable. But then the procedure would not have uncrossed $A$ and $B$, a contradiction.
        \end{proof}

        Now we show that $\cS$ decomposes into two partitions. 
        
        \begin{claim}\label{cl:decomposes}
            $\cS$ is the disjoint union of two partitions.
        \end{claim}
        \begin{proof} 
        By Lemma~\ref{lem:no_uncross}, $\cS$ contains at most one pair of intersecting sets $X$ and $Y$ that are not $\{s,t\}$-uncrossable. If there is such a pair, then let $U \coloneqq  X\cup Y,$ otherwise let $U\coloneqq \emptyset$ for the rest of the proof of the claim. 
        Let 
        \begin{align*}
            \cP_1 &\coloneq \{X\} \cup \{A\in \cS \colon A\subsetneq Y\}\\
            \cQ_1 &\coloneq \{Y\} \cup \{B\in \cS \colon B\subsetneq X\}.
        \end{align*}
        Observe that $\cP_1$ and $\cQ_1$ are partitions of $U$, since $X$ and $Y$ form the only intersecting pair in $\cS$. The remainder, $\mathcal{L}\coloneqq\cS\setminus (\cP_1\cup\cQ_1)$, is a laminar system covering each element exactly twice, as it contains no intersecting sets. Let $\cP_2$ and $\cQ_2$ be the inclusion-wise maximal and minimal sets in $\mathcal{L}$, respectively, with any duplicates distributed between them. Then $\cP_2$ and $\cQ_2$ are partitions of $V\setminus U$. Setting $\cP'\coloneqq\cP_1\cup\cP_2$ and $\cQ'\coloneqq\cQ_1\cup\cQ_2$ thus yields two $\{s,t\}$-separating partitions of $V$ with $\cS=\cP'\cup\cQ'$.
        \end{proof}

        Let $\cP'$ and $\cQ'$ be the partitions provided by Claim~\ref{cl:decomposes}. By Claim~\ref{cl:st-sep}, $\cP'$ and $\cQ'$ are $\{s,t\}$-separating partitions. Recall that $f(\mcp')+f(\mcq')=f(\cS)<f(\mcp)+f(\mcq)$ and $|\mcp'|+|\mcq'|=|\mcs|=|\mcp|+|\mcq|$. Moreover, $g^{s,t}(\lambda) = g_\cP(\lambda) = g_\cQ(\lambda) $ by assumption and $g^{s,t}(\lambda) \leq g_{\cP'}(\lambda)$ and $g^{s,t}(\lambda) \leq g_{\cQ'}(\lambda)$ by definition. Combining these observations, we obtain
        \begin{align*}
            2g^{s,t}(\lambda) &= g_\cP(\lambda) + g_{\cQ}(\lambda)\\
            &= f(\cP) + f(\cQ) - \lambda (|\cP| + |\cQ|)\\
            &> f(\cP') + f(\cQ') - \lambda (|\cP'| + |\cQ'|)\\
            &=g_{\mcp'}(\lambda) + g_{\mcq'}(\lambda)\\
            &\geq 2g^{s,t}(\lambda),
        \end{align*}
        a contradiction. This concludes the proof of the lemma.
    \end{proof}

    The next lemma shows that under the hypothesis of Theorem~\ref{thm:refinement}, the parts of the partitions $\mcp$ and $\mcq$ cannot be rearranged to create two other $\{s,t\}$-separating partitions $\mcp'$ and $\mcq'$ with $|\mcq'|>|\mcq|$. The proof of the lemma in fact does not rely on submodularity of the set function $f$.
    
    \begin{lem}\label{lem:helpful}
        Suppose that $\mcp$ and $\mcq$ are distinct $\{s,t\}$-separating partitions such that one of the following holds: 
        \begin{enumerate}\itemsep0em
            \item There exists $\lambda\in \R$ such that 
            \begin{align*}
                \cP &\in \argmin \{|\pi|\colon \pi\text{ is an }\{s,t\}\text{-separating partition with }g_\pi(\lambda) = g^{s,t}(\lambda) \},\\
                \cQ &\in \argmax \{|\pi|\colon \pi\text{ is an }\{s,t\}\text{-separating partition with }g_\pi(\lambda) = g^{s,t}(\lambda) \}.
            \end{align*}\label{helpful:1}
            \item There exists $\alpha < \lambda $ such that $g^{s,t}(\alpha)=g_{\mcp}(\alpha)$ and $g^{s,t}(\lambda) = g_{\mcp}(\lambda) = g_{\mcq}(\lambda)$.\label{helpful:2} 
        \end{enumerate}
        Then no $\{s,t\}$-separating partitions $\cP'$ and $\cQ'$ exist with $\cP' \cup \cQ' = \cP \cup \cQ$ and $|\cQ'| > |\cQ|$.
    \end{lem}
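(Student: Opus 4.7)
The first key observation is a conservation identity: if $\mcp'$ and $\mcq'$ are $\{s,t\}$-separating partitions with $\mcp' \cup \mcq' = \mcp \cup \mcq$ (as multisets), then summing $f$-values gives $f(\mcp') + f(\mcq') = f(\mcp) + f(\mcq)$, and counting sets gives $|\mcp'| + |\mcq'| = |\mcp| + |\mcq|$. Consequently, for any $\mu \in \bR$,
\[
g_{\mcp'}(\mu) + g_{\mcq'}(\mu) = g_{\mcp}(\mu) + g_{\mcq}(\mu).
\]
In both cases of the hypothesis we have $g_{\mcp}(\lambda) = g_{\mcq}(\lambda) = g^{s,t}(\lambda)$, so the right-hand side evaluated at $\mu = \lambda$ equals $2 g^{s,t}(\lambda)$. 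Since both $g_{\mcp'}(\lambda) \ge g^{s,t}(\lambda)$ and $g_{\mcq'}(\lambda) \ge g^{s,t}(\lambda)$ by definition of $g^{s,t}$, we must have
\[
g_{\mcp'}(\lambda) = g_{\mcq'}(\lambda) = g^{s,t}(\lambda).
\]
Thus both $\mcp'$ and $\mcq'$ lie in the set of $\{s,t\}$-separating $g^{s,t}$-minimizers at $\lambda$. Finally, $|\mcq'| > |\mcq|$ combined with $|\mcp'| + |\mcq'| = |\mcp| + |\mcq|$ yields $|\mcp'| < |\mcp|$.

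Now I handle the two cases. In Case~\ref{helpful:1}, the strict inequality $|\mcp'| < |\mcp|$ directly contradicts the definition of $\mcp$ as a \emph{minimum-size} $\{s,t\}$-separating partition attaining $g^{s,t}(\lambda)$ (equivalently, $|\mcq'| > |\mcq|$ contradicts the maximality of $|\mcq|$). This disposes of the first case without any further work.

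In Case~\ref{helpful:2}, I exploit the linearity of each $g_\pi$ as a function of $\mu$ together with the equality $g^{s,t}(\alpha) = g_{\mcp}(\alpha)$. Writing $g_\pi(\alpha) = g_\pi(\lambda) + (\lambda - \alpha)\,|\pi|$, I compute
\begin{align*}
g^{s,t}(\alpha) &= g_{\mcp}(\alpha) = g_{\mcp}(\lambda) + (\lambda - \alpha)\,|\mcp| = g^{s,t}(\lambda) + (\lambda - \alpha)\,|\mcp|,\\
g_{\mcp'}(\alpha) &= g_{\mcp'}(\lambda) + (\lambda - \alpha)\,|\mcp'| = g^{s,t}(\lambda) + (\lambda - \alpha)\,|\mcp'|.
\end{align*}
Since $\lambda - \alpha > 0$ and $|\mcp'| < |\mcp|$, the second expression is strictly smaller than the first, giving $g_{\mcp'}(\alpha) < g^{s,t}(\alpha)$. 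This contradicts the definition of $g^{s,t}$.

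The argument is essentially arithmetic and uses no submodularity; the only non-trivial content is the conservation identity and the observation that both $\mcp'$ and $\mcq'$ must also be $\{s,t\}$-separating $g^{s,t}$-minimizers at $\lambda$. The main conceptual point to get right is that the two hypotheses of the lemma are precisely what force strict inequalities in opposite directions when $|\mcp'| < |\mcp|$: maximality/minimality in Case~\ref{helpful:1}, and the linear extrapolation back to $\alpha$ in Case~\ref{helpful:2}. I do not foresee a real obstacle here.
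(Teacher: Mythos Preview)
Your proof is correct and follows essentially the same approach as the paper: both rely on the conservation identity $g_{\mcp'}(\mu)+g_{\mcq'}(\mu)=g_{\mcp}(\mu)+g_{\mcq}(\mu)$ together with the minimality of $g^{s,t}$. Your Case~\ref{helpful:2} is organized a bit differently---you first deduce $g_{\mcp'}(\lambda)=g^{s,t}(\lambda)$ and then extrapolate to $\alpha$ via $|\mcp'|<|\mcp|$, whereas the paper runs a single chain of inequalities ending in $g_{\mcq}(\lambda)>g_{\mcq'}(\lambda)$---but the underlying arithmetic is the same.
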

    \begin{proof}
    For the sake of contradiction, suppose such partitions $\mcp'$ and $\mcq'$ exist. Considering the two assumptions in the lemma, we arrive at a contradiction in both cases.

    Suppose first that \ref{helpful:1} holds. Then, 
    \begin{align*}
        2g^{s,t}(\lambda) 
        &= g_{\mcp}(\lambda) + g_{\mcq}(\lambda) \\
        &= f(\mcp)+f(\mcq) - \lambda(|\mcp|+|\mcq|)\\
        &= f(\mcp') + f(\mcq') - \lambda(|\mcp'|+|\mcq'|)\\
        & = g_{\mcp'}(\lambda) + g_{\mcq'}(\lambda)\\
        &\ge 2g^{s,t}(\lambda).
    \end{align*}
    Consequently, we have $g_{\mcp'}(\lambda) = g_{\mcq'}(\lambda) = g^{s,t}(\lambda)$, contradicting the choice of $\mcq$.
    
    Now suppose that \ref{helpful:2} holds. Then, 
        \begin{align*}
            g_{\cP}(\alpha) + g_{\cQ}(\lambda) 
            &= g_{\cP}(\alpha) + g_{\cQ}(\alpha) - (\lambda - \alpha)|\cQ|\\
            &= f(\cP) + f(\cQ) - \alpha|\cP| - \alpha|\cQ| - (\lambda - \alpha)|\cQ|\\
            &= f(\cP') + f(\cQ') - \alpha|\cP| - \alpha|\cQ| - (\lambda - \alpha)|\cQ|\\
            &= g_{\cP'}(\alpha) + g_{\cQ'}(\alpha) - (\lambda - \alpha)|\cQ|\\
            &= g_{\cP'}(\alpha) + g_{\cQ'}(\lambda) + (\lambda - \alpha)(|\cQ'| - |\cQ|)\\
            &\geq g_{\cP}(\alpha) + g_{\cQ'}(\lambda) + (\lambda - \alpha)(|\cQ'| - |\cQ|).
        \end{align*}
        Hence, 
        \[ g_\cQ(\lambda) \geq g_{\cQ'}(\lambda) + (\lambda - \alpha)(|\cQ'| - |\cQ|). \]
        Since $\lambda>\alpha$ and $|\cQ'|>|\cQ|$, we have $g_{\cQ}(\lambda)>g_{\cQ'}(\lambda)\ge g^{s,t}(\lambda)=g_{\cQ}(\lambda)$, a contradiction.
    \end{proof}

    Now we are ready to prove Theorem~\ref{thm:refinement}. 
    
    \begin{proof}[Proof of Theorem~\ref{thm:refinement}]
    We consider two cases. 
    \medskip
    
    \noindent \textbf{Case 1.} There is no pair of intersecting sets in $\cP \cup \cQ$.
    
    Note that $\cP \cup \cQ$ is laminar. We show that $\mcq$ is a refinement of $\mcp$.
        For the sake of contradiction, suppose $\cQ$ is not a refinement of $\cP$. Then there exists $B\in \cQ$ such that $B\not\subseteq  A$ for all $A\in \cP$. Since $\cP\cup \cQ$ is laminar, at least two sets of $\cP$ are contained in $B$. Let $\cP' \coloneq \{B\} \cup {A\in \cP \colon A\cap B = \emptyset}$ and $\cQ' \coloneqq {A\in \cP \colon A\subseteq B} \cup (\cQ \setminus {B})$. Then $\cP'$ and $\cQ'$ are $\{s,t\}$-separating partitions with $\cP'\cup \cQ' = \cP\cup \cQ$ and $|\cQ'| > |\cQ|$, contradicting Lemma~\ref{lem:helpful}.
    \medskip

    \noindent \textbf{Case 2.} There exists a pair of intersecting sets in $\cP\cup \cQ$. 
    
    By Lemmas~\ref{lem:no_uncross} and~\ref{lem:key_lem}, there is exactly one intersecting pair $X,Y\in \cP\cup \cQ$. Since both $\cP$ and $\cQ$ are partitions, $X$ and $Y$ must lie in distinct partitions; thus, without loss of generality, assume that $X\in \cP\setminus \cQ$ and $Y\in \cQ\setminus \cP$. We complete the proof by showing that $\mcq$ is an $\{s,t\}$-refinement of $\mcp$ along $(X, Y)$. Since $X$ and $Y$ are not $\{s,t\}$-uncrossable, we have that $|\{s,t\}\cap (X\setminus Y)| =1= |\{s,t\}\cap (Y\setminus X)|$. The next three claims complete the proof of the remaining properties. 

    \begin{claim}\label{cl:b}
        Every set $B$ in $\cQ\setminus Y$ is contained in some set $A$ of $\cP$.
    \end{claim}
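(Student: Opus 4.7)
The plan is to apply Lemma~\ref{lem:helpful} after carefully analyzing how $B$ can sit inside $\mcp$. Fix $B \in \mcq \setminus \{Y\}$. If $B \in \mcp$, the statement is immediate (take $A=B$), so assume $B \notin \mcp$. For the sake of contradiction, suppose $B$ is not contained in any $A \in \mcp$. The key preliminary observation is that $B$ cannot form an intersecting pair with any $A \in \mcp$: by Lemmas~\ref{lem:no_uncross} and~\ref{lem:key_lem}, the only intersecting pair in $\mcp\cup\mcq$ is $(X,Y)$, and since $X\in\mcp\setminus\mcq$, $Y\in\mcq\setminus\mcp$, and $B \neq Y$, $B \notin \mcp$, our $B$ is not part of this unique pair.

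With intersecting ruled out, every $A \in \mcp$ satisfies exactly one of $A\cap B = \emptyset$, $A\subseteq B$, or $B \subsetneq A$. Under the contradiction hypothesis, the third option is excluded, so the sets of $\mcp$ meeting $B$ are precisely those contained in $B$, and $B$ is their disjoint union. Since $B \notin \mcp$, this union must consist of at least two sets of $\mcp$.

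Now I would construct the swap: set
\[
\mcp' \coloneqq \bigl(\mcp \setminus \{A \in \mcp\colon A\subseteq B\}\bigr) \cup \{B\}, \qquad \mcq' \coloneqq (\mcq\setminus\{B\}) \cup \{A\in \mcp\colon A\subseteq B\}.
\]
These are partitions of $V$ with $\mcp'\cup\mcq' = \mcp\cup\mcq$. Both are $\{s,t\}$-separating: $\mcp'$ inherits the property on the unchanged sets and gets $B$ from $\mcq$, while $\mcq'$ inherits it on unchanged sets and gets sets of $\mcp$. Finally, $|\mcq'| = |\mcq| - 1 + |\{A\in\mcp\colon A\subseteq B\}| \geq |\mcq| + 1$, which contradicts Lemma~\ref{lem:helpful}.

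The bulk of the argument is really bookkeeping; the only subtle point — and the main thing to get right — is the opening reduction that rules out $B$ being in an intersecting pair with any $A\in\mcp$, which is where we cash in the uniqueness of $(X,Y)$ provided by Lemmas~\ref{lem:no_uncross} and~\ref{lem:key_lem}. Once that is set, the rearrangement is immediate and Lemma~\ref{lem:helpful} does the rest.
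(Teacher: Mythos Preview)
Your proof is correct and in fact more direct than the paper's. Both arguments reach a contradiction via Lemma~\ref{lem:helpful} by exhibiting $\{s,t\}$-separating partitions $\mcp',\mcq'$ with $\mcp'\cup\mcq'=\mcp\cup\mcq$ and $|\mcq'|>|\mcq|$, but the constructions differ. The paper first proves the intermediate fact $B\subseteq V\setminus(X\cup Y)$ and then reorganizes the entire laminar family $\mathcal{L}=\{S\in\mcp\cup\mcq\colon S\cap X=\emptyset=S\cap Y\}$ into its inclusion-wise maximal and minimal layers to build $\mcp'$ and $\mcq'$. You instead perform a local swap: replace the parts of $\mcp$ contained in $B$ by $B$ itself, and push those parts into $\mcq$ in place of $B$. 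This sidesteps the detour through $B\subseteq V\setminus(X\cup Y)$ and the laminar bookkeeping; the resulting $\mcp',\mcq'$ are visibly partitions, the multiset union is preserved, and the size inequality is immediate. The paper's global reorganization is not needed for this claim, so your argument is a genuine simplification.
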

    \begin{proof}
        For the sake of contradiction, let $B \in \cQ \setminus \{Y\}$ be such that $B \not\subseteq A$ for all $A \in \cP$. The set $B$ cannot be intersecting with any of the sets in $\mcp \cup \mcq$, since $X$ and $Y$ are the only intersecting pair in $\mcp \cup \mcq$. Hence, for every $A \in \mcp$, either $B \cap A = \emptyset$ or $A \subseteq B$. Moreover, since $B \setminus A \neq \emptyset$ for all $A \in \mcp$, it follows that there exist at least two sets of $\cP$ that are contained in $B$.
            
        We now claim that $B\subseteq V\setminus(X\cup Y)$. To prove this, recall that $B\cap Y=\emptyset$ since $B$ and $Y$ are parts of the partition $\mcq$ and hence disjoint. It remains to show that $X\cap B=\emptyset$. Suppose, for the sake of contradiction, that $X\cap B\neq\emptyset$. Then, since for every $A\in\mcp$ we have either $B\cap A=\emptyset$ or $A\subseteq B$, it follows that $X\subseteq B$. Consequently, $B\cap Y\supseteq X\cap Y\neq\emptyset$, a contradiction.

        Consider the family $\mathcal{L} \coloneq \{S\in \cP \cup \cQ \colon S\cap X = \emptyset = S\cap Y \}$. Since $X$ and $Y$ are the only intersecting sets in $\mcp\cup \mcq$, the family $\mathcal{L}$ is laminar. Each element of $V\setminus(X\cup Y)$ is contained in exactly two sets of $\mathcal{L}$. Let $\cP_1$ and $\cQ_1$ be the inclusion-wise maximal and minimal sets in $\mathcal{L}$, respectively, with any duplicates distributed between the two. Let 
        \begin{align*}
            \cP' &\coloneq \{X\} \cup \{A\in \cP \colon A\subseteq Y\} \cup \cP_1\\
            \cQ' &\coloneq \{Y\} \cup \{C\in \cQ \colon C\subseteq X\} \cup \cQ_1.
        \end{align*}
        Then both $\cP'$ and $\cQ'$ are $\{s,t\}$-separating partitions of $V$ with $\cP'\cup \cQ' = \cP\cup \cQ$. Since $B\in \cP_1$ and there are at least two sets of $\mathcal{L}$ contained in $B$, we get $|\cQ'| > |\cQ|$, contradicting Lemma~\ref{lem:helpful}.
    \end{proof}

    \begin{claim}\label{claim:P-X-laminar}
        Every set $A$ in $\cP \setminus X$ is either disjoint from $Y$ or contained in $Y$.
    \end{claim}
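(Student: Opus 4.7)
The plan is to exploit the structural conclusion established at the start of Case 2, namely that $(X,Y)$ is the \emph{only} intersecting pair in $\mcp\cup\mcq$ (via Lemmas~\ref{lem:no_uncross} and~\ref{lem:key_lem}). Once this is in hand, the claim reduces to a short case analysis on the possible relationships between a fixed $A\in\mcp\setminus\{X\}$ and $Y$, without needing another rearrangement argument or further appeal to Lemma~\ref{lem:helpful} or strict submodularity.

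Concretely, I would fix an arbitrary $A\in\mcp\setminus\{X\}$ and note that, since $(X,Y)$ is the unique intersecting pair in $\mcp\cup\mcq$, the pair $(A,Y)$ is not intersecting. This forces exactly one of $A\cap Y=\emptyset$, $A\subsetneq Y$, $A\supsetneq Y$, or $A=Y$ to hold. Then I would eliminate the last two cases. The case $A=Y$ is immediate: $A\in\mcp$ while $Y\in\mcq\setminus\mcp$, so $A\neq Y$. The remaining and only nontrivial case is $A\supsetneq Y$. To rule it out, I would use the fact that $A$ and $X$ are distinct parts of the partition $\mcp$, hence $A\cap X=\emptyset$. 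But then $Y\subseteq A$ would give
\[
X\cap Y \;\subseteq\; X\cap A \;=\; \emptyset,
\]
contradicting the defining property of intersecting sets (which requires $X\cap Y\neq\emptyset$). Thus $A\supseteq Y$ is impossible, and the remaining options $A\cap Y=\emptyset$ and $A\subsetneq Y$ together give the claim.

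I do not anticipate a real obstacle. The only potentially delicate point is invoking the uniqueness of the intersecting pair $(X,Y)$ in $\mcp\cup\mcq$, but this has already been secured at the start of Case~2. Unlike Claim~\ref{cl:b}, no construction of auxiliary partitions $\mcp',\mcq'$ or invocation of Lemma~\ref{lem:helpful} is required here; the argument uses only that distinct parts of a single partition are disjoint and that intersecting sets have nonempty intersection by definition.
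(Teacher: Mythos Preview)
Your proposal is correct and takes essentially the same approach as the paper: both use that $(X,Y)$ is the unique intersecting pair in $\mcp\cup\mcq$ to conclude that $A$ and $Y$ are not intersecting, and then rule out $Y\subseteq A$ via the disjointness of $A$ and $X$ in $\mcp$ combined with $X\cap Y\neq\emptyset$. The paper organizes this as a direct contradiction from the negation of the claim, while you phrase it as an explicit case split, but the substance is identical (your elimination of $A=Y$ is harmless but unnecessary, since $A=Y$ already satisfies ``$A\subseteq Y$'').
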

    \begin{proof}
        For the sake of contradiction, let $A\in \cP\setminus \{X\}$ satisfy $A\cap Y \neq \emptyset$ and $A\setminus Y \neq \emptyset$. The sets $Y$ and $A$ cannot be intersecting, since $X$ and $Y$ are the only intersecting pair in $\cP\cup \cQ$. Thus $Y\setminus A = \emptyset$, which implies $Y\subseteq A$. Then $A\cap X \supseteq Y\cap X \neq \emptyset$, contradicting the fact that $A$ and $X$ are both in the partition $\cP$ and therefore disjoint.
    \end{proof}


    \begin{claim}\label{cl:d}
        $|\{A\in \cP \colon A\subseteq Y\}| \leq |\{B\in \cQ \colon B\subseteq X\}|$.
    \end{claim}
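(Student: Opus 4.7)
The plan is to argue by contradiction via Lemma~\ref{lem:helpful}. Assume $|\cP_Y| > |\cQ_X|$, where $\cP_Y \coloneqq \{A\in\cP : A\subseteq Y\}$ and $\cQ_X \coloneqq \{B\in\cQ : B\subseteq X\}$. I will exhibit $\{s,t\}$-separating partitions $\cP'$ and $\cQ'$ with $\cP'\cup\cQ' = \cP\cup\cQ$ (as multisets) and $|\cQ'|>|\cQ|$, which contradicts the lemma.

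First, I would establish structural decompositions of $\cP$ and $\cQ$. By Claim~\ref{claim:P-X-laminar}, every $A\in\cP\setminus\{X\}$ is either contained in $Y$ or disjoint from $Y$; in the latter case $A$ is also disjoint from $X$ since $A$ and $X$ are distinct parts of $\cP$. Hence $\cP = \{X\}\sqcup \cP_Y \sqcup \cP_0$, where $\cP_0 \coloneqq \{A\in\cP : A\cap(X\cup Y) = \emptyset\}$. For $\cQ$, Claim~\ref{cl:b} gives, for each $B\in\cQ\setminus\{Y\}$, some $A\in\cP$ with $B\subseteq A$. If $A=X$ then $B\in\cQ_X$; if $A\in\cP_Y$ then $B\subseteq Y$ forces $B=\emptyset$ (since $B$ is disjoint from the part $Y$ of $\cQ$), which is impossible; otherwise $A\in\cP_0$, giving $B\cap(X\cup Y) = \emptyset$. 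Thus $\cQ = \{Y\}\sqcup \cQ_X \sqcup \cQ_0$, where $\cQ_0$ collects the parts of $\cQ$ disjoint from $X\cup Y$. Comparing how $\cP$ and $\cQ$ cover $V$ then yields $X\setminus Y = \bigsqcup\cQ_X$, $Y\setminus X = \bigsqcup\cP_Y$, and $V\setminus(X\cup Y) = \bigsqcup\cP_0 = \bigsqcup\cQ_0$.

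With these decompositions in hand, I would define the swapped families $\cP' \coloneqq \{Y\}\cup\cQ_X\cup\cP_0$ and $\cQ' \coloneqq \{X\}\cup\cP_Y\cup\cQ_0$. The coverage identities immediately show that both $\cP'$ and $\cQ'$ are partitions of $V$, and they are $\{s,t\}$-separating because every part already appeared in $\cP$ or $\cQ$. By construction $\cP'\cup\cQ' = \cP\cup\cQ$ as multisets, and $|\cQ'| - |\cQ| = (1 + |\cP_Y| + |\cQ_0|) - (1 + |\cQ_X| + |\cQ_0|) = |\cP_Y| - |\cQ_X| > 0$, contradicting Lemma~\ref{lem:helpful}.

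The only delicate point is bookkeeping when some set appears in both partitions (for instance, a common part of $\cP_0\cap\cQ_0$), but treating $\cP\cup\cQ$ as a multiset throughout handles this cleanly. The structural decomposition is where the previously established Claims~\ref{cl:b} and~\ref{claim:P-X-laminar} are used essentially; the exchange and counting steps that follow are routine.
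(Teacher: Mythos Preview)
Your proof is correct and follows essentially the same approach as the paper: both construct the swapped partitions $\cP'=\{Y\}\cup\cQ_X\cup\cP_0$ and $\cQ'=\{X\}\cup\cP_Y\cup\cQ_0$ (the paper writes these as $\{B\in\cQ:B\subseteq X\cup Y\}\cup\{A\in\cP:A\subseteq V\setminus(X\cup Y)\}$ and its counterpart, which unpack to the same families) and invoke Lemma~\ref{lem:helpful}. Your version is in fact more explicit, spelling out via Claims~\ref{cl:b} and~\ref{claim:P-X-laminar} why $\cP'$ and $\cQ'$ are genuine partitions, whereas the paper asserts this without elaboration.
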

    \begin{proof}
    Suppose indirectly that the claim fails. Consider 
        \begin{align*}
            \cP' &\coloneq \{B\in \cQ \colon B\subseteq X\cup Y\} \cup \{A\in \cP \colon A\subseteq V\setminus(X\cup Y)\}\\
            \cQ' &\coloneq \{A\in \cP \colon A\subseteq X\cup Y\} \cup \{B\in \cQ \colon A\subseteq V\setminus (X\cup Y)\}.
        \end{align*}
        Then $\cP'$ and $\cQ'$ are $\{s,t\}$-separating partitions with $\mcp'\cup \mcq' = \mcp\cup \mcq$. By the indirect assumption, we get $|\cQ'| > |\cQ|$, contradicting Lemma~\ref{lem:helpful}.
    \end{proof}

    Thus we get that properties \ref{strefinement:1b}, \ref{strefinement:1c} and \ref{strefinement:1d} of Definition~\ref{def:strefine}\eqref{strefinement:1} hold by Claims~\ref{cl:b}, \ref{claim:P-X-laminar} and \ref{cl:d}, respectively. That is,  $\mcq$ is an $\{s,t\}$-refinement of $\mcp$ along $(X, Y)$, concluding the proof of the theorem. 
\end{proof}

\subsection{Existence}
\label{subsec:st-pps-existence}

The goal of this section is to show that submodular function that is strictly submodular on intersecting pairs admits an $\{s,t\}$-separating principal partition sequence. 

\begin{thm}\label{thm:stppsexists}
    Let $f\colon 2^V\to \bR$ be a submodular function 
    that is strictly submodular on intersecting pairs,
    and let $s,t\in V$. Then, there exists an $\{s,t\}$-separating principal partition sequence of $f$. 
\end{thm}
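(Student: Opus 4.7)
The plan is to construct the sequence inductively: set $\mcp_1$ to be any $\{s,t\}$-separating $2$-partition minimizing $f$ (a $g^{s,t}$-minimizer for all sufficiently small $\lambda$), and given $\mcp_j$, define $\lambda_j \coloneqq \sup\{\lambda\in\R : g^{s,t}(\lambda) = g_{\mcp_j}(\lambda)\}$. If $\lambda_j=+\infty$ then $\mcp_j$ must be the partition $\{\{v\}:v\in V\}$ into singletons and the construction terminates with $\ell\coloneqq j$. Otherwise, by the definition of $\lambda_j$ there is an $\{s,t\}$-separating partition $\mcq$ with $|\mcq|>|\mcp_j|$ and $g_\mcq(\lambda_j) = g_{\mcp_j}(\lambda_j) = g^{s,t}(\lambda_j)$; Theorem~\ref{thm:refinement} (condition~2, applied with any $\alpha<\lambda_j$ where $g_{\mcp_j}(\alpha) = g^{s,t}(\alpha)$) then implies that $\mcq$ is either a refinement or an $\{s,t\}$-refinement of $\mcp_j$. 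The core task is to extract from $\mcq$ a single-step extension $\mcp_{j+1}$ that is a $g^{s,t}$-minimizer at $\lambda_j$, has strictly more parts than $\mcp_j$, and is related to $\mcp_j$ by one of the two permitted single-step relations.

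The extraction rests on a local-balance argument. First suppose $\mcq$ refines $\mcp_j$; for each $P\in\mcp_j$ set $\mcq_P\coloneqq\{B\in\mcq : B\subseteq P\}$ and $\mcp_j^{(P)}\coloneqq(\mcp_j\setminus\{P\})\cup\mcq_P$. The partition $\mcp_j^{(P)}$ is $\{s,t\}$-separating, so $\Delta_P\coloneqq g_{\mcp_j^{(P)}}(\lambda_j)-g_{\mcp_j}(\lambda_j)\ge 0$; a telescoping computation gives $\sum_{P}\Delta_P = g_\mcq(\lambda_j)-g_{\mcp_j}(\lambda_j) = 0$, forcing $\Delta_P=0$ for every $P$. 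Picking any $P$ with $|\mcq_P|>1$ (one exists since $|\mcq|>|\mcp_j|$) yields $\mcp_{j+1}\coloneqq\mcp_j^{(P)}$, a refinement of $\mcp_j$ up to one set that is also a minimizer at $\lambda_j$. Now suppose $\mcq$ is an $\{s,t\}$-refinement of $\mcp_j$ along $(X,Y)$. Using Definition~\ref{def:strefine}, decompose $\mcp_j=\{X\}\cup\mcp_Y\cup\mcp_{\text{out}}$ and $\mcq=\{Y\}\cup\mcq_X\cup\mcq_{\text{out}}$ where $\mcp_Y$ and $\mcp_{\text{out}}$ collect the parts of $\mcp_j$ contained in $Y$ and in $V\setminus(X\cup Y)$ respectively, and $\mcq_X, \mcq_{\text{out}}$ are the analogues for $\mcq$. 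Consider the XY-swap $\mcp^{XY}\coloneqq\{Y\}\cup\mcq_X\cup\mcp_{\text{out}}$ (which by construction is an $\{s,t\}$-refinement of $\mcp_j$ up to two sets) together with the per-part candidates $(\mcp_j\setminus\{A\})\cup\mcq_A$ for each $A\in\mcp_{\text{out}}$, where $\mcq_A\coloneqq\{B\in\mcq : B\subseteq A\}$ (each a refinement of $\mcp_j$ up to one set). By the same non-negativity-plus-vanishing-sum argument -- the differences of all these candidates with $\mcp_j$ sum to $g_\mcq(\lambda_j)-g_{\mcp_j}(\lambda_j) = 0$, and each is individually non-negative -- each of them is a $g^{s,t}$-minimizer at $\lambda_j$.

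To guarantee the strict size increase $|\mcp_{j+1}|>|\mcp_j|$, split into two subcases based on condition~\ref{strefinement:1d} of Definition~\ref{def:strefine}. If $|\mcq_X|>|\mcp_Y|$, then $|\mcp^{XY}|>|\mcp_j|$ and I take $\mcp_{j+1}\coloneqq\mcp^{XY}$; otherwise $|\mcq_X|=|\mcp_Y|$, and $|\mcq|>|\mcp_j|$ forces $|\mcq_{\text{out}}|>|\mcp_{\text{out}}|$, so some $A\in\mcp_{\text{out}}$ satisfies $|\mcq_A|>1$ and I take $\mcp_{j+1}\coloneqq(\mcp_j\setminus\{A\})\cup\mcq_A$. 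The iteration terminates in at most $|V|-1$ steps because $|\mcp_j|$ strictly increases and is bounded by $|V|$; the terminal partition is the singleton partition, the $\lambda_j$'s form a non-decreasing sequence, and by construction $g^{s,t}$ agrees with $g_{\mcp_{j+1}}$ on $[\lambda_j, \lambda_{j+1}]$, which together yield all four properties of Definition~\ref{def:st-pps}. I expect the main obstacle to be the subcase $|\mcq_X|=|\mcp_Y|$ in the $\{s,t\}$-refinement case, where the natural XY-swap by itself does not increase the number of parts; recognizing the forced imbalance in $\mcp_{\text{out}}$ and extracting a single-set refinement there is the crux of the construction.
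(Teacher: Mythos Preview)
Your local-balance argument is sound and captures the essential idea: once you know $\mcq$ is a refinement or $\{s,t\}$-refinement of $\mcp_j$ and both are minimizers at $\lambda_j$, the telescoping computation correctly shows that each single-step candidate is itself a minimizer, and your case split on $|\mcq_X|$ versus $|\mcp_Y|$ correctly guarantees a strict size increase. This is precisely the mechanism behind the paper's Lemma~\ref{lem:partitions-at-breakpoint}.

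The gap is in your invocation of condition~2 of Theorem~\ref{thm:refinement} at every step. You need some $\alpha<\lambda_j$ with $g_{\mcp_j}(\alpha)=g^{s,t}(\alpha)$, i.e., the minimizer interval of $\mcp_j$ must have positive length. This holds for $j=1$, but can fail for $j>1$: if the slope of $g^{s,t}$ at the breakpoint $\lambda_{j-1}$ jumps from $-|\mcp_{j-1}|$ to some $-k'$ with $k'>|\mcp_j|$ (which is exactly the situation that forces $|\mcp_i|-|\mcp_{i-1}|>1$ in the sequence), then your extracted $\mcp_j$ is a minimizer only at the single point $\lambda_{j-1}=\lambda_j$. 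No $\alpha<\lambda_j$ exists, and condition~1 does not help either since $\mcp_{j-1}$ is a minimizer at $\lambda_j$ with fewer parts. Concretely: if $g^{s,t}$ jumps from slope $-2$ to slope $-5$ at $\lambda_1$, and your $\mcq$ refines $\mcp_1=\{S,T\}$ into $3+2$ parts, then your $\mcp_2$ has $3$ or $4$ parts and is a minimizer only at $\lambda_1$; at step $j=2$ you cannot reapply Theorem~\ref{thm:refinement}.

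The fix is to invoke Theorem~\ref{thm:refinement} only once per breakpoint---at the first $j$ where $\lambda_j$ equals that breakpoint, where condition~2 (or condition~1, taking $\mcp_j$ as the min-cardinality minimizer) does apply---and then reuse the same $\mcq$ for all subsequent extractions at that breakpoint. The key observation, which your own construction already yields, is that after each single-step extraction the resulting $\mcp_{j+1}$ still has $\mcq$ as a refinement (or $\{s,t\}$-refinement along the same $(X,Y)$), so your local-balance argument applies again without reinvoking Theorem~\ref{thm:refinement}. Iterating until $|\mcp_{j'}|$ equals the max cardinality at the breakpoint then gives $\lambda_{j'}>\lambda_j$, and you move on. This is essentially how the paper organizes the argument: Lemma~\ref{lem:partitions-at-breakpoint} builds the entire subsequence at a breakpoint in one batch from a single application of Theorem~\ref{thm:refinement}, and the proof of Theorem~\ref{thm:stppsexists} stitches these batches together using Lemma~\ref{lem:partitions-between-breakpoints}.
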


The proof of Theorem~\ref{thm:stppsexists} relies on three lemmas that characterize how the partitions minimizing $g^{s,t}(\lambda)$ behave between breakpoints and at a breakpoint.

\begin{lem}\label{lem:partitions-between-breakpoints}
    Let $\lambda_a<\lambda_{b}$ be adjacent breakpoints of $g^{s,t}$ and let 
    \begin{align*}
        \cP_a &\in \argmax \{|\pi|\colon \pi\text{ is an }\{s,t\}\text{-separating partition with }g_\pi(\lambda_a) = g^{s,t}(\lambda_a) \},\\
        \cP_b &\in \argmin \{|\pi|\colon \pi\text{ is an }\{s,t\}\text{-separating partition with }g_\pi(\lambda_b) = g^{s,t}(\lambda_b) \}.
    \end{align*}
    Then, $g_{\cP_a}(\lambda)=g^{s,t}(\lambda)=g_{\cP_b}(\lambda)$ for all $\lambda\in[\lambda_a,\lambda_b]$, and $|\cP_a|=|\cP_b|$.

    Moreover, if $g_{\cP}(\lambda)=g^{s,t}(\lambda)$ for some $\lambda\in (\lambda_a,\lambda_b)$ for a partition $\mcp$, then 
    $g_{\cP}(\lambda)=g^{s,t}(\lambda)$ for every $\lambda\in [\lambda_a, \lambda_b]$, and $|\mcp| = |\mcp_a| = |\mcp_b|$.
    
\end{lem}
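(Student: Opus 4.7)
The plan is to exploit the fact that $g^{s,t}$ is the lower envelope of the family of affine functions $\{g_\pi\colon \pi\text{ is an }\{s,t\}\text{-separating partition of }V\}$, where each $g_\pi$ has slope $-|\pi|$. In particular, $g^{s,t}$ is piecewise linear and concave, and between the adjacent breakpoints $\lambda_a$ and $\lambda_b$ it coincides with a single affine function; let $-k$ denote its slope on $[\lambda_a,\lambda_b]$. Note that $g_\pi(\lambda)\ge g^{s,t}(\lambda)$ for every partition $\pi$ and every $\lambda$.

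The first step is to observe that any $\{s,t\}$-separating partition $\pi$ with $g_\pi(\lambda^*) = g^{s,t}(\lambda^*)$ for some interior point $\lambda^* \in (\lambda_a, \lambda_b)$ must satisfy $|\pi| = k$ and $g_\pi(\lambda) = g^{s,t}(\lambda)$ for every $\lambda \in [\lambda_a, \lambda_b]$. Indeed, $g_\pi$ is affine and lies above $g^{s,t}$ everywhere, with equality at $\lambda^*$; if $|\pi| \neq k$, the slopes of $g_\pi$ and $g^{s,t}|_{[\lambda_a,\lambda_b]}$ would differ, forcing $g_\pi < g^{s,t}$ on one side of $\lambda^*$, a contradiction. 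In particular, picking any $\lambda^* \in (\lambda_a,\lambda_b)$ and any minimizer $\pi^*$ attaining $g^{s,t}(\lambda^*)$ (which exists because $g^{s,t}$ is the min of finitely many affine functions), we see that $\pi^*$ attains the minimum at both $\lambda_a$ and $\lambda_b$ and has $|\pi^*| = k$.

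Next, I would show $|\cP_a| = k$ (the argument for $|\cP_b|=k$ is symmetric). Since the partition $\pi^*$ produced above is a minimizer at $\lambda_a$ of size $k$, the maximality of $|\cP_a|$ yields $|\cP_a| \geq k$. Conversely, suppose $|\cP_a| > k$. Then $g_{\cP_a}$ has slope $-|\cP_a| < -k$ and meets $g^{s,t}$ at $\lambda_a$, so for $\lambda$ slightly greater than $\lambda_a$ we would have
\[
    g_{\cP_a}(\lambda) = g^{s,t}(\lambda_a) - (\lambda-\lambda_a)|\cP_a| < g^{s,t}(\lambda_a) - (\lambda-\lambda_a)k = g^{s,t}(\lambda),
\]
contradicting $g^{s,t}\le g_{\cP_a}$. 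Hence $|\cP_a| = k$, and combined with $g_{\cP_a}(\lambda_a) = g^{s,t}(\lambda_a)$ and matching slope $-k$, the two affine functions $g_{\cP_a}$ and $g^{s,t}|_{[\lambda_a,\lambda_b]}$ agree throughout $[\lambda_a, \lambda_b]$. The same reasoning, with the direction reversed, shows $|\cP_b| = k$ and $g_{\cP_b}=g^{s,t}$ on $[\lambda_a,\lambda_b]$: if $|\cP_b| < k$ then $g_{\cP_b}$ would drop below $g^{s,t}$ just to the left of $\lambda_b$.

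This is essentially a one-variable concave-analysis argument, and I do not foresee any serious obstacle; the nontrivial content has already been isolated in the definition of breakpoints and in the fact that $g^{s,t}$ is the pointwise minimum of the affine functions $g_\pi$. Note that submodularity of $f$ plays no role here—the proof uses only the piecewise linear concave structure of $g^{s,t}$ together with the extremal choices of $\cP_a$ and $\cP_b$ as the largest, respectively smallest, minimizing partitions at the two breakpoints.
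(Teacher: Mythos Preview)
Your proof is correct and takes a genuinely different route from the paper. The paper's proof invokes the refinement theorem (Theorem~\ref{thm:refinement}), which crucially relies on submodularity: it fixes a partition $\mcr$ achieving $g^{s,t}$ on the whole interval $[\lambda_a,\lambda_b]$, then applies Theorem~\ref{thm:refinement} (hypothesis~2, with $\alpha=\lambda_a$ and $\lambda=\lambda_b$) to conclude that $\mcr$ is a refinement or $\{s,t\}$-refinement of $\cP_a$, forcing $|\mcr|\ge|\cP_a|$; combined with $|\mcr|\le|\cP_a|$ from the extremal choice of $\cP_a$, this gives $|\mcr|=|\cP_a|$, and a symmetric argument handles $\cP_b$. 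Your approach is purely one-variable concave analysis: since $g^{s,t}$ is the lower envelope of finitely many affine functions, it is affine with a well-defined slope $-k$ between adjacent breakpoints, and any minimizing partition at a point of $[\lambda_a,\lambda_b]$ whose slope deviates from $-k$ would dip strictly below $g^{s,t}$ on one side, a contradiction. Your argument is more elementary and, as you correctly observe, does not use submodularity of $f$ at all---it works for arbitrary set functions. The paper's heavier approach has the expository benefit of reusing the same refinement machinery that drives the rest of the existence proof, but your route is the cleaner way to establish this particular lemma.
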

\begin{proof}
    Since $\lambda_a$ and $\lambda_b$ are adjacent breakpoints of $g^{s,t}$, there exists an $\{s,t\}$-separating partition $\mcr$ such that $g_{\mcr}(\lambda)=g^{s,t}(\lambda)$ for every $\lambda\in [\lambda_a, \lambda_b]$. 
    
    We first show that $g_{\mcp_a}(\lambda)=g^{s,t}(\lambda)$ for every $\lambda \in [\lambda_a, \lambda_b]$. By the definition of $\mcp_a$, it follows that $|\mcr|\le|\mcp_a|$. Applying Theorem~\ref{thm:refinement} with $\mcp=\mcp_a$, $\mcq=\mcr$, $\alpha=\lambda_a$, and $\lambda=\lambda_b$, we obtain that $\mcr$ is either a refinement or an $\{s,t\}$-refinement of $\mcp_a$, and consequently $|\mcr|\ge|\mcp_a|$. Thus $|\mcr|=|\mcp_a|$. Since $f(\mcr)-\lambda_a|\mcr|=f(\mcp_a)-\lambda_a|\mcp_a|$ and $|\mcr|=|\mcp_a|$, it follows that $f(\mcr)=f(\mcp_a)$. Consequently, $g_{\mcp_a}(\lambda)=g_{\mcr}(\lambda)$ for every $\lambda$, and in particular $g_{\mcp_a}(\lambda)=g^{s,t}(\lambda)$ for every $\lambda\in[\lambda_a,\lambda_b]$.


   Next, we show that $g_{\mcp_b}(\lambda)=g^{s,t}(\lambda)$ for every $\lambda\in[\lambda_a,\lambda_b]$. By the definition of $\mcp_b$, it follows that $|\mcr|\ge|\mcp_b|$. Applying Theorem~\ref{thm:refinement} with $\mcp=\mcr$, $\mcq=\mcp_b$, $\alpha=\lambda_a$, and $\lambda=\lambda_b$, we obtain that $\mcp_b$ is either a refinement or an $\{s,t\}$-refinement of $\mcr$, and consequently $|\mcp_b|\ge|\mcr|$. Thus $|\mcr|=|\mcp_b|$. Since $f(\mcr)-\lambda_b|\mcr|=f(\mcp_b)-\lambda_b|\mcp_b|$ and $|\mcr|=|\mcp_b|$, it follows that $f(\mcr)=f(\mcp_b)$. Consequently, $g_{\mcp_b}(\lambda)=g_{\mcr}(\lambda)$ for every $\lambda$, and in particular $g_{\mcp_b}(\lambda)=g^{s,t}(\lambda)$ for every $\lambda\in[\lambda_a,\lambda_b]$.

    Let $\cP_c$ be such that $g_{\cP_c}(\lambda_c)=g^{s,t}(\lambda_c)$ for some $\lambda_c\in (\lambda_a,\lambda_b)$. Applying Theorem~\ref{thm:refinement} with $\mcp=\mcp_c$, $\mcq=\mcp_a$, $\alpha=\lambda_c$, and $\lambda=\lambda_a$, we obtain that $\mcp_a$ is either a refinement or an $\{s,t\}$-refinement of $\mcp_c$, and consequently $|\mcp_a|\ge|\mcp_c|$. Similarly, applying Theorem~\ref{thm:refinement} with $\mcp_c=\mcp_b$, $\mcq=\mcp_c$, $\alpha=\lambda_b$, and $\lambda=\lambda_c$, we obtain that $\mcp_c$ is either a refinement or an $\{s,t\}$-refinement of $\mcp_b$, and consequently $|\mcp_c|\ge|\mcp_b|$. Thus $|\mcp| = |\mcp_a| = |\mcp_b|.$ Since $g_{\cP_a}(\lambda_c)=g_{\cP_c}(\lambda_c)=g_{\cP_b}(\lambda_c)$, it follows that $f{(\cP_a)} = f(\cP_c) = f({\cP_b})$. Consequently, $g_{\mcp_b}(\lambda) = g_{\mcp_c}(\lambda)=g_{\mcp_b}(\lambda)$ for every $\lambda\in[\lambda_a,\lambda_b]$.
    
\end{proof}

The following lemma characterizes the behavior of the partitions minimizing $g^{s,t}(\lambda)$ at the leftmost and rightmost breakpoints. The proof does not rely on submodularity and holds for arbitrary set functions.

\begin{lem}\label{lem:partitions-at-left-and-right-breakpoints}
    Let $a, b$ be the leftmost and rightmost breakpoints of $g^{s,t}$ and let 
    \begin{align*}
        \mcp'&\in\argmin\{f(\pi)\colon \pi\text{ is an }\{s,t\}\text{-separating partition with } |\pi|=2\}, \\
        \mcq'&\coloneqq \{\{v\}\colon v\in V\}. 
    \end{align*}
    Then, $g^{s,t}(\lambda) = g_{\cP'}(\lambda)$ for $\lambda\in (-\infty, a]$ and $g^{s,t}(\lambda)=g_{\mcq'}(\lambda)$ for $\lambda\in [b, \infty)$. 
\end{lem}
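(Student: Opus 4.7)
The plan is to leverage the fact that $g^{s,t}$ is the pointwise minimum of the affine functions $g_\pi(\lambda)=f(\pi)-\lambda|\pi|$ over all $\{s,t\}$-separating partitions $\pi$, so it is concave, piecewise linear, and each of its linear pieces has integer slope $-|\pi|$ for some minimizing $\pi$, where $|\pi|\in\{2,3,\ldots,|V|\}$. The entire argument reduces to identifying the slopes of the leftmost and rightmost linear pieces, and then observing which partitions realize those slopes.

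For the $(-\infty,a]$ case, I would argue that on this piece $g^{s,t}$ coincides with $g_{\pi^*}$ for some $\{s,t\}$-separating minimizer $\pi^*$, whose slope $-|\pi^*|$ must equal the slope of $g^{s,t}$ there. Since $\pi^*$ is $\{s,t\}$-separating, $|\pi^*|\geq 2$. To prove $|\pi^*|\leq 2$, I would compare against any fixed $\{s,t\}$-separating 2-partition $\pi_2$: the inequality $g^{s,t}(\lambda)\leq g_{\pi_2}(\lambda)$ translates to $(2-|\pi^*|)\lambda\leq \text{const}$ for all $\lambda\leq a$, which forces $|\pi^*|\leq 2$ by sending $\lambda\to-\infty$. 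Hence $|\pi^*|=2$, and since $\mcp'$ is the $\{s,t\}$-separating 2-partition minimizing $f$, we get $f(\mcp')\leq f(\pi^*)$, so $g_{\mcp'}(\lambda)\leq g_{\pi^*}(\lambda)=g^{s,t}(\lambda)$ on $(-\infty,a]$; the reverse inequality is automatic from the definition of $g^{s,t}$, giving equality.

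For the $[b,\infty)$ case, the argument is perfectly symmetric. On this piece $g^{s,t}$ coincides with $g_{\pi^{**}}$ for some $\{s,t\}$-separating $\pi^{**}$, giving slope $-|\pi^{**}|$ with $|\pi^{**}|\leq |V|$. Comparing against $\mcq'$ (which is $\{s,t\}$-separating because $s\neq t$) using $g^{s,t}(\lambda)\leq g_{\mcq'}(\lambda)=f(\mcq')-|V|\lambda$ and sending $\lambda\to+\infty$ forces $|\pi^{**}|\geq|V|$, hence $|\pi^{**}|=|V|$. Since $\mcq'$ is the \emph{unique} partition of $V$ with $|V|$ parts, we must have $\pi^{**}=\mcq'$, and the equality $g^{s,t}(\lambda)=g_{\mcq'}(\lambda)$ on $[b,\infty)$ follows.

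I do not anticipate a genuine obstacle here: the lemma is a purely convex-analytic consequence of the min-of-affine-functions structure of $g^{s,t}$ together with the combinatorial bounds $2\leq|\pi|\leq|V|$ for $\{s,t\}$-separating partitions, and does not require submodularity (consistent with the remark in the statement). The only subtlety is to handle the boundary points $\lambda=a$ and $\lambda=b$, which is immediate from the continuity of the piecewise linear function and the fact that the identified linear minorants already match $g^{s,t}$ on the open intervals.
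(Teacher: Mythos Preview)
Your proposal is correct and follows essentially the same approach as the paper: both identify a partition $\mcr$ (your $\pi^*$, $\pi^{**}$) realizing the leftmost/rightmost linear piece of $g^{s,t}$, then use a comparison with a fixed $2$-partition (resp.\ the singleton partition) to force $|\mcr|=2$ (resp.\ $|\mcr|=|V|$) via the behavior as $\lambda\to\pm\infty$. The only cosmetic difference is that the paper writes down an explicit threshold for $\lambda$ rather than phrasing the contradiction as a limit, and the paper's proof is slightly terser in concluding $g_{\mcp'}=g_{\mcr}$ once $|\mcr|=2$ is established.
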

\begin{proof}

We first show that $g^{s,t}(\lambda)=g_{\cP'}(\lambda)$ for all $\lambda\in(-\infty,a]$. Since $a$ is the leftmost breakpoint of $g^{s,t}$, there exists an $\{s,t\}$-separating partition $\mcr$ such that $g^{s,t}(\lambda)=g_{\mcr}(\lambda)$ for every $\lambda\in(-\infty,a]$. Assume that $|\mcr|\ge3$. Let $\lambda<\min{a,(f(\mcr)-f(\mcp'))/(|\mcr|-2)}$. For such $\lambda$, we have
\begin{align*}
    g^{s,t}(\lambda)
    \le g_{\mcp'}(\lambda)
    = f(\mcp') - 2\lambda
    < f(\mcr) - \lambda|\mcr|
    =g_{\mcr}(\lambda) 
    =g^{s,t}(\lambda),
\end{align*}
a contradiction. Thus we conclude that $|\mcr|=2$, yielding $g_{\mcp'}(\lambda)=g_{\mcr}(\lambda)$ for every $\lambda\in(-\infty,a]$. 

Next, we show that $g^{s,t}(\lambda)=g_{\cQ'}(\lambda)$ for all $\lambda\in[b,\infty)$. Since $b$ is the rightmost breakpoint of $g^{s,t}$, there exists an $\{s,t\}$-separating partition $\mcr$ such that $g^{s,t}(\lambda)=g_{\mcr}(\lambda)$ for every $\lambda\in[b,\infty)$. Assume that $|\mcr|<|V|$. Let $\lambda>\max{b,(f(\mcq')-f(\mcr))/(|V|-|\mcr|)}$. For such $\lambda$, we have
\begin{align*}
    g^{s,t}(\lambda)
    \le g_{\mcq'}(\lambda)
    = f(\mcq') - \lambda|V|
    < f(\mcr) - \lambda|\mcr|
    =g_{\mcr}(\lambda) 
    =g^{s,t}(\lambda),
\end{align*}
a contradiction. Thus we conclude that $|\mcr|=|V|$, yielding $\mcr =\mcq'$. 
\end{proof}

\begin{lem}\label{lem:partitions-at-breakpoint}
    Let $\lambda$ be a breakpoint of $g^{s,t}$ and let 
    \begin{align*}
        \mcp &\in \argmin \{|\pi|\colon \pi\text{ is an }\{s,t\}\text{-separating partition with }g_\pi(\lambda) = g^{s,t}(\lambda) \}.
    \end{align*}
    Then, there exists 
    a sequence $\cP = \mcr_0,\mcr_1, \ldots, \mcr_{r+1}$ of $\{s,t\}$-separating partitions such that
    \begin{enumerate}\itemsep0em
        \item $\mcr_{r+1}\in \argmax \{|\pi|\colon \pi\text{ is an }\{s,t\}\text{-separating partition with }g_\pi(\lambda) = g^{s,t}(\lambda) \}$, 
        \item $\mcr_1$ is either a refinement of $\cP$ up to one set or an $\{s,t\}$-refinement of $\cP$ up to two sets, 
        \item $\mcr_{i+1}$ is a refinement of $\mcr_i$ up to one set for every $i\in [r]$, 
        \item $|\mcr_{i+1}|>|\mcr_i|$ for every $i\in \{0,1,2,\ldots, r\}$, and 
        \item $g_{\mcr_i}(\lambda) = g^{s,t}(\lambda)$ for every $i\in \{0, 1, \ldots, r+1\}$. 
    \end{enumerate}
    Moreover, given 
    \begin{align*}
        \mcp &\in \argmin \{|\pi|\colon \pi\text{ is an }\{s,t\}\text{-separating partition with }g_\pi(\lambda) = g^{s,t}(\lambda) \}, \text{ and }\\
        \mcq &\in \argmax \{|\pi|\colon \pi\text{ is an }\{s,t\}\text{-separating partition with }g_\pi(\lambda) = g^{s,t}(\lambda) \}
    \end{align*}
    a sequence $\cP = \mcr_0,\mcr_1, \ldots, \mcr_{r+1} = \mcq$ of $\{s,t\}$-separating partitions satisfying the above five properties can be constructed in polynomial-time.
\end{lem}

\begin{proof}
    We prove the existence. Our proof of existence is also constructive and it leads to the polynomial-time algorithm mentioned in the lemma. 
    Let $\mcq'\in \argmax \{|\pi|\colon \pi\text{ is an }\{s,t\}\text{-separating partition with }g_\pi(\lambda) = g^{s,t}(\lambda) \}$. 
    By Theorem~\ref{thm:refinement}, $\mcq'$ is either a refinement or an $\{s,t\}$-refinement of $\mcp$. Let $U\coloneqq \emptyset$ if $\mcq'$ is a refinement of $\mcp$ and $U\coloneqq X\cup Y$ if $\mcq'$ is an $\{s,t\}$-refinement of $\mcp$ along $(X, Y)$. It then follows that $|\{A\in \mcp\colon A\subseteq V\setminus U\}|\le|\{B\in \mcq'\colon B\subseteq V\setminus U\}|$.
    
    First, consider the case when $|\{A\in \mcp\colon A\subseteq V\setminus U\}|=|\{B\in \mcq'\colon B\subseteq V\setminus U\}|$. Then, we have $\{A\in \mcp\colon A\subseteq V\setminus U\}=\{B\in \mcq'\colon B\subseteq V\setminus U\}$. Consequently, $\mcq'$ is a refinement of $\mcp$ up to one set or an $\{s,t\}$-refinement of $\mcp$ up to two sets. 
    Since $\lambda$ is a breakpoint of $g^{s,t}$, it follows that the slope of the function $g_{\mcp}$ at $\lambda$ is strictly more than the slope of $g_{\mcq'}$ at $\lambda$ and consequently, $|\mcp|<|\mcq'|$. If $\mcq'$ is a refinement of $\mcp$, then $U=\emptyset$ and consequently, $\{A\in \mcp\colon A\subseteq V\setminus U\}=\{B\in \mcq'\colon B\subseteq V\setminus U\}$ implies that $\mcp=\mcq'$, thus contradicting $|\mcp|<|\mcq'|$. Hence, $\mcq'$ is an $\{s,t\}$-refinement of $\mcp$ up to two sets. 
    Therefore, setting $r=0$, $\mcr_0=\mcp$, and $\mcr_1=\mcq'$ satisfies the requirements of the lemma.

    Now consider the case when $|\{A\in \mcp\colon A\subseteq V\setminus U\}|<|\{B\in \mcq'\colon B\subseteq V\setminus U\}|$. We know that for every $B\in \mcq'$ there exists $A\in \mcp$ with $B\subseteq A$. By the assumption of the case, there exists $A\in \mcp$ with $A\subseteq V\setminus U$ that contains at least two parts of $\mcq'$. Let $P_1, \dots, P_{r+1}$ be such parts of $\mcp$, that is, $P_i\subseteq V\setminus U$ and each $P_i$ contains at least two parts of $\mcq'$ for every $i\in [r+1]$. For each $i\in [r+1]$, we define
    \begin{align*}
        \mcr_i &\coloneqq  \{B\in \mcq'\colon B\subseteq U\}\cup \bigl\{B\in \mcq'\colon B\subseteq \bigcup_{j=1}^i P_j\bigr\}\cup \bigl\{A\in \mcp\colon A\subseteq (V\setminus U)\setminus  \bigcup_{j=1}^i P_j\bigr\},\\
        \mcs_i &\coloneqq  \{A\in \mcp\colon A\subseteq U\}\cup \bigl\{A\in \mcp\colon A\subseteq \bigcup_{j=1}^i P_j\bigr\}\cup \bigl\{B\in \mcq'\colon B\subseteq (V\setminus U)\setminus  \bigcup_{j=1}^i P_j\bigr\}. 
    \end{align*}
    Moreover, define 
    \begin{align*}
        \mcr_0 &\coloneqq  \{B\in \mcq'\colon B\subseteq U\}\cup \left\{A\in \mcp\colon A\subseteq V\setminus U\right\}\\
        \mcs_0 &\coloneqq  \{A\in \mcp\colon A\subseteq U\}\cup \left\{B\in \mcq'\colon B\subseteq V\setminus U\right\}. 
    \end{align*}
    
    Observe that $\mcr_{r+1}=\mcq'$, $\mcs_{r+1}=\mcp$, $\mcr_i$ is a refinement of $\mcr_{i-1}$ up to one set for every $i\in[r+1]$, and if $U=\emptyset$, then $\mcr_0=\mcp$, whereas if $U\neq\emptyset$, then $\mcr_0$ is an $\{s,t\}$-refinement of $\mcp$ up to two sets. 
    Since $\mcr_i$ is a strict refinement of $\mcr_{i-1}$ up to one set for every $i\in [r+1]$, we immediately have that $|\mcr_i|> |\mcr_{i-1}|$ for every $i\in [r+1]$. Similarly, $|\mcs_{i-1}| > |\mcs_{i}|$ for every $i\in [r+1]$.
    
    We claim that $g_{\mcr_i}(\lambda) = g^{s,t}(\lambda)=g_{\mcs_i}(\lambda)$ for every $i\in \{0, 1, \ldots, r\}$. To see this, note that $\mcr_i$ and $\mcs_i$ are $\{s,t\}$-separating partitions of $V$ with $\mcr_i \cup \mcs_i = \mcp\cup \mcq'$, yielding $g^{s,t}(\lambda ) \le g_{\mcr_i}(\lambda)$ and $g^{s,t}(\lambda) \le g_{\mcs_i}(\lambda)$ for $i\in\{0,1,\dots,r\}$. Hence 
        \begin{align*}
        2g^{s,t}(\lambda) 
        &\le g_{\mcr_i}(\lambda) + g_{\mcs_i}(\lambda) \\
        &= f(\mcr_i) + f(\mcs_i)- \lambda(|\mcr_i| + |\mcs_i|) \\
        &= f(\mcp) + f(\mcq') - \lambda(|\mcp|+|\mcq'|)\\
        &= g_{\mcp}(\lambda) + g_{\mcq'}(\lambda)\\
        &=2g^{s,t}(\lambda), 
        \end{align*}
    implying $g_{\mcr_i}(\lambda) = g^{s,t}(\lambda)=g_{\mcs_i}(\lambda)$. 
    
    Concluding the above, if $U=\emptyset$, then the sequence $\mcr_0, \mcr_1, \dots, \mcr_{r+1}$ satisfies the properties of the lemma. We now consider the case where $U\neq \emptyset$. 
    If $|\{A\in \mcp: A\subseteq U\}|<|\{B\in \mcq': B\subseteq U\}|$, then the sequence $\mcp, \mcr_0, \mcr_1, \dots, \mcr_{r+1}$ satisfies the properties of the lemma.
    Suppose $|\{A\in \mcp: A\subseteq U\}|=|\{B\in \mcq': B\subseteq U\}|$. Then, we observe that $|\mcs_0|=|\mcq'|$ and moreover, $g^{s,t}(\lambda)=g_{\mcs_0}(\lambda)$. Consequently, $\mcs_0\in \argmax \{|\pi|\colon \pi\text{ is an }\{s,t\}\text{-separating partition with }g_\pi(\lambda) = g^{s,t}(\lambda) \}$. 
    Hence, the sequence $\mcs_{r+1}, \mcs_r, \ldots, \mcs_0$ satisfies the properties of the lemma. 
\end{proof}

We now prove Theorem~\ref{thm:stppsexists}. 

\begin{proof}[Proof of Theorem~\ref{thm:stppsexists}]
    Let $\lambda_1 <\dots < \lambda_{\ell-1}$ be the breakpoints of $g^{s,t}$. We will use Lemmas \ref{lem:partitions-at-breakpoint} and \ref{lem:partitions-between-breakpoints} alternatively to construct the claimed sequence of partitions along with the critical value sequence. Let 
    \begin{align*}
        \mcp_1 &\in \argmin \{|\pi|\colon \pi\text{ is an $\{s,t\}$-separating partition with } g_{\pi}(\lambda_1) = g^{s,t}(\lambda_1)\}. 
    \end{align*}
    By Lemma~\ref{lem:partitions-at-left-and-right-breakpoints}, we have $\mcp_1 \in\argmin\{f(\mcp)\colon \mcp \text{ is an }\{s, t\}\text{-separating partition with }|\mcp|=2 \}$ and $g^{s,t}(\lambda) = g_{\mcp_1'}(\lambda)$ if $\lambda\in (-\infty, \lambda_1]$. 

    We proceed as follows for $i=1, 2, 3, \ldots, \ell-1$. 
    We will inductively ensure that the $\{s,t\}$-separating partition $\mcp_i$ is such that $\mcp_i\in \argmin\{|\pi|\colon \pi\text{ is an $\{s,t\}$-separating partition with } g_{\pi}(\lambda_i) = g^{s,t}(\lambda_i)\}$. 
    Applying Lemma \ref{lem:partitions-at-breakpoint} at the breakpoint $\lambda_i$ and the $\{s,t\}$-separating partition $\mcp_i$ gives a sequence of $\{s,t\}$-separating partitions $\mcp_i=\mcr^i_0, \mcr^i_1, \ldots, \mcr^i_{r_i+1}$ satisfying the properties guaranteed by the lemma. 
    We set $\mcp_{i+1}:=\mcr^i_{r_i+1}$. 
    By the first property of Lemma \ref{lem:partitions-at-breakpoint}, we know that 
    \[
    \mcp_{i+1}=\mcr^i_{r_i+1}\in \argmax \{|\pi|\colon \pi\text{ is an $\{s,t\}$-separating partition with } g_{\pi}(\lambda_i) = g^{s,t}(\lambda_i)\}. 
    \]
    Applying Lemma \ref{lem:partitions-between-breakpoints} to adjacent breakpoints $\lambda_i<\lambda_{i+1}$ of $g^{s,t}$, we 
    conclude that 
    \[
    \mcp_{i+1}\in \argmin \{|\pi|\colon \pi\text{ is an $\{s,t\}$-separating partition with } g_{\pi}
    (\lambda_{i+1}) = g^{s,t}(\lambda_{i+1})\}. 
    \]
    This implies that $\mcp_{i+1}$ satisfies the induction hypothesis for every $i\in [\ell-2]$ and moreover, $\mcp_{\ell}=\{\{v\}: v\in V\}$ by Lemma \ref{lem:partitions-at-left-and-right-breakpoints}. 
    
    With this construction, we now consider the sequence
    \[
    \mcp_1, \mcr^1_1, \mcr^1_2, \ldots, \mcr^1_{r_1+1}=\mcp_2, \ldots, 
    \mcp_i, \mcr^i_1, \mcr^i_2, \ldots, \mcr^i_{r_i+1}=\mcp_{i+1},\ldots, 
    \mcp_{\ell-1}, \mcr^{\ell}_1, \mcr^{\ell}_2, \ldots, \mcr^{\ell}_{r_{\ell}+1}=\mcp_{\ell}
    \]
    together with the critical value sequence
    \[
    \lambda_1, \underbrace{\lambda_1, \ldots, \lambda_1}_{r_1\text{ copies}}, \lambda_2, \ldots, \lambda_i, \underbrace{\lambda_i, \ldots, \lambda_i}_{r_i\text{ copies}}, \lambda_{i+1}, \ldots, \lambda_{\ell-1}, \underbrace{\lambda_{\ell-1}, \ldots, \lambda_{\ell-1}}_{r_{\ell-1}\text{ copies}}.
    \]
    We claim that this sequence of partitions satisfies properties \ref{stpps:1}-\ref{stpps:4} of Definition~\ref{def:st-pps}. The critical value sequence is non-decreasing and hence, we have \ref{stpps:1}. We have already seen that $\mcp_1 \in\argmin\{f(\mcp)\colon \mcp \text{ is an }\allowbreak\{s, t\}\text{-separating partition with }|\mcp|=2 \}$ and $\mcp_{\ell}=\{\{v\}: v\in V\}$ and hence, we have \ref{stpps:3}. 
    We now show \ref{stpps:2}. By Lemma \ref{lem:partitions-at-left-and-right-breakpoints}, we have that $g^{s,t}(\lambda) = g_{\mcp_1}(\lambda)$ for $\lambda\in (-\infty, \lambda_1]$ and $g^{s,t}(\lambda)=g_{\mcp_{\ell}}(\lambda)$ for $\lambda\in [\lambda_{\ell-1}, \infty)$. 
    Let $i\in [\ell-1]$. We observe that $g^{s,t}(\lambda_i)=g_{\mcr^i_j}(\lambda_i)$ by construction of $\mcr^i_j$ via Lemma \ref{lem:partitions-at-breakpoint}. It suffices to show that $g^{s,t}(\lambda)=g_{\mcp_{i+1}}(\lambda)$ for every $\lambda\in [\lambda_i, \lambda_{i+1}]$. This follows by applying Lemma \ref{lem:partitions-between-breakpoints} by considering adjacent breakpoints $\lambda_i<\lambda_{i+1}$ of $g^{s,t}$ and the partition $\mcp_{i+1}\in \argmax \{|\pi|\colon \pi\text{ is an $\{s,t\}$-separating}\allowbreak\text{ partition with }\allowbreak g_{\pi}(\lambda_i) = g^{s,t}(\lambda_i)\}$. 
    Finally, we show \ref{stpps:4}. Let $i\in [\ell-1]$. We observe that $\mcr^i_{1}$ is either a refinement of $\mcp_i$ or an $\{s,t\}$-refinement of $\mcp_i$ up to two sets and $|\mcr^i_1|>|\mcp_i|$ by Lemma \ref{lem:partitions-at-breakpoint}. Moreover, for $j>1$, we also have that $\mcr^i_j$ is either a refinement of $\mcr^i_{j-1}$ or an $\{s,t\}$-refinement of $\mcr^i_{j-1}$ up to two sets and $|\mcr^i_j|>|\mcr^i_{j-1}|$ by Lemma \ref{lem:partitions-at-breakpoint}. 

\end{proof}

\subsection{Algorithm}
\label{subsec:st-pps-algo}

In this section, we present a polynomial-time algorithm to compute an $\{s,t\}$-separating principal partition sequence. Since $g^{s,t}$ is a piecewise-linear concave function with at most $|V|-2$ breakpoints, all of its breakpoints can be found in polynomial time using the Newton--Dinkelbach method, provided that the value of $g^{s,t}(\lambda)$ can be computed for any given $\lambda$; see e.g.~\cite{cunningham,pps}. We now show that $g^{s,t}(\lambda)$ is indeed computable in polynomial time. In particular, we can obtain a partition $\mcp\in\argmin\{g_{\pi}(\lambda)\colon \pi\text{ is an $\{s,t\}$-separating}\allowbreak \text{partition with }\allowbreak g_{\pi}(\lambda)=g^{s,t}(\lambda)\}$.

\begin{thm}\label{thm:min-g}
    Given a submodular function $f\colon 2^V\rightarrow \R$ via its valuation oracle and $\lambda\in \mathbb{R}$, there exist polynomial-time algorithms to find a partition in the following collections: 
    \begin{enumerate}\itemsep0em
    \item $\argmin\{f(\cP) - \lambda |\cP|\colon \cP\text{ is a partition of }V\}$, and \label{eq:algo-partition}
    \item $\argmin\{f(\cP) - \lambda |\cP|\colon \cP\text{ is an }\{s,t\}\text{-separating partition of }V\}$. \label{eq:algo-st-partition}
    \end{enumerate}
\end{thm}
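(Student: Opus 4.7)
The plan is to handle the two items in sequence, reducing \eqref{eq:algo-st-partition} to \eqref{eq:algo-partition} together with a submodular $s$-$t$ minimization.

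For \eqref{eq:algo-partition}, I would invoke the classical algorithm of Narayanan and Fujishige already recalled in Theorem~\ref{thm:PPS-exists-algo}: computing $\min_\cP(f(\cP)-\lambda|\cP|)$ over partitions of $V$ at a chosen $\lambda$ is precisely the primitive inside every polynomial-time principal partition sequence algorithm, and the minimizing partition is read off from the same run.

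For \eqref{eq:algo-st-partition}, introduce the auxiliary function
\[
h_\lambda(T)\coloneqq\min\{f(\cP)-\lambda|\cP|:\cP\text{ is a partition of }T\}\text{ for }T\neq\emptyset,\ h_\lambda(\emptyset)\coloneqq 0.
\]
Applying \eqref{eq:algo-partition} to the submodular restriction $f|_{2^T}$ computes $h_\lambda(T)$ together with a witnessing partition in polynomial time for every $T\subseteq V$. I then plan to prove the identity
\[
g^{s,t}(\lambda)=\min\bigl\{h_\lambda(S)+h_\lambda(V\setminus S):s\in S\subseteq V\setminus\{t\}\bigr\}.
\]
The ``$\le$'' direction is by gluing optimal partitions of $S$ and of $V\setminus S$ into an $\{s,t\}$-separating partition of $V$. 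For ``$\ge$'', take any $\{s,t\}$-separating partition $\cP$, let $S$ be the part containing $s$ (so $t\notin S$), and observe that $\{S\}$ and $\cP\setminus\{S\}$ are valid (possibly suboptimal) partitions of $S$ and $V\setminus S$, yielding $f(\cP)-\lambda|\cP|\ge h_\lambda(S)+h_\lambda(V\setminus S)$.

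The final step is to show that $\phi(S)\coloneqq h_\lambda(S)+h_\lambda(V\setminus S)$ is submodular in $S$; then $\min\{\phi(S):s\in S,\ t\notin S\}$ is a submodular $s$-$t$ minimization, solvable in polynomial time with $\phi$ evaluated through the oracle from \eqref{eq:algo-partition}, and stitching the optimal partitions of a minimizer $S^\ast$ and of $V\setminus S^\ast$ yields the desired $\{s,t\}$-separating partition. Submodularity of $\phi$ follows from submodularity of $h_\lambda$, since the reflection $S\mapsto h_\lambda(V\setminus S)$ inherits submodularity and sums of submodular functions remain submodular. The main obstacle is establishing that $h_\lambda$ is submodular, for which I plan an uncrossing argument: given optimal partitions $\cP_A$ and $\cP_B$ for $h_\lambda(A)$ and $h_\lambda(B)$, view $\cP_A\cup\cP_B$ as a multiset and iteratively replace any intersecting pair $X,Y$ by $X\cap Y$ and $X\cup Y$. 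Each such step preserves the per-element coverage and the total cardinality $|\cP_A|+|\cP_B|$, and does not increase $\sum f(\cdot)$ by submodularity of $f$. The resulting laminar family has each element of $A\triangle B$ covered once and each element of $A\cap B$ covered twice, so the maximal sets form a partition of $A\cup B$ and the remaining sets a partition of $A\cap B$, delivering $h_\lambda(A)+h_\lambda(B)\ge h_\lambda(A\cap B)+h_\lambda(A\cup B)$.
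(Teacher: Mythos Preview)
Your proposal is correct and follows essentially the same route as the paper: both reduce \eqref{eq:algo-st-partition} to a submodular $s$--$t$ minimization built from the Dilworth truncation $h_\lambda$, with the paper citing the submodularity of $h_\lambda$ while you supply the standard uncrossing proof. The only cosmetic difference is that the paper minimizes $b(U)=f(U)-\lambda+h_\lambda(V\setminus U)$ (pinning $U$ as the single part containing $s$) whereas you minimize the symmetric $\phi(S)=h_\lambda(S)+h_\lambda(V\setminus S)$; both functions attain the same minimum $g^{s,t}(\lambda)$.
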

\begin{proof}
    The ideas for both problems are similar. The problem in \ref{eq:algo-partition} was already known to be solvable in polynomial time -- see e.g.~\cite{pps, nagano2010minimum}. We show that the same approach also extends to solve the problem in \ref{eq:algo-st-partition}. Consider the function $f_{\lambda}\colon 2^V\rightarrow \R$ defined as $f_{\lambda}(U)\coloneq f(U)-\lambda$ if $\emptyset\neq U\subsetneq V$ and $f_{\lambda}(U)=0$ otherwise. Then, the function $f_{\lambda}$ is intersecting submodular -- see e.g.~\cite{nagano2010minimum}. 

    For a function $p\colon 2^V\rightarrow \R$, consider the function $p^{\lor}\colon 2^V\rightarrow \R$ defined as 
    \begin{align*}
        p^{\lor}(U) &\coloneq \min\{ p(\cP)\colon \cP \text{ is a partition of }U \}.
    \end{align*}
    The function $p^{\lor}$ is known as the Dilworth truncation of $p$. Suppose $p$ is intersecting submodular. Then $p^{\lor}$ is submodular. Moreover, given evaluation oracle access to $p$, there exists a polynomial-time algorithm to find a partition $\mcp$ of a given subset $U$ such that $p^{\lor}(U) = p(\mcp)$ 
    \cite{fujishige2005submodular}. For any submodular function $q\colon 2^V\to\R$, the complement function $\overline{q}\colon 2^V\to\R$ defined by $\overline{q}(U)\coloneq q(V\setminus U)$ for every $U\subseteq V$ is also submodular. Consequently, the function $m\coloneq \overline{f_{\lambda}^{\lor}}$ is submodular and admits a polynomial-time evaluation oracle via the oracle for $f$. We note that the function $m\colon 2^V\rightarrow \R$ is given by  
    \begin{align*} 
    m(U) & \coloneq \min\{ f(\cP) - \lambda |\cP| \colon \cP \text{ is a partition of } V\setminus U \}. 
    \end{align*}
    and there exists a polynomial time algorithm that takes a subset $U$ as input and finds a partition $\mcp$ of $V\setminus U$ such that $m(U) = f(\mcp)-\lambda|\mcp|$.
    %
    Next, consider the function $b\colon 2^V\rightarrow \R$ defined as 
    \[
    b(U) \coloneq f(U) -\lambda + m(U).
    \]
   
    \begin{claim}\label{cl:equiv}
    We have the following:
    \begin{enumerate}\itemsep0em
        \item $\min\{b(U)\colon s\in U\subseteq V-t\}=\min\{f(\cP) - \lambda |\cP|\colon \cP\text{ is an }\{s,t\}\text{-separating partition of }V\}$. \label{equiv:2}
        \item $\min\{b(U)\colon U\subseteq V\}=\min\{f(\cP) - \lambda |\cP|\colon \cP\text{ is a partition of }V\}$. \label{equiv:1}
    \end{enumerate}
    \end{claim}
    \begin{proof}
    We prove \ref{equiv:2}; the proof of \ref{equiv:1} is analogous. 
    Let $W \in \argmin\{b(U)\colon s\in U\subseteq V-t\}$ and $\cP_1\in \argmin\{ f(\cP) - \lambda |\cP| \colon \cP \text{ is a partition of } V\setminus W\}$. Consider $\cP_2 \coloneq  \{W\}\cup \cP_1$. Then, $\cP_2$ is an $\{s,t\}$-separating partition of $V$ such that $b(W) = f(W) - \lambda + f(\cP_1) - \lambda|\cP_1| = f(\cP_2) - \lambda|\cP_2|$, showing that the minimum on the left hand side is at least the minimum on the right hand side.
    
    For the reverse direction, let $\cP_2\in \argmin\{f(\pi) - \lambda |\pi|\colon \pi \text{ is an }\{s,t\}\text{-separating partition of $V$}\}$. Let $W$ be the part in $\cP_2$ containing $s$, implying $t\not\in W$. Consider $\cP_1 \coloneq  \cP_2\setminus \{W\}$. Then, $W$ is a subset of $V$ such that $b(W)=f(W)-\lambda+m(W)\le f(W)-\lambda + f(\mcp_1)-\lambda|\mcp_1|=f(\cP_2) - \lambda |\cP_2|$, finishing the proof of the claim. 
\end{proof}
Since $f$ and $m$ are submodular, it follows that the function $b$ is also submodular. Moreover, we have a polynomial-time evaluation oracle for $b$ using the evaluation oracle for $f$. 
    Thus, we can compute $U\in \argmin\{b(U)\colon s\in U\subseteq V-t\}$ (and similarly, $U\in \argmin\{b(U)\colon U\subseteq V\}$) via submodular minimization in polynomial time using the evaluation oracle for $b$. 
    Now, we recall that there exists a polynomial-time algorithm to compute a partition $\mcp'$ of $V\setminus U$ such that $m(U) = f(\mcp')-\lambda|\mcp'|$. Now, we return the partition $\mcp\coloneq \mcp'\cup\{U\}$. This partition is indeed in $\argmin\{f(\cP) - \lambda |\cP|\colon \cP\text{ is an }\{s,t\}\text{-separating partition of }V\}$ (and in $\argmin\{f(\cP) - \lambda |\cP|\colon \cP\text{ is a partition of }V\}$ respectively)  by Claim~\ref{cl:equiv}. 
 \end{proof}

We now complete the proof of Theorem \ref{lem:strict-st-pps}.

\begin{proof}[Proof of Theorem \ref{lem:strict-st-pps}]
The existence of the required sequence follows by Theorem \ref{thm:stppsexists}. We now give an algorithm to construct such a sequence. 
Using Theorem~\ref{thm:min-g} and the Newton-Dinkelbach method (e.g., see~\cite{cunningham, pps}), we can compute the breakpoints of $g^{s,t}$ in polynomial time. In order to compute an $\{s,t\}$-separating principal partition sequence from the breakpoints, we follow the proof of Theorem~\ref{thm:stppsexists} on the existence. In particular, at each breakpoint $\lambda_i$ of $g^{s,t}$, by the algorithmic conclusion of Lemma \ref{lem:partitions-at-breakpoint}, 
it suffices to find a partition in each of the following collections:
\begin{align}
    \argmin &\{|\pi|\colon \pi\text{ is an $\{s,t\}$-separating partition with } g_{\pi}(\lambda_i) = g^{s,t}(\lambda_i)\}, \label{eq:argmin-problem}\\
    \argmax&\{|\pi|\colon \pi \text{ is an $\{s,t\}$-separating partition with } g_{\pi}(\lambda_i) = g^{s,t}(\lambda_i)\}. \label{eq:argmax-problem}
\end{align}


This can be done as follows: For each $1<i<\ell-1$, use Theorem~\ref{thm:min-g} to find partitions $\cP_a$ and $\cP_b$ minimizing $g^{s,t}\left(\lambda_{a}\right)$ and $g^{s,t}\left(\lambda_{b}\right)$ where $\lambda_{a} = \frac{\lambda_{i-1}+\lambda_{i}}{2}$ and $\lambda_{b} = \frac{\lambda_{i}+\lambda_{i+1}}{2}$, respectively.
By Lemma \ref{lem:partitions-between-breakpoints}, the partition $\cP_a$ is contained in collection (\ref{eq:argmin-problem}), and the partition $\cP_b$ is contained in collection (\ref{eq:argmax-problem}). It remains to show that the first and last partitions in the sequence are computable in polynomial-time, which is immediate from Lemma \ref{lem:partitions-at-left-and-right-breakpoints} and polynomial-time tractability of submodular minimization.

\end{proof}

\subsection{Proof of Theorem \ref{thm:st-pps}}\label{sec:relaxing-strict-submodularity}
{We now complete the proof of Theorem \ref{thm:st-pps}. We first state the key technical lemma that allows us to assume that the function is strictly submodular on intersecting pairs.


\begin{lem}\label{lem:unique}
Let $f\colon 2^V\rightarrow \R_{\ge 0}$ be a submodular function and $\varepsilon>0$. 
Let $d_{\varepsilon},e_{\varepsilon}, h^1_{\varepsilon}, h^2_{\varepsilon}:2^V\rightarrow \R_{\ge 0}$ be $d_{\varepsilon}(X):=\epsilon|X||V\setminus X|$, $e_{\varepsilon}(X):=d_{\varepsilon}(X) + \varepsilon\binom{|X|}{2}$, $h^1_{\varepsilon}(X):=f(X)+d_{\varepsilon}(X)$ and $h^2_{\varepsilon}(X):=f(X)+e_{\varepsilon}(X)$ for every $X\subseteq V$. Then, 
\begin{enumerate}[label=(\alph*)]\itemsep0em
    \item\label{it:strict_prop_sym} if $f$ is symmetric, then $h^1_{\varepsilon}$ is symmetric submodular,
    \item\label{it:strict_prop_mon} if $f$ is monotone, then $h^2_{\varepsilon}$ is monotone submodular,
    \item\label{it:strict_prop_poly} given $\varepsilon>0$ and an evaluation oracle for $f$, the function evaluation oracle for $h^1_{\varepsilon}$ and $h^2_{\varepsilon}$ can be implemented in polynomial-time, 
    \item\label{it:strict_prop1} $h^i_{\varepsilon}(A) + h^i_{\varepsilon}(B) > h^i_{\varepsilon}(A\cap B) + h^i_{\varepsilon}(A\cup B)$ for every pair $A, B\subseteq V$ of intersecting sets and $i\in [2]$, and 
    \item\label{it:strict_prop2} if \begin{equation}
        \varepsilon<\frac{1}{4m}\cdot \min \{|f(\cP) - f(\cQ)| \colon \cP,\cQ \text{ are partitions of }V \text{ with }f(\cP)\neq f(\cQ)\}, \label{eq:epsilon}
    \end{equation}
    then, the following holds for each $i\in [2]$:
    if $\mcp$ is an $\{s,t\}$-separating partition such that 
        $h^i_{\varepsilon}(\mcp)\le h^i_{\varepsilon}(\mcq)$ for every $\{s,t\}$-separating partition $\mcq$ with $|\mcq|=|\mcp|$, then $f(\mcp)\le f(\mcq)$ for every $\{s,t\}$-separating partition $\mcq$ with $|\mcq|=|\mcp|$.
\end{enumerate}
\end{lem}

\begin{proof}
    Let $G=(V,E)$ be the complete graph on $V$ and $m\coloneqq\binom{|V|}{2}$. We observe that $d_w$ is the $\varepsilon$-weighted graph cut function of $G$, and $e_w$ is the $\varepsilon$-weighted graph coverage function of $G$. Consequently, both $d_w$ and $e_w$ are posimodular functions. Moreover, both are strictly submodular on intersecting pairs. Furthermore, $d_w$ is symmetric, while $e_w$ is monotone. It follows that if $f$ is symmetric, then $h_\varepsilon^1$ is symmetric, and if $f$ is monotone, then $h_\varepsilon^2$ is monotone. Properties \ref{it:strict_prop_sym}, \ref{it:strict_prop_mon}, and \ref{it:strict_prop1} follow.
    Moreover, both $h_\varepsilon^1$ and $h_\varepsilon^2$ are given as explicit functions of $f,\varepsilon$, meaning a function evaluation oracle can be implemented in polynomial-time, proving property \ref{it:strict_prop_poly}.

    We now show property \ref{it:strict_prop2}. Let $\varepsilon$ satisfy \eqref{eq:epsilon}, $i\in [2]$ and $h=h_{\varepsilon}^i$. 
    Let $\cP$ be an $\{s,t\}$-separating partition such that $h(\cP) \leq h(\cQ)$ for every $\{s,t\}$-separating partition $\cQ$ of size $|\cP|$, and let $\cQ$ be an arbitrary $\{s,t\}$-separating partition with $|\cQ|=|\cP|$. We prove two claims. 

    \begin{claim}\label{cl:tech1}
       If $h(\cP)=h(\cQ)$, then $f(\cP)=f(\cQ)$. 
    \end{claim}
    \begin{proof}
    For the sake of contradiction, suppose that $f(\cP)\neq f(\cQ)$. Then, 
    \begin{align*}
    |f(\cP)-f(\cQ)|
    &=
    |(f(\cP)-f(\cQ)-(h(\cP)-h(\cQ)|&& &&  \text{\hfill (by $h(\cP) = h(\cQ)$)}\\
    &\leq
    |f(\cP)-h(\cP)|+|f(\cQ)-h(\cQ)|\\
    &\leq 
    2m\varepsilon+2m\varepsilon&& &&  \text{\hfill (by $d_\varepsilon(\cP) \leq e_{\varepsilon}(\cP) \leq 2m\varepsilon$ for any $\cP$)}\\
    &<
    |f(\cP)-f(\cQ)|,&& &&  \text{\hfill (by equation \eqref{eq:epsilon})}
    \end{align*}
    a contradiction.
    \end{proof}
    
    \begin{claim}\label{cl:tech2}
        If $h(\cP)<h(\cQ)$, then $f(\cP)\leq f(\cQ)$.
    \end{claim}
    \begin{proof}
    For the sake of contradiction, suppose that $f(\cP)> f(\cQ)$. Then,
    \begin{align*}
        f(\cP) 
        &\leq 
        h(\cP)&& &&  \text{\hfill (by definition)}\\
        &< 
        h(\cQ)&& &&  \text{\hfill (by assumption)}\\
        &\leq
        f(\cQ)+2m\varepsilon&& &&  \text{\hfill (by $d_\varepsilon(\cP) \leq e_{\varepsilon}(\cP) \leq 2m\varepsilon$ for any $\cP$)}\\
        &<
        f(\cQ)+(f(\cP)-f(\cQ))/2&& &&  \text{\hfill (by $f(\cP)> f(\cQ)$)}\\
        &<
        f(\cP),
    \end{align*}
    a contradiction.
    \end{proof}
By Claims~\ref{cl:tech1} and~\ref{cl:tech2}, property~\ref{it:strict_prop2} is satisfied as well.


\end{proof}


Before completing the proof of Theorem \ref{thm:st-pps}, we need one more lemma showing that there exists a polynomial-time algorithm to verify whether a given sequence is an $\{s,t\}$-separating principal partition sequence of a given submodular function $f$.

\begin{lem}\label{lem:verify}
    Given a submodular function $f: 2^V\rightarrow \mathbb{R}$ by its function evaluation oracle along with a sequence of partitions $\cP_1,\cP_2,\ldots, \cP_\ell$ of $V$, there is a polynomial-time algorithm to verify whether $\cP_1,\cP_2,\ldots, \cP_\ell$ is an $\{s,t\}$-separating principal partition sequence of $f$.
\end{lem}
\begin{proof}
    We need only verify the conditions of Definition \ref{def:st-pps}. As argued at the beginning of this section, the breakpoints $\lambda_1,\lambda_2,\dots, \lambda_{\ell-1}$ of any $\{s,t\}$-separating principal partition sequence of $f$ can be found in polynomial-time using the Newton-Dinkelbach method. These will always satisfy property \ref{stpps:1}. Properties \ref{stpps:3} and \ref{stpps:4} are straightforward to verify by checking whether every subsequent partition is a refinement or an $\{s,t\}$-refinement. To show property \ref{stpps:2}, compute the minimum $g^{s,t}(\lambda_i)$ at each breakpoint $i\in [\ell-1]$ using Theorem \ref{thm:min-g}. If there is a $\lambda_i$ such that $g^{s,t}(\lambda_i) < g_{\cP_i}(\lambda_i)$ or $g^{s,t}(\lambda_i) < g_{\cP_{i+1}}(\lambda_i)$, then $\cP_1,\cP_2,\ldots, \cP_\ell$ is not an $\{s,t\}$-separating principal partition sequence of $f$. If there is no such $\lambda_i$, then $g^{s,t}(\lambda_i) = g_{\cP_{i}}(\lambda_i) =  g_{\cP_{i+1}}(\lambda_i) $ for $i\in [\ell-1]$. As the functions $g_{\cP}$ are linear in $\lambda$ and $\cP_i$ is a minimizer at both $\lambda_{i-1}$ and $\lambda_i$ for $2\leq i\leq \ell-2$, it follows that $\cP_i$ is a minimizer for all $\lambda\in [\lambda_{i-1}, \lambda_i]$. Thus property \ref{stpps:2} holds as well, and the sequence is an $\{s,t\}$-separating principal partition sequence of $f$.
\end{proof}

\begin{proof}[Proof of Theorem \ref{thm:st-pps}]
    Given a submodular function $f$ by its evaluation query, we construct the $\{s,t\}$-separating principal partition sequence of $f$ as follows. First, for $\varepsilon_0\coloneqq 1/m$, we use Lemma \ref{lem:unique} to get a function $h_{\epsilon_0}$ that is strictly submodular on intersecting pairs, and then we compute the $\{s,t\}$-separating principal partition sequence of $h_{\epsilon_0}$ using Theorem \ref{lem:strict-st-pps}. If $\varepsilon_0$ does not satisfy \eqref{eq:epsilon}, the computed sequence may not be an $\{s,t\}$-separating principal partition sequence of $f$. We can verify this in polynomial-time using Lemma \ref{lem:verify}. If the computed sequence is not an $\{s,t\}$-separating principal partition sequence of $f$, then we set $\varepsilon_1 \coloneqq \varepsilon_0/2$, and repeat the process.

    After $O(b)$ iterations, where $b$ is the maximum bit complexity of a response to the function evaluation query of $f$, we either compute the $\{s,t\}$-separating principal partition sequence of $f$, or arrive at some $\varepsilon_j$ that satisfies \eqref{eq:epsilon}. 
    Suppose, we arrive at the latter case, i.e., we have a $\varepsilon_j$ that satisfies \eqref{eq:epsilon}. 
    Let $\cP_1,\cP_2,\dots,\cP_\ell$ be the principal partition sequence of $h_{\varepsilon_j}$. 
    Then, $\cP_1,\cP_2,\dots,\cP_\ell$ is also an $\{s,t\}$-separating principal partition sequence of $f$ by the following argument: Properties \ref{stpps:1} and \ref{stpps:4} of Definition \ref{def:st-pps} hold since the sequence is an $\{s,t\}$-separating principal partition sequence of $h$. Properties \ref{stpps:2} and \ref{stpps:3} follow by Lemma \ref{lem:unique}\ref{it:strict_prop2}. 
\end{proof}

\section{Approximation Algorithm for \stkpartition}
\label{sec:monotone}

In this section, we design an algorithm for \stkpartition via $\{s,t\}$-separating principal partition sequences and show that it achieves an approximation factor of $4/3$ for monotone submodular functions and $2$ for posimodular submodular functions. 
We recall \stkpartition below: 
  {\searchprob{\stkpartition}{A submodular function $f\colon 2^V\to \bR_{\ge 0}$ given by a value oracle, distinct elements $s, t \in V$, and $k\in \Z_{\geq 0}$.}{
\[ \min\left\{ \sum_{i=1}^k f(V_i) \colon  \{ V_i\}_{i=1}^k \text{ is a }\{s,t\}\text{-separating partition of } V \text{ into $k$ non-empty parts}\right\}.\]}}

Our algorithm computes an $\{s,t\}$-separating principal partition sequence as mentioned in Theorem~\ref{thm:st-pps}. If the sequence contains a partition with exactly $k$ parts, the algorithm returns this partition. Otherwise, it considers the two partitions $\mcp_{i-1}$ and $\mcp_i$ in the sequence with $|\mcp_{i-1}|<k<|\mcp_i|$. By the properties of the $\{s,t\}$-separating principal partition sequence, we know that $\mcp_i$ is either a refinement of $\mcp_{i-1}$ up to one set or an $\{s,t\}$-refinement of $\mcp_{i-1}$ up to two sets. Suppose first that $\mcp_i$ is a refinement of $\mcp_{i-1}$ up to one set $X\in\mcp_{i-1}$. Then the algorithm proceeds similarly to the $k$-partitioning algorithm of~\cite{NRP96, ppskarthikwang}: it obtains an $\{s,t\}$-separating $k$-partition $\mcp$ from $\mcp_{i-1}$ by replacing $X$ with the $k-|\mcp_{i-1}|$ cheapest parts of $\mcp_i$ contained within $X$ and an additional part containing the remainder. Now suppose that $\mcp_i$ is an $\{s,t\}$-refinement of $\mcp_{i-1}$ up to two sets, namely $X\in \mcp_{i-1}\setminus \mcp_i$ and $Y\in \mcp_i\setminus \mcp_{i-1}$. In this case, the algorithm constructs three $\{s,t\}$-separating $k$-partitions $\sigma_1$, $\sigma_2$, and $\pi$, and returns the one with the smallest objective value, as follows: $\sigma_1$ is obtained from $\mcp_{i-1}$ by replacing $X$ with the $k-|\mcp_{i-1}|$ cheapest parts of $\mcp_i$ contained within $X$ and an additional part containing the remainder; $\sigma_2$ is obtained from $\mcp_{i-1}$ by replacing $X$ with the $k-|\mcp_{i-1}|-1$ cheapest parts of $\mcp_i$ contained within $X$, a part $X\cap Y$, and an additional part containing the remainder; $\pi$ is obtained from $\mcp_i$ by replacing the $|\mcp_i|-k+1$ most expensive parts of $\mcp_i$ contained within $X$ by their union. The algorithm is presented as Algorithm~\ref{algo:stkpartition}. We discuss two examples based on Figures~\ref{fig:refinement} and~\ref{fig:st-refinement} to illustrate the partitions created by the algorithm.

\begin{ex}
First, consider the two partitions in Figure \ref{fig:refinement}. Suppose $\mcp_{i-1}=\{X, C_1, C_2, C_3, C_4, C_5, C_6,\allowbreak D_1,\allowbreak D_2, D_3, D_4\}$ (solid lines), $\mcp_i=\{B_1, B_2, B_3, C_1, C_2, C_3, C_4, C_5, C_6, D_1, D_2, D_3, D_4\}$ (dashed lines), and $k=12$. Suppose $f(B_1)\le f(B_2) \le f(B_3)$. Then, the partition returned by the algorithm is $\mcp=\{C_1, C_2, C_3, C_4, C_5, C_6, D_1, D_2, D_3, D_4, B_1,\allowbreak B_2\cup B_3\}$. 

Next, consider the two partitions in Figure \ref{fig:st-refinement}. Suppose $\mcp_{i-1}=\{X, C_1, C_2, D_1, D_2, D_3, D_4\}$ (solid lines), $\mcp_i=\{Y, B_1, B_2, B_3, B_4, B_5, B_6, D_1, D_2, D_3, D_4\}$ (dashed lines), and $k=9$. Suppose $f(B_1)\le f(B_2) \le \dots \le f(B_6)$. 
Then, the partitions created by the algorithm are $\sigma_1=\{C_1, C_2, D_1, D_2, D_3, D_4, B_1, B_2, (X\cap Y)\cup \bigcup_{j=3}^6 B_j\}$, $\sigma_2=\{C_1, C_2, D_1, D_2, D_3, D_4, B_1,\allowbreak X\cap Y, \bigcup_{j=2}^6 B_j\}$, and $\pi=\{Y, D_1, D_2, D_3, D_4, B_1, B_2, B_3, \bigcup_{j=4}^6 B_j\}$.    
\end{ex}

\begin{algorithm}[!t]
\caption{Approximation Algorithm for $\stkpartition$.}\label{algo:stkpartition}
\begin{algorithmic}[1]
\Statex \textbf{Input: }A submodular function $f\colon 2^V\to\R$ given by value oracle and an integer $k\geq 2$.
\Statex \textbf{Output: }An $\{s,t\}$-separating $k$-partition $\mcp$ of $V$.

\State Use Theorem~\ref{thm:st-pps} to compute an $\{s,t\}$-separating principal partition sequence $\mcp_1, \ldots, \mcp_\ell$ of the submodular function $f$.
\If{$\exists\ j\in [\ell]\colon |\mcp_j|=k$}
    \State Return $\mcp\coloneq \mcp_j$.
\EndIf
\State Let $i\in \{2, \ldots, \ell\}$ such that $|\mcp_{i-1}|<k<|\mcp_i|$.
\If{$\mcp_i$ is a refinement of $\mcp_{i-1}$ up to one set}
    \State Let $X\in \mcp_{i-1}$ be the part refined by $\mcp_i$. 
    \State Let $\mcp'$ be the parts of $\mcp_i$ that are contained in $X$.
    \State Let $\mcp'\coloneq \{B_1, \ldots, B_{|\mcp'|}\}$ such that $f(B_1)\le \ldots \le f(B_{|\mcp'|})$.
    \State \Return $\mcp\coloneq \left(\mcp_{i-1}\setminus \{X\}\right)\cup\bigl\{B_i\colon i\in \left[k-|\mcp_{i-1}|\right]\bigr\}\cup\left\{\bigcup_{j=k-|\mcp_{i-1}|+1}^{|\mcp'|}B_j\right\}$.
\ElsIf{$\mcp_i$ is an $\{s,t\}$-refinement of $\mcp_{i-1}$ up to two sets}
    \State Let $X\in \mcp_i\setminus \mcp_{i-1}$ and $Y\in \mcp_{i-1}\setminus \mcp_i$ such that $\mcp_i$ is an $\{s,t\}$-refinement of $\mcp_{i-1}$ along $(X, Y)$.
    \State Let $\mcp'$ be the parts of $\mcp_i$ that are contained in $X$.
    \State Let $\mcp'\coloneq \{B_1,\ldots, B_{|\mcp'|}\}$ such that $f(B_1)\le \ldots \le f(B_{|\mcp'|})$.
    \State Compute the following three partitions:
    \begin{align*}
        \sigma_1 &\coloneq  \left(\mcp_{i-1}\setminus \{X\}\right)\cup\left\{B_i\colon i\in \left[k-|\mcp_{i-1}|\right]\right\}\cup\bigl\{\left(X\cap Y\right)\cup \bigcup\nolimits_{j=k-|\mcp_{i-1}|+1}^{|\mcp'|}B_j\bigr\}.\\
        \sigma_2 &\coloneq  \left(\mcp_{i-1}\setminus \{X\}\right)\cup\left\{B_i\colon i\in \left[k-|\mcp_{i-1}|-1\right]\right\}\cup\{X\cap Y\}\cup\bigl\{\bigcup\nolimits_{j=k-|\mcp_{i-1}|}^{|\mcp'|}B_j\bigr\}.\\
        \pi &\coloneq  \left(\mcp_i\setminus\left\{B_i\colon |\mcp_i|-k+1\leq i\leq |\mcp'|\right\}\right)\cup\bigl\{\bigcup\nolimits_{j=|\mcp_i|-k+1}^{|\mcp'|}B_j\bigr\}.
    \end{align*}
    \State \Return $\mcp\coloneq \arg\min\{f(\sigma_1), f(\sigma_2), f(\pi)\}$.
\EndIf

\end{algorithmic}
\end{algorithm}

Since an $\{s,t\}$-separating principal partition sequence can be computed in polynomial time by Theorem~\ref{thm:st-pps}, Algorithm~\ref{algo:stkpartition} can indeed be implemented to run in polynomial time. Moreover, the algorithm returns an $\{s,t\}$-separating $k$-partition. The rest of the section is devoted to bounding the approximation factor. 

Our first lemma shows an easy case under which the algorithm identifies the optimum. This lemma has appeared in the literature before in the context of submodular $k$-partitioning (e.g., see~\cite{NRP96}). We include a proof since our problem of interest is the $\{s,t\}$-separating submodular $k$-partitioning problem.

\begin{lem}\label{claim:exact-k}
    Let $\mcp^*$ be an optimal $\{s,t\}$-separating $k$-partition and let $\mcp_1, \ldots, \mcp_\ell$ be an $\{s,t\}$-separating principal partition sequence of $f$. If there exists $j\in [\ell]$ such that $|\mcp_j|=k$, then $f(\mcp_j)\le f(\mcp^*)$.
\end{lem}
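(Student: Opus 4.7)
The plan is to exploit the defining property of the $\{s,t\}$-separating principal partition sequence, namely that every partition $\mcp_j$ in the sequence attains the minimum value of $g^{s,t}$ at some scalar $\lambda$. Once we have such a $\lambda$, we can compare $\mcp_j$ against the optimal $k$-partition $\mcp^*$ by evaluating both at $\lambda$, and the assumption $|\mcp_j|=k=|\mcp^*|$ will make the $\lambda$-dependent terms cancel.

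More concretely, I would proceed as follows. First, invoke property \eqref{stpps:2} of Definition~\ref{def:st-pps} to pick a $\lambda\in\bR$ (in the interval $(-\infty,\lambda_1]$ if $j=1$, in $[\lambda_{j-1},\lambda_j]$ for intermediate $j$, or in $[\lambda_{\ell-1},\infty)$ if $j=\ell$) for which $g^{s,t}(\lambda)=g_{\mcp_j}(\lambda)$. Next, since $\mcp^*$ is an $\{s,t\}$-separating partition of $V$, the definition of $g^{s,t}$ yields
\[
g_{\mcp_j}(\lambda)=g^{s,t}(\lambda)\le g_{\mcp^*}(\lambda).
\]
Unpacking the definition of $g_{\mcp}(\lambda)=f(\mcp)-\lambda\cdot|\mcp|$ and using $|\mcp_j|=k=|\mcp^*|$, the $\lambda$-terms cancel, leaving $f(\mcp_j)\le f(\mcp^*)$.

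There is essentially no obstacle here: the argument is a direct consequence of the defining minimality and the cardinality match $|\mcp_j|=|\mcp^*|=k$. The only very mild subtlety is making sure such a $\lambda$ exists for every $j\in[\ell]$, including the boundary cases $j=1$ and $j=\ell$; this is immediate from property \eqref{stpps:2}, since the relevant intervals $(-\infty,\lambda_1]$ and $[\lambda_{\ell-1},\infty)$ are non-empty. Hence the lemma follows in a few lines, and no structural information about refinements or $\{s,t\}$-refinements is needed for this easy case.
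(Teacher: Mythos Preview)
Your proposal is correct and essentially identical to the paper's proof: both pick a $\lambda$ at which $g^{s,t}(\lambda)=g_{\mcp_j}(\lambda)$ (guaranteed by property~\eqref{stpps:2}), use the definition of $g^{s,t}$ to get $g_{\mcp_j}(\lambda)\le g_{\mcp^*}(\lambda)$, and then cancel the $\lambda k$ terms using $|\mcp_j|=k=|\mcp^*|$. Your explicit handling of the boundary cases $j=1$ and $j=\ell$ is a slight elaboration over the paper's one-line justification that such a $\lambda$ exists.
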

\begin{proof}
    Let $\lambda_j$ be a value such that $g^{s,t}(\lambda_j)=g_{\mcp_j}(\lambda_j)$ - such a value exists since $\mcp_j$ is a partition in an $\{s,t\}$-separating principal partition sequence. Then, we have the following:
    \[
    f(\mcp^*)-\lambda_j k =g_{\mcp^*}(\lambda_j) \ge g_{\mcp_j}(\lambda_j) =f(\mcp_j)-\lambda_j|\mcp_j| = f(\mcp_j)-\lambda_j k,
    \]
    yielding $f(\mcp^*)\ge f(\mcp_j)$.
\end{proof}

Let $\mcp^*$ be an optimal $\{s,t\}$-separating $k$-partition. Consider an $\{s,t\}$-separating principal partition sequence $\mcp_1,\ldots,\mcp_\ell$ of the submodular function $f$ with critical values $\lambda_1,\ldots,\lambda_{\ell-1}$, such that there exists no $j\in[\ell]$ with $|\mcp_j|=\ell$. Choose $i\in\{2,\ldots,\ell\}$ satisfying $|\mcp_{i-1}|<k<|\mcp_i|$. Let $\mcp', B_1,\ldots,B_{|\mcp'|}$ be defined as in Algorithm~\ref{algo:stkpartition}, and let $\mcp$ be the partition returned by the algorithm.

We begin with a lower bound for the optimum objective value. This lemma has also appeared in the literature before in the context of submodular $k$-partitioning (e.g., see~\cite{ppskarthikwang}). Our proof in the context of $\{s,t\}$-separating $k$-partitioning problem is similar.

\begin{lem}\label{lem:opt-lower-bounds}
We have the following:
    \begin{enumerate}\itemsep0em
        \item 
        $f(\mcp^*)\ge \left(\frac{|\mcp_i|-k}{|\mcp_i|-|\mcp_{i-1}|}\right)f(\mcp_{i-1}) + \left(\frac{k-|\mcp_{i-1}|}{|\mcp_i|-|\mcp_{i-1}|}\right)f(\mcp_i)$.\label{bound:1}
        \item $f(\mcp^*)\ge f(\mcp_{i-1})$.\label{bound:2}
    \end{enumerate}
\end{lem}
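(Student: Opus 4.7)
The plan is to exploit the critical value $\lambda_{i-1}$ that separates the regimes of $g^{s,t}$ where $\mcp_{i-1}$ and $\mcp_i$ are each optimal. By property~\eqref{stpps:2} of Definition~\ref{def:st-pps}, both partitions attain $g^{s,t}(\lambda_{i-1})$; equating $g_{\mcp_{i-1}}(\lambda_{i-1})$ with $g_{\mcp_i}(\lambda_{i-1})$ yields the closed form
\[
\lambda_{i-1} = \frac{f(\mcp_i) - f(\mcp_{i-1})}{|\mcp_i| - |\mcp_{i-1}|}.
\]
Since $\mcp^*$ is a feasible $\{s,t\}$-separating partition with $k$ parts, $g_{\mcp^*}(\lambda_{i-1}) \ge g^{s,t}(\lambda_{i-1})$ holds, which after rearrangement gives the master inequality
\[
f(\mcp^*) \ge f(\mcp_{i-1}) + \lambda_{i-1}(k - |\mcp_{i-1}|).
\]

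For part~\eqref{bound:1}, I will substitute the closed-form expression for $\lambda_{i-1}$ into the master inequality and regroup the terms. Because $|\mcp_{i-1}| < k < |\mcp_i|$, the coefficients appearing in front of $f(\mcp_{i-1})$ and $f(\mcp_i)$ are non-negative and sum to one, and routine algebra produces exactly the stated convex combination.

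Part~\eqref{bound:2} will follow immediately from the master inequality together with $k - |\mcp_{i-1}| > 0$ once we know that $\lambda_{i-1} \ge 0$. Establishing this non-negativity is the main obstacle, and my plan is to argue it by invoking non-negativity of $f$ together with submodularity. Concretely, for any $\{s,t\}$-separating partition $\pi$ with at least three parts, there exist two parts that can be merged without uniting $s$ with $t$; applying submodularity to these disjoint parts shows that the merger produces an $\{s,t\}$-separating partition with weakly smaller $f$-value (using $f(\emptyset) \ge 0$). Iterating down to a two-part partition shows that every $\{s,t\}$-separating $\pi$ satisfies $f(\pi) \ge f(\mcp_1)$. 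Combining this with $|\pi| \ge 2$, one obtains $g_{\pi}(\lambda) = f(\pi) + |\lambda|\cdot|\pi| \ge f(\mcp_1) + 2|\lambda| = g_{\mcp_1}(\lambda)$ for every $\lambda \le 0$, so $g^{s,t}$ coincides with $g_{\mcp_1}$ on $(-\infty, 0]$. Hence no breakpoint of $g^{s,t}$ lies in $(-\infty, 0)$, and in particular $\lambda_{i-1} \ge 0$, which together with the master inequality yields~\eqref{bound:2}.
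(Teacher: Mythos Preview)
Your argument for part~\eqref{bound:1} is essentially identical to the paper's: compute $\lambda_{i-1}$ as the intersection of $g_{\mcp_{i-1}}$ and $g_{\mcp_i}$, compare $g_{\mcp^*}(\lambda_{i-1})$ to $g^{s,t}(\lambda_{i-1})$, and substitute. The only cosmetic difference is that you rearrange from $g_{\mcp_{i-1}}(\lambda_{i-1})$ while the paper rearranges from $g_{\mcp_i}(\lambda_{i-1})$; since these coincide, the algebra is the same.

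For part~\eqref{bound:2} your route differs from the paper's. The paper merges the last $k-|\mcp_{i-1}|+1$ parts of $\mcp^*$ into one, obtaining an $\{s,t\}$-separating partition $\mcq$ with $|\mcq|=|\mcp_{i-1}|$ parts; submodularity plus $f(\emptyset)\ge 0$ gives $f(\mcp^*)\ge f(\mcq)$, and Lemma~\ref{claim:exact-k} (applied with $k$ replaced by $|\mcp_{i-1}|$) gives $f(\mcq)\ge f(\mcp_{i-1})$. You instead reuse the master inequality from part~\eqref{bound:1} and reduce to showing $\lambda_{i-1}\ge 0$, which you prove by the same merging trick but pushed all the way down to two parts: every $\{s,t\}$-separating $\pi$ satisfies $f(\pi)\ge f(\mcp_1)$, whence for $\lambda<0$ and $|\pi|>2$ one has $g_\pi(\lambda)>g_{\mcp_1}(\lambda)\ge g^{s,t}(\lambda)$ strictly. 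Since $g_{\mcp_i}(\lambda_{i-1})=g^{s,t}(\lambda_{i-1})$ and $|\mcp_i|>k>|\mcp_{i-1}|\ge 2$, this forces $\lambda_{i-1}\ge 0$. Your phrasing via ``no breakpoint in $(-\infty,0)$, hence $\lambda_{i-1}\ge 0$'' tacitly assumes every critical value is a breakpoint; this is true but not stated in Definition~\ref{def:st-pps}, so it is cleaner to argue directly from the strict inequality for $|\pi|>2$ as above. Both approaches rest on the same submodular merging step; the paper's is a touch more direct, while yours has the virtue of deriving both parts from a single master inequality.
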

\begin{proof}
For \eqref{bound:1}, let $\lambda_{i-1}$ be the value such that $g_{\mcp_{i-1}}(\lambda_{i-1})=g_{\mcp_{i}}(\lambda_{i-1})=g^{s,t}(\lambda_{i-1})$. Then, we have
\begin{align*}
    f(\mcp_{i-1})-\lambda_{i-1}|\mcp_{i-1}|=f(\mcp_{i})-\lambda_{i-1}|\mcp_{i}|,
\end{align*}
yielding $\lambda_{i-1}=\frac{f(\mcp_i)-f(\mcp_{i-1})}{|\mcp_i|-|\mcp_{i-1}|}$. By the definition of $\{s,t\}$-separating principal partition sequences, we also have
\begin{align*}
    f(\mcp^\ast)-\lambda_{i-1}\cdot k&=g_{\mcp^\ast}(\lambda_{i-1})\geq g^{s,t}(\lambda_{i-1})=g_{\mcp_i}(\lambda_{i-1})= f(\mcp_i)-\lambda_{i-1}|\mcp_i|,
\end{align*}
which implies $f(\mcp^\ast)\geq f(\mcp_i)+\lambda_{i-1}(k-|\mcp_i|)$. Combining these observations, we get
\begin{equation*}
    f(\mcp^\ast)\geq f(\mcp_i)+\frac{f(\mcp_i)-f(\mcp_{i-1})}{|P_i|-|P_{i-1}|}(k-|\mcp_i|)=\frac{|\mcp_i|-k}{|\mcp_i|-|\mcp_{i-1}|}f(\mcp_{i-1})+\frac{k-|\mcp_{i-1}|}{|\mcp_i|-|\mcp_{i-1}|}f(\mcp_i).
\end{equation*}

Now we prove \eqref{bound:2}. Let $P_1^*, \ldots, P_k^*$ denote the parts of $\mcp^{\ast}$ and set $k'\coloneq |\mcp_{i-1}|$. Note that $k'<k$. Consider the $k'$-partition $\mcq$ obtained as $Q_1\coloneq P_1^*,\ldots, Q_{k'-1}\coloneq P_{k'-1}^*, Q_{k'}\coloneq \cup_{j=k'}^{k}P_j^*$. Then, using submodularity and $f(\emptyset)\ge 0$, we get 
\[
f(\mcp^{\ast})
= \sum_{i=1}^{k} f(P_i^*)
\ge \bigl(\sum_{i=1}^{k'-1} f(P_i^*)\bigr)+f\bigl(\bigcup_{j=k'}^{k}P_j^*\bigr)
= \sum_{i=1}^{k'}f(Q_i) 
= f(\mcq). 
\] 
That is, $\mcq$ is a $k'$-partition, while $\mcp_{i-1}$ is an optimal $k'$-partition by Lemma~\ref{claim:exact-k}, meaning that $f(\mcq) \ge f(\mcp_{i-1})$. These observations together imply $f(\mcp^\ast) \ge f(\mcp_{i-1})$. 
\end{proof}

We need the following proposition about posimodular submodular functions from~\cite{ppskarthikwang}. 

\begin{prop}[Chandrasekaran, Wang]\label{prop:sym-posi-ineq}
    Let $f\colon 2^V\to \R_{\ge 0}$ be a submodular function on a ground set $V$. If $f$ is posimodular, then 
    \[
    f(T)\le f(S) + f(S\setminus T)
    \]
    for all $T\subseteq S\subseteq V$.
\end{prop}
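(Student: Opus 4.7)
The plan is to combine a single application of posimodularity with a single application of submodularity. The proposition looks almost like the standard submodular marginal bound $f(S)\le f(T)+f(S\setminus T)$ for $T\subseteq S$, except that the roles of $f(S)$ and $f(T)$ are swapped. So I expect that posimodularity is precisely the extra hypothesis needed to flip the inequality. A natural way to bring posimodularity into play is to pair $S$ with a set whose set-difference with $S$ picks out $T$; the cleanest candidate is the pair $(A,B)=(V\setminus T,\,S)$.

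First I would compute, using $T\subseteq S$, that $A\setminus B=V\setminus S$ and $B\setminus A=T$. Posimodularity applied to this pair therefore yields $f(V\setminus T)+f(S)\ge f(V\setminus S)+f(T)$, which rearranges to $f(T)\le f(S)+\bigl(f(V\setminus T)-f(V\setminus S)\bigr)$. The proof then reduces to bounding the bracketed term by $f(S\setminus T)$. For that I would apply submodularity to the disjoint pair $V\setminus S$ and $S\setminus T$, whose union is exactly $V\setminus T$: this gives $f(V\setminus S)+f(S\setminus T)\ge f(\emptyset)+f(V\setminus T)\ge f(V\setminus T)$, where the last step uses $f\ge 0$. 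Chaining the two inequalities immediately gives $f(T)\le f(S)+f(S\setminus T)$.

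The only creative step is selecting the right pair on which to invoke posimodularity; once the posimodular inequality is written out with $A=V\setminus T$ and $B=S$, the remaining term $f(V\setminus T)-f(V\setminus S)$ is controlled by a routine submodular marginal bound together with nonnegativity of $f$. I do not anticipate any obstacle beyond discovering this pairing, and the argument uses submodularity, posimodularity, and $f\ge 0$ each exactly once.
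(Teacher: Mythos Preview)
Your proof is correct: the posimodularity step with $A=V\setminus T$, $B=S$ yields $f(T)\le f(S)+\bigl(f(V\setminus T)-f(V\setminus S)\bigr)$, and the submodularity step on the disjoint pair $V\setminus S$, $S\setminus T$ (whose union is $V\setminus T$) together with $f(\emptyset)\ge 0$ gives $f(V\setminus T)-f(V\setminus S)\le f(S\setminus T)$, which combine to the claim. The paper does not supply its own proof of this proposition---it is simply quoted from \cite{ppskarthikwang}---so there is no in-paper argument to compare against; your derivation is a clean self-contained proof.
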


Next, we show two upper bounds on the objective value of the solution returned by the algorithm. Variants of this lemma have appeared in the literature before in the context of \kpartition--see~\cite{ppskarthikwang}. Our main contribution is proving similar upper bounds in the context of \stkpartition--recall that the problem has an additional constraint and consequently, the algorithm is also different. We note that the structure of $\{s,t\}$-separating principal partition sequence is more complicated than that of a principal partition sequence. 

\begin{lem}\label{lem:alg-upper-bounds}
    We have the following:
    \begin{enumerate}\itemsep0em
        \item $f(\mcp)\le f(\mcp_i)$.\label{ineq:1}
        \item If $f$ is posimodular, then 
        $f(\mcp)\le f(\mcp_{i-1}) + 2\left(\frac{k-|\mcp_{i-1}|}{|\mcp_i|-|\mcp_{i-1}|+1}\right)f(\mcp_i)$.\label{ineq:2}
        \item If $f$ is monotone, then $f(\mcp)\le f(\mcp_{i-1}) + \left(\frac{k-|\mcp_{i-1}|}{|\mcp_i|-|\mcp_{i-1}|+1}\right)f(\mcp_i)$.\label{ineq:3} 
        \end{enumerate}
\end{lem}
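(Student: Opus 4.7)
The proof splits into the two cases considered by Algorithm~\ref{algo:stkpartition}. Throughout, set $k' \coloneqq k - |\mcp_{i-1}| \ge 1$ and $m \coloneqq |\mcp'|$. In Case 1, where $\mcp_i$ is a refinement of $\mcp_{i-1}$ up to a single set $X$, the parts $\{B_1,\ldots,B_m\}$ partition $X$ and $|\mcp_i|-|\mcp_{i-1}|+1 = m$. The algorithm replaces $X$ in $\mcp_{i-1}$ by $B_1, \dots, B_{k'}$ together with $\bigcup_{j>k'} B_j$, so
\[
f(\mcp) = f(\mcp_{i-1}) - f(X) + \sum_{j=1}^{k'} f(B_j) + f\Bigl(\bigcup\nolimits_{j=k'+1}^m B_j\Bigr).
\]
I plan to bound the final term in three different ways. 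For claim (1), subadditivity of nonnegative submodular $f$ gives $f(\bigcup_{j>k'} B_j) \le \sum_{j>k'} f(B_j)$, which reassembles $f(\mcp_i)$. For claim (3), monotonicity gives $f(\bigcup_{j>k'} B_j) \le f(X)$, cancelling $-f(X)$. For claim (2), Proposition~\ref{prop:sym-posi-ineq} applied with $S = X$ and $T = \bigcup_{j>k'} B_j$ gives $f(T) \le f(X) + f(\bigcup_{j\le k'} B_j) \le f(X) + \sum_{j\le k'} f(B_j)$. In (2) and (3) the remaining sum is controlled by the cheapest-$k'$ averaging inequality $\sum_{j=1}^{k'} f(B_j) \le (k'/m) \sum_{j=1}^m f(B_j) \le (k'/m) f(\mcp_i)$, valid because the $B_j$ are sorted by $f$-value and $f$ is nonnegative.

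Case 2, where $\mcp_i$ is an $\{s,t\}$-refinement of $\mcp_{i-1}$ along $(X,Y)$ up to two sets, is more delicate. The key structural facts -- derived from Definition~\ref{def:strefine}\eqref{strefinement:3} together with properties \ref{strefinement:1b} and \ref{strefinement:1c} -- are that $\mcp'$ partitions $X\setminus Y$, and that $|\mcp_{i-1}| = 1 + r + w$ and $|\mcp_i| = 1 + m + w$, where $r \coloneqq |\{A \in \mcp_{i-1} : A \subseteq Y\}|$ and $w \coloneqq |\{A \in \mcp_{i-1} : A \subseteq V\setminus (X\cup Y)\}|$. This yields $|\mcp_i|-|\mcp_{i-1}|+1 = m - r + 1$. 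Moreover $r \ge 1$: since $X$ and $Y$ are intersecting by~\ref{strefinement:1a}, the set $Y\setminus X$ is nonempty, and it is partitioned by the $r$ parts of $\mcp_{i-1}$ contained in $Y$. Since $\mcp \in \arg\min\{f(\sigma_1), f(\sigma_2), f(\pi)\}$, it suffices to bound $f(\pi)$ for (1) and $f(\sigma_1)$ for (2) and (3). For (1), subadditivity applied to the merged union in $\pi$ yields $f(\pi) \le f(\mcp_i)$. For (2) and (3), write
\[
f(\sigma_1) = f(\mcp_{i-1}) - f(X) + \sum_{j=1}^{k'} f(B_j) + f\Bigl((X\cap Y) \cup \bigcup\nolimits_{j>k'}B_j\Bigr),
\]
observe that the final set is contained in $X$ (using $\mcp' \subseteq X$), and then bound it exactly as in Case 1: via monotonicity for (3) and via Proposition~\ref{prop:sym-posi-ineq} (with $S=X$ and $S\setminus T = \bigcup_{j\le k'} B_j$) for (2). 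The cheapest-$k'$ averaging bound then gives $\sum_{j\le k'} f(B_j) \le (k'/m) f(\mcp_i) \le (k'/(m-r+1)) f(\mcp_i)$, where the final inequality uses $r \ge 1$.

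The main technical obstacle is extracting the combinatorial identities in Case 2 -- especially that $\mcp'$ partitions $X\setminus Y$, that $|\mcp_i|-|\mcp_{i-1}|+1 = m - r + 1$, and that $r \ge 1$ -- from the intricate definition of an $\{s,t\}$-refinement up to two sets, and then selecting the correct candidate partition for each bound: $\pi$ for (1), and $\sigma_1$ for (2) and (3). The partition $\sigma_2$ is not needed for the stated bounds but provides another natural candidate that the algorithm includes. Once these structural facts are in place, the inequalities themselves reduce to routine applications of subadditivity of nonnegative submodular functions, monotonicity, and Proposition~\ref{prop:sym-posi-ineq}.
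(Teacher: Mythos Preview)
Your proof is correct and follows essentially the same strategy as the paper: same case split, same choice of candidate partitions ($\mcp$ in Case~1; $\pi$ for (1) and $\sigma_1$ for (2) in Case~2), and the same key tools (subadditivity, monotonicity, Proposition~\ref{prop:sym-posi-ineq}, cheapest-$k'$ averaging). Your extraction of the structural identities $|\mcp_i|-|\mcp_{i-1}|+1=m-r+1$ and $r\ge 1$ from Definition~\ref{def:strefine} is exactly what the paper's inequality~\eqref{eq:size-of-P'-posi} encodes, though you make the step $r\ge 1$ explicit while the paper leaves it implicit.

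The one place you diverge is the monotone bound in Case~2. The paper brings in $\sigma_2$ as well: it sorts $\mcp'\cup\{X\cap Y\}$, takes the cheapest $k'$, averages over $m+1$ sets, and then uses $m+1\ge |\mcp_i|-|\mcp_{i-1}|+1$ (i.e., $r\ge 0$) together with a second application of monotonicity, $f(X\cap Y)\le f(Y)$, to land on the same bound. Your route via $\sigma_1$ alone is simpler—one use of monotonicity, averaging over $m$, then $m\ge m-r+1$ from $r\ge 1$—and shows that $\sigma_2$ is in fact unnecessary for the stated inequality. The paper's approach has the advantage of not needing $r\ge 1$, but since that fact is already required for the posimodular bound, your simplification costs nothing.
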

\begin{proof}
We consider two cases. 
\medskip

\noindent \textbf{Case 1.} $\mcp_i$ is a refinement of $\mcp_{i-1}$ up to one set. 
In this case, the argument proceeds analogously to the proof of the algorithm for submodular $k$-partition via principal partition sequences in~\cite{ppskarthikwang}. We include the proof here for completeness. Recall that $\mcp\coloneq (\mcp_{i-1}\backslash\{S\})\cup\left\{B_1,\ldots,B_{k-|\mcp_{i-1}|}\right\}\cup\left\{\bigcup_{j=k-|\mcp_{i-1}|+1}^{|\mcp'|}B_j\right\}$. 

We have
    \begin{align*}
    f(\mcp)
    &=f(\mcp_{i-1})-f(S)+\sum_{j=1}^{k-|\mcp_{i-1}|}f(B_j)+f\bigl(\bigcup\nolimits_{j=k-|\mcp_{i-1}|+1}^{|\mcp'|}B_j\bigr)&&  \\
    &\le f(\mcp_{i-1})-f(S)+\sum_{j=1}^{|\mcp'|}f(B_j) &&  \text{\hfill \hspace{-2cm}(by submodularity)}\\
    &=f(\mcp_i). &&  \text{\hfill \hspace{-2cm}(since $\mcp_i=(\mcp_{i-1}\setminus \{S\})\cup \{B_1, \ldots, B_{|\mcp'|}\}$},
    \end{align*}
showing \eqref{ineq:1}. If $f$ is posimodular, we get
    \begin{align*}
    f(\mcp)
    &=f(\mcp_{i-1})-f(S)+\sum_{j=1}^{k-|\mcp_{i-1}|}f(B_j)+f\bigl(\bigcup\nolimits_{j=k-|\mcp_{i-1}|+1}^{|\mcp'|}B_j\bigr) &&  \\
    &\le f(\mcp_{i-1})+2\sum_{j=1}^{k-|\mcp_{i-1}|}f(B_j) && &&  \text{\hfill \hspace{-2cm}(by Proposition~\ref{prop:sym-posi-ineq} and submodularity)}\\
    &\le f(\mcp_{i-1})+2\bigl(\frac{k-|\mcp_{i-1}|}{|\mcp'|}\bigr)f(\mcp') &&  &&  \text{\hfill \hspace{-2cm}(by the choice of $B_1,\ldots,B_{k-|\mcp_{i-1}|}$)}\\
    &\le f(\mcp_{i-1})+2\bigl(\frac{k-|\mcp_{i-1}|}{|\mcp'|}\bigr)f(\mcp_i) && &&  \text{\hfill \hspace{-2cm}(since $B_1, \ldots, B_{|\mcp'|}$ are parts in $\mcp_i$)}\\
    &= f(\mcp_{i-1})+2\bigl(\frac{k-|\mcp_{i-1}|}{|\mcp_{i}|-|\mcp_{i-1}|+1}\bigr)f(\mcp_i), &&
    \end{align*}
proving \eqref{ineq:2}. Finally, if $f$ is monotone, then 
    \begin{align*}
    f(\mcp)
    &=f(\mcp_{i-1})-f(S)+\sum_{j=1}^{k-|\mcp_{i-1}|}f(B_j)+f\bigl(\bigcup\nolimits_{j=k-|\mcp_{i-1}|+1}^{|\mcp'|}B_j\bigr)&&  \\
    &\le f(\mcp_{i-1})+\sum_{j=1}^{k-|\mcp_{i-1}|}f(B_j) &&  \text{\hfill \hspace{-2cm}(by monotonicity)}\\
    &\le f(\mcp_{i-1})+\frac{k-|\mcp_{i-1}|}{|\mcp'|}f(\mcp') &&  \text{\hfill \hspace{-2cm}(by the choice of $B_1,\ldots,B_{k-|\mcp_{i-1}|}$)}\\
    &\le f(\mcp_{i-1})+\frac{k-|\mcp_{i-1}|}{|\mcp'|}f(\mcp_i) &&  \text{\hfill \hspace{-2cm}(since $B_1, \ldots, B_{|\mcp'|}$ are parts in $\mcp_i$)}\\
    &= f(\mcp_{i-1})+\bigl(\frac{k-|\mcp_{i-1}|}{|\mcp_{i}|-|\mcp_{i-1}|+1}\bigr)f(\mcp_i),
    \end{align*}
yielding \eqref{ineq:3}.
\medskip    

\noindent \textbf{Case 2.} $\mcp_{i}$ is an $\{s,t\}$-refinement of $\mcp_{i-1}$ up to two sets. 
This case is non-trivial and our proof is different from that of the previous one. Let $X\in \mcp_i\setminus \mcp_{i-1}$ and $Y\in \mcp_{i-1}\setminus \mcp_i$ such that $\mcp_i$ is an $\{s,t\}$-refinement of $\mcp_{i-1}$ along $(X, Y)$.  

We have
    \begin{align*}
        f(\mcp) 
        &\le f(\pi)&&  \\
        &= f(\mcp_i) - \sum_{j=|\mcp_i|-k+1}^{|\mcp'|}f(B_j)+f\bigl(\bigcup\nolimits_{j=|\mcp_i|-k+1}^{|\mcp'|}B_j\bigr)&&  \\
        &\le f(\mcp_i), &&  \text{\hfill \hspace{-2cm}(by submodularity)}
    \end{align*}
showing \eqref{ineq:1}. If $f$ is posimodular then, by submodularity and Proposition~\ref{prop:sym-posi-ineq} applied to $T\coloneq (X\cap Y)\cup \bigl(\bigcup_{j=k-|\mcp_{i-1}|+1}^{|\mcp'|}B_j\bigr)$ and $S\coloneq X$, we get 
    \begin{align}
    f\bigl({(X\cap Y)}\cup \bigl(\bigcup\nolimits_{j=k-|\mcp_{i-1}|+1}^{|\mcp'|}B_j\bigr)\bigr) 
    &\le f(X) + f\bigl(\bigcup\nolimits_{j=1}^{k-|\mcp_{i-1}|}B_j\bigr)\nonumber\\
    &\le f(X) + \sum_{j=1}^{k-|\mcp_{i-1}|}f(B_j). \label{ineq:sym-posi-intermed}
    \end{align}
Observe that 
    \begin{align}
        |\mcp'| 
        &= |\{B\in \mcp_i\colon B\subseteq X\}| \notag\\
        &\ge |\{B\in \mcp_i\colon B\subseteq X\}|-|\{A\in \mcp_{i-1}\colon A\subseteq Y\}| +1\notag\\
        &=|\mcp_i|-|\mcp_{i-1}|+1. \label{eq:size-of-P'-posi}
    \end{align}
    Hence, we have that 
    \begin{align*}
    f(\mcp)
        &\le f(\sigma_1) &&  \\
        &=f(\mcp_{i-1})-f(X)+ \sum_{j=1}^{k-|\mcp_{i-1}|} f(B_j) + f\bigl( {(X\cap Y)}\cup \bigl(\bigcup\nolimits_{j=k-|\mcp_{i-1}|+1}^{|\mcp'|}B_j\bigr)\bigr) &&  \\
        &\le f(\mcp_{i-1})+ 2\sum_{j=1}^{k-|\mcp_{i-1}|} f(B_j) &&  \text{\hfill \hspace{-2cm}(by inequality \eqref{ineq:sym-posi-intermed})} \\
        &\le f(\mcp_{i-1})+ 2\bigl(\frac{k-|\mcp_{i-1}|}{|\mcp'|}\bigr)\sum_{j=1}^{|\mcp'|} f(B_j) &&  \text{\hfill \hspace{-2cm}(by definition of $B_1,\ldots, B_{k-|\mcp_{i-1}|}$)} \\
        &= f(\mcp_{i-1})+ 2\bigl(\frac{k-|\mcp_{i-1}|}{|\mcp'|}\bigr)f(\mcp') &&  \\
        &\le f(\mcp_{i-1})+ 2\bigl(\frac{k-|\mcp_{i-1}|}{|\mcp'|}\bigr)f(\mcp_i)&&  \text{\hfill \hspace{-2cm}(since $\mcp'\subseteq \mcp_i$)}\\
        &\le 2f(\mcp_{i-1})+ \bigl(\frac{k-|\mcp_{i-1}|}{|\mcp_i|-|\mcp_{i-1}|+1}\bigr)f(\mcp_i), &&  \text{\hfill \hspace{-2cm}(by inequality \eqref{eq:size-of-P'-posi})} 
    \end{align*}
    proving ~\eqref{ineq:2}. Finally, assume that $f$ is monotone. Let $\sigma_1$ and $\sigma_2$ be as defined in Algorithm~\ref{algo:stkpartition}. Denote by $A_1,\ldots,A_{|\mcp'|+1}$ an ordering of $\mcp'\cup\{X\cap Y\}$ such that $f(A_1)\le \ldots \le f(A_{|\mcp'|+1})$. Let 
    \[
    \sigma \coloneq  \bigl(\mcp_{i-1}\setminus \{X\}\bigr)\cup \bigl\{A_i\colon i\in \left[k-|\mcp_{i-1}|\right]\bigr\}\cup \bigl\{\bigcup\nolimits_{j=k-|\mcp_{i-1}|+1}^{|\mcp'|+1}A_j\bigr\}.
    \]
    Then, $f(\sigma)=\min\{f(\sigma_1), f(\sigma_2)\}$. Observe that 
    \begin{align}
        |\mcp'| 
        &= |\{B\in \mcp_i\colon B\subseteq X\}| \notag\\
        &\ge |\{B\in \mcp_i\colon B\subseteq X\}|-|\{A\in \mcp_{i-1}\colon A\subseteq Y\}| \notag\\
        &=|\mcp_i|-|\mcp_{i-1}|. \label{eq:size-of-P'}
    \end{align}
    Thus we have
    \begin{align*}
        f(\mcp)
        &\le \min\{f(\sigma_1), f(\sigma_2)\}&&  \\
        &=f(\sigma)&&  \\
        &=f(\mcp_{i-1})-f(X)+ \sum_{j=1}^{k-|\mcp_{i-1}|} f(A_j) + f\bigl(\bigcup\nolimits_{j=k-|\mcp_{i-1}|+1}^{|\mcp'|+1}A_j\bigr)&&  \\
        &\le f(\mcp_{i-1})+ \sum_{j=1}^{k-|\mcp_{i-1}|} f(A_j) && \text{\hfill \hspace{-2cm}(by monotonicity and $\bigcup\nolimits_{j=k-|\mcp_{i-1}|+1}^{|\mcp'|+1}A_j\subseteq X$)}  \\
        &\le f(\mcp_{i-1})+ \bigl(\frac{k-|\mcp_{i-1}|}{|\mcp'|+1}\bigr)\sum_{j=1}^{|\mcp'|+1} f(A_j) && \text{\hfill \hspace{-2cm}(by definition of $A_1, \ldots, A_{k-|\mcp_{i-1}|}$)}  \\
        &= f(\mcp_{i-1})+ \bigl(\frac{k-|\mcp_{i-1}|}{|\mcp'|+1}\bigr)\bigl( f(\mcp')+f(X\cap Y) \bigr)&&  \\
        &\le f(\mcp_{i-1})+ \bigl(\frac{k-|\mcp_{i-1}|}{|\mcp'|+1}\bigr)\bigl( f(\mcp')+f(Y) \bigr) &&   \text{\hfill \hspace{-2cm}(by monotonicity)}  \\
        &\le f(\mcp_{i-1})+ \bigl(\frac{k-|\mcp_{i-1}|}{|\mcp'|+1}\bigr)f(\mcp_i) && \text{\hfill \hspace{-2cm}(since $Y\in \mcp_i$ and $\mcp'\subseteq \mcp_i$)}  \\
        &\le f(\mcp_{i-1})+ \bigl(\frac{k-|\mcp_{i-1}|}{|\mcp_i|-|\mcp_{i-1}|+1}\bigr) f(\mcp_i),   && \text{\hfill \hspace{-2cm}(by inequality \eqref{eq:size-of-P'})}
    \end{align*}
    yielding \eqref{ineq:3}.
\end{proof}

We next refine the upper bounds established in Lemma~\ref{lem:alg-upper-bounds} using the lower bounds of Lemma~\ref{lem:opt-lower-bounds}.

    \begin{lem}\label{lem:intermediate-upper-bounds}
    We have the following: 
        \begin{enumerate}\itemsep0em
            \item $f(\mcp)\le \left(\frac{|\mcp_i|-|\mcp_{i-1}|}{k-|\mcp_{i-1}|}\right)f(\mcp^*)-\left(\frac{|\mcp_i|-k}{k-|\mcp_{i-1}|}\right)f(\mcp_{i-1})$. \label{refine:1} 
            \item If $f$ is posimodular, then $f(\mcp)\le \left(\frac{|\mcp_i|-|\mcp_{i-1}|}{|\mcp_i|-|\mcp_{i-1}|+1}\right)
\left(2f(\mcp^*)+ \left(\frac{2k-|\mcp_{i}|-|\mcp_{i-1}|+1}{|\mcp_i|-|\mcp_{i-1}|}\right)f(\mcp_{i-1})\right)$. \label{refine:2} 
            \item If $f$ is monotone, then $f(\mcp)\le \left(\frac{|\mcp_i|-|\mcp_{i-1}|}{|\mcp_i|-|\mcp_{i-1}|+1}\right)\left(f(\mcp^*)+\left(\frac{k-|\mcp_{i-1}|+1}{|\mcp_i|-|\mcp_{i-1}|}\right)f(\mcp_{i-1})\right)$. \label{refine:3}
            \end{enumerate}
    \end{lem}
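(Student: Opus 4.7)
The plan is to derive all three bounds by substituting a single ``reshuffled'' inequality, obtained from Lemma~\ref{lem:opt-lower-bounds}\eqref{bound:1}, into the three upper bounds in Lemma~\ref{lem:alg-upper-bounds}. The point is that Lemma~\ref{lem:opt-lower-bounds}\eqref{bound:1} can be solved for $f(\mcp_i)$ to yield
\begin{equation*}
    f(\mcp_i) \;\le\; \Bigl(\tfrac{|\mcp_i|-|\mcp_{i-1}|}{k-|\mcp_{i-1}|}\Bigr)f(\mcp^*) \;-\; \Bigl(\tfrac{|\mcp_i|-k}{k-|\mcp_{i-1}|}\Bigr)f(\mcp_{i-1}),
\end{equation*}
where the denominator $k-|\mcp_{i-1}|$ is strictly positive because $|\mcp_{i-1}|<k$.

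First, for part~\eqref{refine:1}, I would simply chain this bound with Lemma~\ref{lem:alg-upper-bounds}\eqref{ineq:1}, which gives $f(\mcp)\le f(\mcp_i)$. This immediately yields the desired inequality with no further manipulation.

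Next, for parts~\eqref{refine:2} and~\eqref{refine:3}, I would plug the above upper bound on $f(\mcp_i)$ into Lemma~\ref{lem:alg-upper-bounds}\eqref{ineq:2} (respectively \eqref{ineq:3}). In both cases the coefficient of $f(\mcp^*)$ appears cleanly because the factor $k-|\mcp_{i-1}|$ in the numerator of Lemma~\ref{lem:alg-upper-bounds}\eqref{ineq:2}/\eqref{ineq:3} cancels with the same factor in the denominator of our expression for $f(\mcp_i)$. The only genuine algebra is collecting the coefficient of $f(\mcp_{i-1})$, which reduces to simplifying
\begin{equation*}
    1 - \tfrac{c\,(|\mcp_i|-k)}{|\mcp_i|-|\mcp_{i-1}|+1}
    \;=\; \tfrac{|\mcp_i|-|\mcp_{i-1}|+1 - c\,(|\mcp_i|-k)}{|\mcp_i|-|\mcp_{i-1}|+1},
\end{equation*}
with $c=2$ in the posimodular case and $c=1$ in the monotone case. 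For $c=2$ the numerator becomes $2k-|\mcp_i|-|\mcp_{i-1}|+1$, and for $c=1$ it becomes $k-|\mcp_{i-1}|+1$, both of which match the targeted expressions once we factor out $(|\mcp_i|-|\mcp_{i-1}|)/(|\mcp_i|-|\mcp_{i-1}|+1)$.

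There is no real obstacle here: the proof is a short routine calculation. The only thing to be careful about is ensuring the sign of the term $-\bigl((|\mcp_i|-k)/(k-|\mcp_{i-1}|)\bigr)f(\mcp_{i-1})$ is handled correctly when substituting; since $|\mcp_{i-1}|<k<|\mcp_i|$, this coefficient is negative, and the substitution into the upper bound is valid (the inequality direction is preserved because we are replacing $f(\mcp_i)$ by an upper bound on it, multiplied by a non-negative coefficient). Thus I would present the three parts in order, each as a one- or two-line chain of inequalities followed by a brief algebraic simplification, citing Lemmas~\ref{lem:opt-lower-bounds} and~\ref{lem:alg-upper-bounds} at the appropriate step.
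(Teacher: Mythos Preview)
Your proposal is correct and is essentially the same argument as the paper's: both proofs combine Lemma~\ref{lem:alg-upper-bounds} with Lemma~\ref{lem:opt-lower-bounds}\eqref{bound:1}, the only cosmetic difference being that you first solve Lemma~\ref{lem:opt-lower-bounds}\eqref{bound:1} for $f(\mcp_i)$ and then substitute, whereas the paper rewrites each bound from Lemma~\ref{lem:alg-upper-bounds} so that the convex combination of $f(\mcp_{i-1})$ and $f(\mcp_i)$ appears explicitly and then invokes Lemma~\ref{lem:opt-lower-bounds}\eqref{bound:1}. The algebra and the resulting coefficients are identical.
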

    \begin{proof}
    We have
    \begin{align*}
    f(\mcp) 
    &\le f(\mcp_i) && \text{\hfill \hspace{-3.5cm}(by Lemma~\ref{lem:alg-upper-bounds}\eqref{ineq:1})}\\
    &= \bigl(\frac{|\mcp_i|-|\mcp_{i-1}|}{k-|\mcp_{i-1}|}\bigr)\bigl(\bigl(\frac{|\mcp_i|-k}{|\mcp_i|-|\mcp_{i-1}|}\bigr)f(\mcp_{i-1})+\bigl(\frac{k-|\mcp_{i-1}|}{|\mcp_i|-|\mcp_{i-1}|}\bigr)f(\mcp_i)\bigr)-\bigl(\frac{|\mcp_i|-k}{k-|\mcp_{i-1}|}\bigr)f(\mcp_{i-1})\\
    &\le \bigl(\frac{|\mcp_i|-|\mcp_{i-1}|}{k-|\mcp_{i-1}|}\bigr)f(\mcp^*)-\bigl(\frac{|\mcp_i|-k}{k-|\mcp_{i-1}|}\bigr)f(\mcp_{i-1}), && \text{\hfill \hspace{-3.5cm}(by Lemma~\ref{lem:opt-lower-bounds}\eqref{bound:1})}
\end{align*}
showing \eqref{refine:1}. If $f$ is posimodular, we get 
\begin{align*}
    f(\mcp)
    &\le  f(\mcp_{i-1})+2\bigl(\frac{k-|\mcp_{i-1}|}{|\mcp_{i}|-|\mcp_{i-1}|+1}\bigr)f(\mcp_i)&& \text{\hfill \hspace{-2cm}(by Lemma~\ref{lem:alg-upper-bounds}\eqref{ineq:2})} \\    
    &= \bigl(\frac{|\mcp_i|-|\mcp_{i-1}|}{|\mcp_i|-|\mcp_{i-1}|+1}\bigr)
    \bigl(\bigl(\frac{|\mcp_i|-|\mcp_{i-1}|+1}{|\mcp_i|-|\mcp_{i-1}|}\bigr)f(\mcp_{i-1})+2\bigl(\frac{k-|\mcp_{i-1}|}{|\mcp_i|-|\mcp_{i-1}|}\bigr)f(\mcp_i)\bigr)&& \\
    &\le \bigl(\frac{|\mcp_i|-|\mcp_{i-1}|}{|\mcp_i|-|\mcp_{i-1}|+1}\bigr)
    \bigl(2f(\mcp^*)+ \bigl(\frac{2k-|\mcp_{i}|-|\mcp_{i-1}|+1}{|\mcp_i|-|\mcp_{i-1}|}\bigr)f(\mcp_{i-1})\bigr), && \text{\hfill \hspace{-2cm}(by Lemma~\ref{lem:opt-lower-bounds}\eqref{bound:1})}
\end{align*}
proving \eqref{refine:2}. Finally, if $f$ is monotone, then
\begin{align*}
    f(\mcp)
    &\le  f(\mcp_{i-1})+\bigl(\frac{k-|\mcp_{i-1}|}{|\mcp_{i}|-|\mcp_{i-1}|+1}\bigr)f(\mcp_i)&& \text{\hfill \hspace{-2cm}(by Lemma~\ref{lem:alg-upper-bounds}\eqref{ineq:3})}\\
    &= \bigl(\frac{|\mcp_i|-|\mcp_{i-1}|}{|\mcp_i|-|\mcp_{i-1}|+1}\bigr)\bigl(\bigl(\frac{|\mcp_i|-|\mcp_{i-1}|+1}{|\mcp_i|-|\mcp_{i-1}|}\bigr)f(\mcp_{i-1})+\bigl(\frac{k-|\mcp_{i-1}|}{|\mcp_i|-|\mcp_{i-1}|}\bigr)f(\mcp_i)\bigr)\\
    & \le  \bigl(\frac{|\mcp_i|-|\mcp_{i-1}|}{|\mcp_i|-|\mcp_{i-1}|+1}\bigr)\bigl(f(\mcp^*)+\bigl(\frac{k-|\mcp_{i-1}|+1}{|\mcp_i|-|\mcp_{i-1}|}\bigr)f(\mcp_{i-1})\bigr),&& \text{\hfill \hspace{-2cm}(by Lemma~\ref{lem:opt-lower-bounds}\eqref{bound:1})}
    \end{align*}
    yielding \eqref{refine:3}.
\end{proof}

We now bound the approximation factor of Algorithm~\ref{algo:stkpartition}. Using the bounds in Lemma~\ref{lem:intermediate-upper-bounds}, the rest of the argument is identical to the argument in~\cite{ppskarthikwang} and we repeat it here for the sake of completeness.
We need the following proposition that is implicit in~\cite{ppskarthikwang}.

\begin{prop}[Chandrasekaran, Wang]\label{prop:mono-posi-calculations}
    Suppose $A,B\geq 1$ are integers. Then, for every $c\in\bR_{\geq 0}$, the following two inequalities hold:  
    \begin{align}
    \min \bigl\{\frac{A+B}{A}-\frac{B}{A}\cdot c, \;\bigl(\frac{A+B}{A+B+1}\bigr)\bigl(2+\frac{A-B+1}{A+B}\cdot c\bigr) \bigr\} &\le 2\bigl(1-\frac{1}{A+B+2}\bigr), \text{ and}\label{eqn:min-of-two-terms-sym-posi}\\
    \min \bigl\{\frac{A+B}{A}-\frac{B}{A}\cdot c, \;\bigl(\frac{A+B}{A+B+1}\bigr)\bigl(1+\frac{1+A}{A+B}\cdot c\bigr) \bigr\}
    &\le \frac{4}{3}\bigl(1-\frac{1}{3(A+B)+4}\bigr).
    \label{eqn:min-of-two-terms-monotone}
    \end{align}
\end{prop}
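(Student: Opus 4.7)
The strategy is to bound each minimum by a convex combination of its two arguments, chosen so that the combination is independent of $c$. Write $h_1(c) \coloneqq \frac{A+B}{A} - \frac{B}{A}c$ (the common first argument in both minima), and let $h_2(c)$, $h_3(c)$ denote the second arguments in~\eqref{eqn:min-of-two-terms-sym-posi} and~\eqref{eqn:min-of-two-terms-monotone}, which expand as affine functions with slopes $\frac{A-B+1}{A+B+1}$ and $\frac{A+1}{A+B+1}$ respectively. Since $\min(u,v) \le \alpha u + (1-\alpha) v$ for every $\alpha \in [0,1]$, picking $\alpha$ that zeroes out the coefficient of $c$ in the combination will yield a uniform upper bound on the minimum for all $c \ge 0$.

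For \eqref{eqn:min-of-two-terms-monotone}, matching the $c$-coefficients gives $\alpha = \frac{A(A+1)}{A^2+AB+B^2+A+B}$, which lies in $(0,1)$ whenever $A, B \ge 1$. Evaluating the combination at $c=0$ produces the constant value $\frac{(A+B)(A+B+1)}{A^2+AB+B^2+A+B}$. Canceling the factor $(A+B+1)$ in the required comparison with $\frac{4(A+B+1)}{3(A+B)+4}$ and cross-multiplying, the bound reduces to $3(A+B)^2 \le 4(A^2+AB+B^2)$, i.e.\ $2AB \le A^2 + B^2$, which is just AM-GM.

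For \eqref{eqn:min-of-two-terms-sym-posi}, the same procedure produces $\alpha = \frac{A(A-B+1)}{A^2+B^2+A+B}$, which is valid (i.e.\ lies in $[0,1]$) only when $A \ge B - 1$. In this regime, the constant value is $\frac{(A+B)(A+B+1)}{A^2+B^2+A+B}$, and the comparison with $\frac{2(A+B+1)}{A+B+2}$ again reduces, after canceling $(A+B+1)$ and cross-multiplying, to $2AB \le A^2 + B^2$. The remaining case $A \le B-2$ is not covered by the convex-combination argument and will be handled separately: in this regime the slope $\frac{A-B+1}{A+B+1}$ of $h_2$ is negative, so $\min(h_1,h_2)(c) \le h_2(c) \le h_2(0) = \frac{2(A+B)}{A+B+1}$ for every $c \ge 0$, and the resulting bound $\frac{2(A+B)}{A+B+1} \le \frac{2(A+B+1)}{A+B+2}$ simplifies to $(A+B)(A+B+2) \le (A+B+1)^2$, i.e.\ $0 \le 1$.

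The only substantive point is noticing that the slope of $h_2$ in \eqref{eqn:min-of-two-terms-sym-posi} can be negative, requiring the separate easy argument for $A \le B-2$; otherwise the entire proposition is a mechanical check that each convex combination is well-defined and that both bounds reduce to $2AB \le A^2 + B^2$.
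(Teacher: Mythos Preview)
Your proof is correct. The paper itself does not prove this proposition; it merely cites it as a result implicit in the reference \cite{ppskarthikwang}. Your convex-combination approach, choosing $\alpha$ to eliminate the $c$-dependence and then reducing both bounds to $2AB \le A^2+B^2$, together with the separate handling of the case $A \le B-2$ in \eqref{eqn:min-of-two-terms-sym-posi} where $h_2$ has negative slope, gives a clean self-contained argument.
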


We are now ready to bound the approximation factor of Algorithm~\ref{algo:stkpartition} for posimodular submodular functions and monotone submodular functions.

\begin{thm}\label{thm:mono-approx-factor}
Let $f$ be a submodular function, $k\geq 2$ be an integer, and let $\cP^*$ denote an optimal $\{s,t\}$-separating $k$-partition. Let $\cP$ denote the partition returned by Algorithm~\ref{algo:stkpartition}. 
\begin{enumerate}\itemsep0em 
    \item If $f$ is posimodular, then $f(\mcp)\le 2f(\mcp^*)$. \label{eq:posi}
    \item If $f$ is monotone, then $f(\mcp)\le \frac{4}{3}f(\mcp^*)$. \label{eq:mon}
\end{enumerate}
\end{thm}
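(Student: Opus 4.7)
The plan is to reduce both claims to the numeric inequalities already packaged in Proposition~\ref{prop:mono-posi-calculations}. First I would handle the round case: if the $\{s,t\}$-separating principal partition sequence $\mcp_1,\ldots,\mcp_\ell$ contains some $\mcp_j$ with $|\mcp_j|=k$, Algorithm~\ref{algo:stkpartition} returns that partition, and Lemma~\ref{claim:exact-k} gives $f(\mcp)=f(\mcp_j)\le f(\mcp^*)$, which is within both claimed factors. From now on I assume no such index exists, so there is a unique $i\in\{2,\ldots,\ell\}$ with $|\mcp_{i-1}|<k<|\mcp_i|$; this is the index used by Algorithm~\ref{algo:stkpartition}.

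To set up the reduction, I would introduce $A\coloneq k-|\mcp_{i-1}|$ and $B\coloneq |\mcp_i|-k$ -- both positive integers with $A+B=|\mcp_i|-|\mcp_{i-1}|$ -- together with $c\coloneq f(\mcp_{i-1})/f(\mcp^*)$, assuming $f(\mcp^*)>0$. The degenerate case $f(\mcp^*)=0$ is harmless: Lemma~\ref{lem:opt-lower-bounds}\eqref{bound:2} plus nonnegativity of $f$ forces $f(\mcp_{i-1})=0$, and then Lemma~\ref{lem:intermediate-upper-bounds}\eqref{refine:1} gives $f(\mcp)\le 0$. Assuming $f(\mcp^*)>0$ and dividing by it, the three bounds in Lemma~\ref{lem:intermediate-upper-bounds} translate, after recognizing that $2k-|\mcp_i|-|\mcp_{i-1}|+1=A-B+1$ and $k-|\mcp_{i-1}|+1=A+1$, into the unconditional bound $f(\mcp)/f(\mcp^*)\le \tfrac{A+B}{A}-\tfrac{B}{A}c$, the posimodular bound $f(\mcp)/f(\mcp^*)\le \tfrac{A+B}{A+B+1}\bigl(2+\tfrac{A-B+1}{A+B}c\bigr)$, and the monotone bound $f(\mcp)/f(\mcp^*)\le \tfrac{A+B}{A+B+1}\bigl(1+\tfrac{A+1}{A+B}c\bigr)$. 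These are exactly the two expressions whose minimum is controlled by Proposition~\ref{prop:mono-posi-calculations}.

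The proof is then finished by directly invoking Proposition~\ref{prop:mono-posi-calculations}: inequality~\eqref{eqn:min-of-two-terms-sym-posi} gives $f(\mcp)/f(\mcp^*)\le 2\bigl(1-\tfrac{1}{A+B+2}\bigr)\le 2$ in the posimodular case, and inequality~\eqref{eqn:min-of-two-terms-monotone} gives $f(\mcp)/f(\mcp^*)\le \tfrac{4}{3}\bigl(1-\tfrac{1}{3(A+B)+4}\bigr)\le \tfrac{4}{3}$ in the monotone case. There is no real obstacle once Lemmas~\ref{lem:alg-upper-bounds} and~\ref{lem:intermediate-upper-bounds} and the numeric Proposition~\ref{prop:mono-posi-calculations} are in hand; the argument is essentially bookkeeping, with the only care being the algebraic identification of $A,B,c$ and the separate disposal of the trivial $f(\mcp^*)=0$ case before dividing.
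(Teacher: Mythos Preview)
Your proposal is correct and follows essentially the same approach as the paper: set $A=k-|\mcp_{i-1}|$, $B=|\mcp_i|-k$, $c=f(\mcp_{i-1})/f(\mcp^*)$, translate the bounds of Lemma~\ref{lem:intermediate-upper-bounds} into the expressions appearing in Proposition~\ref{prop:mono-posi-calculations}, and apply that proposition. You are even a bit more explicit than the paper in dispatching the case where some $|\mcp_j|=k$ via Lemma~\ref{claim:exact-k} before invoking the $A,B,c$ machinery; the paper handles that case only through the standing assumption set up prior to Lemma~\ref{lem:opt-lower-bounds}.
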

\begin{proof}
First, if $f(\mcp^*)=0$, then by Lemma~\ref{lem:opt-lower-bounds}\eqref{bound:2} we also have $f(\mcp_{i-1})\le f(\mcp^*)=0$, and hence $f(\mcp)=0$ by Lemma~\ref{lem:intermediate-upper-bounds}\eqref{refine:1}. Consequently, the returned partition is optimal. We may therefore assume that $f(\mcp^*)>0$. Define $c\coloneq f(\mcp_{i-1})/f(\mcp^*)$, $A\coloneq k-|\mcp_{i-1}|$, and $B\coloneq |\mcp_i|-k$. Note that $A,B\ge1$.

Assume first that $f$ is posimodular. Then, the upper bounds of Lemma~\ref{lem:intermediate-upper-bounds} can be rewritten as 
\begin{align*}
        \frac{f(\mcp)}{f(\mcp^*)}
        &\le \min \bigl\{\frac{A+B}{A}-\frac{B}{A}\cdot c, \;\bigl(\frac{A+B}{A+B+1}\bigr)\bigl(2+\frac{A-B+1}{A+B}\cdot c\bigr) \bigr\}&& \\
    &\le 2\bigl(1-\frac{1}{A+B+2}\bigr) && \text{\hfill \hspace{-2cm}(by Proposition~\ref{prop:mono-posi-calculations})}\\
    &\le 2\bigl(1-\frac{1}{n}\bigr), && \text{\hfill \hspace{-2cm}(since $A+B=|\mcp_i|-|\mcp_{i-1}|\leq n-2$)}
\end{align*}
proving \eqref{eq:posi}.

Now suppose that $f$ is monotone. Then, the upper bounds of Lemma~\ref{lem:intermediate-upper-bounds} can be rewritten as 
\begin{align*}
        \frac{f(\mcp)}{f(\mcp^*)}
        &\le \min \bigl\{\frac{A+B}{A}-\frac{B}{A}\cdot c, \;\bigl(\frac{A+B}{A+B+1}\bigr)\bigl(1+\frac{1+A}{A+B}\cdot c\bigr) \bigr\} &&\\
    &\le\frac{4}{3}\bigl(1-\frac{1}{3(A+B)+4}\bigr)&& \text{\hfill \hspace{-2cm}(by Proposition~\ref{prop:mono-posi-calculations})}\\
    &\leq \frac{4}{3}\bigl(1-\frac{1}{3n-2}\bigr),&& \text{\hfill \hspace{-2cm}(since $A+B=|\mcp_i|-|\mcp_{i-1}|\leq n-2$)}
\end{align*}
proving \eqref{eq:mon}.
\end{proof}

\section{Hypergraph Orientations}
\label{sec:hypergraph}

In this section, we present an application of the polynomial-time computability of $\{s,t\}$-separating principal partition sequence to hypergraph orientations. We recall that a \emph{directed hypergraph} $\dG=(V, E, \head\colon E\rightarrow V)$ is specified by a vertex set $V$, hyperedge set $E$ where each $e\in E$ is a subset of $V$, and a function $\head\colon E\rightarrow V$ with the property that $\head(e)\in e$ for each $e\in E$. For a subset $U\subseteq V$, we define $\delta^{in}_{\dG}(U)\coloneq\{e\in E\colon \head(e)\in U, e\setminus U\neq \emptyset\}$ and the function $d^{in}_{\dG}\colon 2^V\rightarrow\bZ$  defined by $d^{in}_{\dG}(U)\coloneq|\delta^{in}_{\dG}(U)|$. It is well-known that the function $d^{in}_{\dG}$ is submodular. By Menger's theorem, $\dG$ is $k$-hyperarc-connected if and only if $d^{in}_{\dG}(U)\ge k\ \forall\ \emptyset\neq U\subsetneq V$. Moreover, $\dG$ is \textit{$(k, (s,t),\ell)$-hyperarc-connected} if and only if the following two conditions hold:
\begin{align*}
    d^{in}_{\dG}(U)&\ge k\ \forall\ \emptyset\neq U\subsetneq V,\\
    d^{in}_{\dG}(U)&\ge \ell\ \forall\ t\in U\subseteq V-s.
\end{align*}

Frank, Kir\'{a}ly, and Kir\'{a}ly gave a complete characterization for the existence of an orientation $\dG$ of a given hypergraph $G$ with specified vertices $s$ and $t$ such that (1) $\dG$ is $k$-hyperarc-connected, (2) $\dG$ has $k_1$ hyperedge-disjoint paths from $s$ to $t$ and (3) $\dG$ has $k_2$ hyperedge-disjoint paths from $t$ to $s$, where $k_1, k_2\ge k$. Lemma \ref{lem:k1-k2-orientation} in the appendix shows that this problem is equivalent to \kstellConnOrient. 
Frank, Kir\'{a}ly, and Kir\'{a}ly~\cite[Theorem 5.1]{tamas-frank-zoli} showed the following characterization for the existence of a $(k, (s,t),\ell)$-hyperarc-connected orientation. 

\begin{thm}[Frank, Király, Király]\label{thm:st-orientation}
    Let $G = (V, E)$ be a hypergraph, $s, t \in V$, and $k, \ell\in \Z_{\ge 0}$. The hypergraph $G$ has a $(k,(s,t),\ell)$-hyperarc-connected orientation if and only if
\begin{equation} 
|\delta_G(\cP)| \geq \sum_{X\in \cP} p^{s,t}_{k, \ell}(X) \quad \text{for all partition }\cP \text{ of } V, \label{eq:st-partition-connectivity}
\end{equation}
where $\delta_G(\cP)$ is the set of hyperedges in $G$ that intersect at least two parts of $\mcp$ and $p^{s,t}_{k, \ell}\colon 2^V\rightarrow \R$ is defined as 
\begin{equation}\label{eq:conditions}
    p^{s,t}_{k, \ell}(X) = \begin{cases}
    0 \hfill &\text{ if }X=V\text{ or }X=\emptyset, \\
    \max\{k,\ell\}  &\text{ if }t\in X\subseteq V-s,\\
    k \hfill &\text{ otherwise}.
\end{cases} 
\end{equation}
\end{thm}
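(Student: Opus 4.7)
The plan is to prove the two directions of the characterization separately.

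For necessity, suppose $\dG$ is a $(k,(s,t),\ell)$-hyperarc-connected orientation of $G$, so that $d^{in}_{\dG}(X)\ge p^{s,t}_{k,\ell}(X)$ for every $\emptyset\ne X\subsetneq V$. Fix any partition $\cP$ of $V$. Every hyperedge $e\notin\delta_G(\cP)$ is contained in a single part and contributes $0$ to $\sum_{X\in\cP} d^{in}_{\dG}(X)$, whereas every $e\in\delta_G(\cP)$ has its head $\head(e)$ in exactly one part $X\in\cP$ with $e\setminus X\ne\emptyset$, contributing exactly $1$ to $d^{in}_{\dG}(X)$ and $0$ to the in-degrees of all other parts. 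Summing yields $|\delta_G(\cP)|=\sum_{X\in\cP} d^{in}_{\dG}(X)\ge\sum_{X\in\cP} p^{s,t}_{k,\ell}(X)$.

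The substantive direction is sufficiency, which I would establish by induction via a hypergraph splitting-off argument in the spirit of Mader and of Frank--Király--Király. A natural inductive measure is $\sum_{e\in E}(|e|-1)$, so each step either deletes a short hyperedge or trims a longer one at a non-terminal vertex. The key lemma I would aim for states that whenever $G$ satisfies the partition inequality \eqref{eq:st-partition-connectivity} and is not a trivial base case (e.g.\ $V=\{s,t\}$), there exist $v\in V\setminus\{s,t\}$ and a hyperedge $e\ni v$ such that the hypergraph $G'$ obtained by deleting $v$ from $e$ still satisfies the partition inequality with the same demand $p^{s,t}_{k,\ell}$. By induction $G'$ admits a $(k,(s,t),\ell)$-hyperarc-connected orientation, which lifts to $G$ by declaring $v$ to be the head of $e$. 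The base case reduces to an application of Menger's theorem.

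The hard part is proving this splitting-off lemma. The standard route is to analyze the family of \emph{tight} sets $X$ (those saturating the inequality locally, i.e.\ $d^{in}_G(X)=p^{s,t}_{k,\ell}(X)$) and to perform uncrossing arguments showing that some admissible $(v,e)$ must exist. The principal obstacle is that $p^{s,t}_{k,\ell}$ is \emph{not} crossing supermodular: if $X$ contains $t$ but not $s$, $Y$ contains $s$ but not $t$, $X\cap Y\ne\emptyset$, and $X\cup Y\subsetneq V$, then $p^{s,t}_{k,\ell}(X)+p^{s,t}_{k,\ell}(Y)=\max\{k,\ell\}+k$ while $p^{s,t}_{k,\ell}(X\cap Y)+p^{s,t}_{k,\ell}(X\cup Y)=2k$, so the uncrossing inequality can fail strictly whenever $\ell>k$. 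One must therefore partition the tight sets into those that interact with the $\{s,t\}$-cut structure and those that do not, and handle the two types by separate uncrossing operations — precisely the structural dichotomy between refinements and $\{s,t\}$-refinements developed in Section~\ref{sec:stsep}, which is what makes the $\{s,t\}$-separating principal partition sequence the right algorithmic handle for turning this sufficiency proof into the polynomial-time algorithm the paper seeks.
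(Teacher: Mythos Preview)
This theorem is not proved in the paper; it is quoted from Frank--Kir\'aly--Kir\'aly and used as a black box, with the paper's own contribution being the algorithmic Theorem~\ref{thm:algorithmic-orientation}. From the proof of Lemma~\ref{lem:st-orientation-poly-time} and Proposition~\ref{prop:vector_covering} one can read off the FKK approach, and it does \emph{not} proceed by splitting-off. Instead it constructs an in-degree vector directly: for each vertex $v$, set $x_v$ to be the minimum slack $|\delta_G(\cP)|-p^{s,t}_{k,\ell}(\cP)$ over partitions having $\{v\}$ as a singleton part, argue via uncrossing of tight \emph{partitions} that $x(V)=|E|$ and $x(Y)\ge i_G(Y)+p^{s,t}_{k,\ell}(Y)$ for all $Y$, and then invoke the orientation lemma to realise $x$ as an in-degree vector. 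Your trimming/induction route is a genuinely different strategy, closer to how the \emph{graph} version is handled via Mader's splitting-off (as the paper mentions in the introduction).

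Your necessity argument is correct. For sufficiency, there is a concrete error in the lifting step: if in $G'$ the hyperedge $e'=e\setminus\{v\}$ has head $h\in e'$, and you lift by declaring $v$ the head of $e$ in $G$, then every set $X$ with $h\in X$, $v\notin X$, and $e'\setminus X\ne\emptyset$ loses an entering hyperarc, so $d^{in}$ can drop below $p^{s,t}_{k,\ell}(X)$. The correct lifting keeps $h$ as the head of $e$, which can only increase in-degrees. Beyond this fixable slip, the real gap is that the trimming lemma is asserted, not proved: you correctly identify that $p^{s,t}_{k,\ell}$ fails crossing supermodularity on pairs separating $s$ and $t$, but the proposed resolution (``handle the two types by separate uncrossing operations'') is precisely the hard part and is not carried out. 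The FKK argument sidesteps this set-level difficulty by uncrossing at the level of tight partitions, which is also why the partition-level machinery of Section~\ref{sec:stsep} is what the paper uses to make the proof algorithmic---not as an ingredient in the existence proof itself.
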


The proof of Theorem~\ref{thm:st-orientation} given by Frank, Kir\'{a}ly, and Kir\'{a}ly~\cite{tamas-frank-zoli} is not constructive. That is, the proof does not lead to a polynomial-time algorithm to verify whether a given hypergraph $G$ with specified vertices $s,t$ satisfies \eqref{eq:st-partition-connectivity} and if so, then find a $(k, (s,t),\ell)$-hyperarc-connected orientation of $G$. 
We resolve this issue by giving a polynomial-time algorithm in Section \ref{sec:orientation-algo}. In Section~\ref{sec:orientation-min-max}, we address the optimization problems of maximizing $k$ for a given $\ell$ (or maximizing $\ell$ for a given $k$) so that $G$ has a $(k, (s,t),\ell)$-hyperarc-connected orientation, give a min-max relation, and show that the corresponding minimization problem is solvable in polynomial time via our results on $\{s,t\}$-separating principal partition sequence. 
We use the following result due to Frank, Kir\'aly, and Kir\'aly~\cite{tamas-frank-zoli}. We note that this result, in particular, gives a polynomial-time algorithm for \kstellConnOrient for the case $\ell = k$.

\begin{thm}[Frank, Király, Király]\label{thm:FKK-partition-connectivity}
    Let $G=(V, E)$ be a hypergraph and $k\in \Z_{\ge 0}$. Then, $G$ has a $k$-hyperarc-connected orientation if and only if
    \begin{align}
    |\delta_G(\mcp)|&\ge k|\mcp|\ \text{for all partition } \mcp \text{ of }V, \label{ineq:partition-connectivity}
    \end{align}
    where $\delta_G(\mcp)$ is the set of hyperedges in $G$ that intersect at least two parts of $\mcp$. Moreover, there exists a polynomial-time algorithm to verify whether a given hypergraph satisfies \eqref{ineq:partition-connectivity}, and if so, then find a $k$-hyperarc-connected orientation of the given hypergraph. 
\end{thm}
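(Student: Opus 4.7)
The plan is to treat the characterization and the algorithmic statement separately, handling necessity, sufficiency, verification, and construction in turn.

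Necessity is immediate. If $\dG$ is a $k$-hyperarc-connected orientation and $\mcp$ is any partition of $V$, then $d^{in}_{\dG}(P)\ge k$ for every $P\in\mcp$ with $\emptyset\neq P\subsetneq V$. Each hyperedge $e\in E$ contributes to $d^{in}_{\dG}(P)$ for at most one part (the one containing $\head(e)$, and only if $e$ leaves that part), and every such $e$ lies in $\delta_G(\mcp)$. Summing yields $|\delta_G(\mcp)|\ge \sum_{P\in\mcp} d^{in}_{\dG}(P)\ge k|\mcp|$.

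Sufficiency is the heart of the theorem. I would prove it by induction on the hypergraph via a hypergraph splitting-off argument in the spirit of Nash-Williams's and Mader's classical theorems for graphs. Given a hypergraph satisfying the partition condition, the inductive step identifies a single hyperedge together with a choice of head that preserves feasibility for the remaining (partially oriented) instance. The existence of such a choice follows from an uncrossing argument on the family of tight sets, leveraging submodularity of the in-degree function of the partially oriented hypergraph. Equivalently, one can invoke Frank's orientation theorem for crossing supermodular demand functions applied to the constant demand $p(X)=k$ on $\emptyset\neq X\subsetneq V$ (which is easily checked to be crossing supermodular), yielding the existence part directly from the partition inequality.

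For the algorithm, verification of the partition condition reduces to submodular minimization over partitions. Since the parts of a partition are disjoint, $|\delta_G(\mcp)|=|E|-\sum_{P\in\mcp}|E_G[P]|$, where $E_G[P]=\{e\in E\colon e\subseteq P\}$. Defining $\phi(X)\coloneqq -|E_G[X]|$ yields a submodular function: $|E_G[X]|$ is supermodular, since any hyperedge contained in both $A$ and $B$ is automatically contained in both $A\cap B$ and $A\cup B$. The partition condition becomes $|E|+\phi(\mcp)-k|\mcp|\ge 0$ for every $\mcp$, so verification reduces to computing $\min_{\mcp}\{\phi(\mcp)-k|\mcp|\}$, which is exactly the task handled by Theorem~\ref{thm:min-g} in polynomial time. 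For construction, one implements the splitting-off sufficiency proof iteratively, using submodular-minimization oracles at each step to locate an admissible head for some hyperedge.

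The main obstacle is the constructive sufficiency: proving that an admissible splitting always exists and can be located in polynomial time. Unlike the graph case, where Mader's theorem supplies a single well-understood splitting operation, each hyperedge admits multiple candidate heads, so both the combinatorial argument identifying a feasible choice and the accompanying polynomial-time search over the obstructing tight partitions are substantially more delicate.
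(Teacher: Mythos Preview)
The paper does not prove this theorem: it is stated as a known result of Frank, Kir\'aly, and Kir\'aly and cited as such, with no proof given. So there is no in-paper argument to compare your proposal against.

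That said, a few remarks on your sketch. Your necessity argument is fine (in fact $|\delta_G(\mcp)|=\sum_{P\in\mcp} d^{in}_{\dG}(P)$ exactly, not merely $\ge$, since each hyperedge in $\delta_G(\mcp)$ contributes to precisely one summand). Your sufficiency paragraph is really a pointer to the original Frank--Kir\'aly--Kir\'aly proof rather than an independent argument; invoking Frank's orientation theorem for crossing supermodular demands is correct in spirit, but you would still need to check that the partition inequality \eqref{ineq:partition-connectivity} is exactly the feasibility condition that theorem outputs for the constant demand $p\equiv k$.

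For the verification algorithm, your route via $\phi(X)=-|E_G[X]|$ works, though your one-line justification of supermodularity is incomplete: you need the observation that a hyperedge contained in neither $A$ nor $B$ may still be contained in $A\cup B$, which is what makes the right-hand side at least the left. The paper itself (in the proof of Lemma~\ref{lem:st-partition-connectivity-poly-time}) takes a slightly different but equivalent path: it fixes an arbitrary orientation $\dG$, uses the identity $|\delta_G(\mcp)|=\sum_{P\in\mcp} d^{in}_{\dG}(P)$, and then minimizes $d^{in}_{\dG}(\mcp)-k|\mcp|$ via Theorem~\ref{thm:min-g}. Both reductions land on the same submodular-partition-minimization primitive.
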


\subsection{Algorithm for \texorpdfstring{$(k,(s,t),\ell)$}{(k,(s,t),l)}-Hyperarc-Connected Orientation Problems}
\label{sec:orientation-algo}

In this section, we give a polynomial-time algorithm to verify whether a given hypergraph $G$ with specified vertices $s,t$ satisfies \eqref{eq:st-partition-connectivity} and if so, then find a $(k, (s,t),\ell)$-hyperarc-connected orientation of $G$. 

\begin{thm}\label{thm:algorithmic-orientation}
    Let $G=(V, E)$ be a hypergraph with specified vertices $s,t\in V$ and $k,\ell\in \Z_{\ge 0}$. 
    \begin{enumerate}\itemsep0em
        \item There exists a polynomial-time algorithm to verify if \eqref{eq:st-partition-connectivity} holds for every partition $\cP$ of $V$ and if not, then return a partition $\cP$ for which \eqref{eq:st-partition-connectivity} is violated. 
        \item If \eqref{eq:st-partition-connectivity} holds for every partition $\mcp$ of $V$, then there exists a polynomial-time algorithm to find a $(k,(s,t),\ell)$-hyperarc-connected orientation. 
    \end{enumerate}
\end{thm}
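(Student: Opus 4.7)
My plan treats the two parts separately.

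For Part 1, I reduce \eqref{eq:st-partition-connectivity} to submodular partition minimizations amenable to the paper's tools. The value $p^{s,t}_{k,\ell}(X)$ equals $k$ on every non-empty proper $X \subseteq V$, except when $t \in X \subseteq V \setminus \{s\}$, in which case it equals $\max\{k,\ell\}$. Thus \eqref{eq:st-partition-connectivity} splits into (i) $|\delta_G(\mcp)| \geq k|\mcp|$ for every partition $\mcp$, which Theorem~\ref{thm:FKK-partition-connectivity} handles in polynomial time; and (ii) when $\ell > k$, $|\delta_G(\mcp)| \geq k|\mcp| + (\ell - k)$ for every $\{s,t\}$-separating partition $\mcp$. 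For (ii), I use the monotone submodular function $g_G(X) := |\{e \in E \colon e \cap X \neq \emptyset\}|$: since hyperedges contained in $X$ number $|E| - g_G(V \setminus X)$, we have $|\delta_G(\mcp)| = \sum_{X \in \mcp} g_G(V \setminus X) - (|\mcp|-1)|E|$. After rearrangement, (ii) becomes $\sum_{X \in \mcp} h(X) \geq (\ell-k) - |E|$, where $h(X) := g_G(V\setminus X) - |E| - k$ is submodular. Theorem~\ref{thm:min-g} applied with $\lambda = 0$ then produces a minimizer of $\sum_{X \in \mcp} h(X)$ over $\{s,t\}$-separating partitions in polynomial time; comparing the minimum with $(\ell-k) - |E|$ completes the verification, and the minimizer is a violating partition whenever \eqref{eq:st-partition-connectivity} fails.

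For Part 2, assuming \eqref{eq:st-partition-connectivity} holds, my plan is to orient the hyperedges iteratively, committing one head at a time. Maintain a set $E_1 \subseteq E$ of already-oriented hyperedges with partial orientation $\dG_1$, starting from $E_1 = \emptyset$. At each step, pick an unoriented $e \in E \setminus E_1$ and try each candidate head $v \in e$; commit $\head(e) := v$ whenever the residual orientation problem on $E \setminus (E_1 \cup \{e\})$ remains feasible, i.e.\ there exists an orientation whose combined in-degree with the extended partial orientation dominates $p^{s,t}_{k,\ell}$ everywhere. Existence of a feasible head at every step is guaranteed inductively by Theorem~\ref{thm:st-orientation}: the current partial orientation always extends to some full $(k,(s,t),\ell)$-hyperarc-connected orientation of $G$, and that orientation supplies a compatible head for $e$. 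Termination in $O(|E| \cdot \max_e |e|)$ feasibility tests is immediate.

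The main obstacle is executing each residual feasibility test in polynomial time. Writing $E_1' := E_1 \cup \{e\}$ and $\dG_1'$ for the partial orientation extended by $\head(e) := v$, the residual requirement is $r'(X) := \max\{0, p^{s,t}_{k,\ell}(X) - d^{in}_{\dG_1'}(X)\}$, and the partition condition for feasibility reads $|\delta_{E \setminus E_1'}(\mcp)| \geq \sum_{X \in \mcp} r'(X)$ for every partition $\mcp$. A useful identity I plan to exploit is $\sum_{X \in \mcp} d^{in}_{\dG_1'}(X) = |\delta_{E_1'}(\mcp)|$, which holds because each committed $e'$ contributes $1$ to $d^{in}$ of the unique part containing $\head(e')$ if and only if $e' \in \delta_{E_1'}(\mcp)$. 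This identity reduces the \emph{non-truncated} partition condition for the residual problem to the original \eqref{eq:st-partition-connectivity}, which we have already verified. The truncation $\max\{0, \cdot\}$ is the key remaining difficulty; my plan is to cast the residual test as a submodular flow problem in Frank's general hypergraph orientation framework, whose feasibility is decidable in polynomial time, using Theorems~\ref{thm:min-g} and~\ref{thm:FKK-partition-connectivity} as the submodular partition-minimization subroutines. Running polynomially many such tests per iteration across $O(|E|)$ iterations yields the desired orientation in polynomial time.
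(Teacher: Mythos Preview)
Your Part~1 is essentially correct and close in spirit to the paper's argument. The paper takes an arbitrary orientation $\dG$ of $G$, uses the identity $|\delta_G(\mcp)|=\sum_{P\in\mcp}d^{in}_{\dG}(P)$, and then directly invokes Theorem~\ref{thm:min-g} twice: once for arbitrary partitions (with $\lambda=k$) and once for $\{s,t\}$-separating partitions. Your rewriting via $g_G(V\setminus X)$ arrives at an equivalent submodular minimization, just with a bit more algebra. One minor issue: Theorem~\ref{thm:FKK-partition-connectivity} as stated only \emph{verifies} the partition condition; it does not promise to return a violating partition. You should invoke Theorem~\ref{thm:min-g}\eqref{eq:algo-partition} for case~(i) as well if you want a witness.

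Your Part~2 has a genuine gap. The iterative scheme needs a polynomial-time test for ``can the current partial orientation $\dG_1'$ be extended to a full $(k,(s,t),\ell)$-hyperarc-connected orientation?'' You correctly identify the obstacle---the residual lower bound $r'(X)=\max\{0,\,p^{s,t}_{k,\ell}(X)-d^{in}_{\dG_1'}(X)\}$---but you do not actually solve it. Appealing to ``Frank's general hypergraph orientation framework'' does not work directly: that framework gives partition/co-partition characterizations for \emph{crossing supermodular} $p$, and $p^{s,t}_{k,\ell}$ is not crossing supermodular (take $X\ni s$, $Y\ni t$ crossing with $\ell>k$). Nor is $r'$ of the form $p^{s,t}_{k',\ell'}$, so you cannot recurse on Theorem~\ref{thm:st-orientation}. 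Without a concrete polynomial-time characterization of residual feasibility, the iteration cannot be executed.

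The paper's route for Part~2 avoids iterative feasibility testing entirely. It computes, for each $v\in V$, the value
\[
x_v\coloneqq \min\bigl\{|\delta_G(\cP)|-p^{s,t}_{k,\ell}(\cP)\colon \cP\text{ is a partition of }V\text{ with }\{v\}\in\cP\bigr\},
\]
which reduces (after peeling off $\{v\}$) to the Part~1 minimization on $G-v$ and hence is polynomial by Lemma~\ref{lem:st-partition-connectivity-poly-time}. Frank--Kir\'aly--Kir\'aly show (non-constructively, via uncrossing tight partitions) that under~\eqref{eq:st-partition-connectivity} this vector satisfies $x(V)=|E|$ and $x(Y)\ge i_G(Y)+p^{s,t}_{k,\ell}(Y)$ for all $Y$. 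The paper then invokes Proposition~\ref{prop:vector_covering} to produce, in polynomial time, an orientation with in-degree vector exactly $x$; this orientation automatically covers $p^{s,t}_{k,\ell}$. The novelty is that the target in-degrees themselves are computable via the $\{s,t\}$-separating partition minimization of Theorem~\ref{thm:min-g}, so the only non-constructive piece of the original FKK proof is bypassed.
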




We first address the problem of verifying whether a given hypergraph $G$ with specified vertices $s,t$ satisfies \eqref{eq:st-partition-connectivity}. 
The following lemma proves the first part of Theorem~\ref{thm:algorithmic-orientation}. 

\begin{lem}\label{lem:st-partition-connectivity-poly-time}
    Given a hypergraph $G=(V, E)$  with specified vertices $s,t\in V$ and $k,\ell\in \Z_{\ge 0}$, and $p^{s,t}_{k, \ell}\colon 2^V\to \bR$ defined as in \eqref{eq:conditions}, there exists a polynomial-time algorithm to solve the following optimization problem: 
    \[
    \min\left\{|\delta_G(\mcp)|-p^{s,t}_{k, \ell}(\mcp)\colon \mcp \text{ is a partition of }V\right\}. 
    \]
\end{lem}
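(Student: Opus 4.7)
The plan is to cast the optimization in a form where Theorem~\ref{thm:min-g} applies. First I would define $f\colon 2^V\to \bR$ by $f(X)\coloneqq -|\{e\in E\colon e\subseteq X\}|$; a routine hyperedge-by-hyperedge check shows that $|\{e\in E\colon e\subseteq X\}|$ is supermodular, so $f$ is submodular. The key identity is $|\delta_G(\mcp)| = |E| + \sum_{X\in\mcp} f(X)$, which holds for every partition $\mcp$ of $V$, since each hyperedge either lies entirely inside exactly one part of $\mcp$ or crosses at least two parts and contributes to $|\delta_G(\mcp)|$.

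Next, using \eqref{eq:conditions}, I would observe that for any $\mcp$ with $|\mcp|\ge 2$ every part is a proper non-empty subset of $V$, and at most one part can satisfy $t\in X\subseteq V-s$, namely the part containing $t$ and only when $\mcp$ is $\{s,t\}$-separating. Hence $p^{s,t}_{k,\ell}(\mcp)=0$ if $\mcp=\{V\}$; $p^{s,t}_{k,\ell}(\mcp)=k|\mcp|$ if $|\mcp|\ge 2$ and $\mcp$ is not $\{s,t\}$-separating; and $p^{s,t}_{k,\ell}(\mcp)=k|\mcp|+(\max\{k,\ell\}-k)$ if $\mcp$ is $\{s,t\}$-separating. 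This splits the optimization into three cases: (i) $\mcp=\{V\}$ with objective $0$; (ii) $\{s,t\}$-separating $\mcp$ with objective $|\delta_G(\mcp)|-k|\mcp|-(\max\{k,\ell\}-k)$; and (iii) non-$\{s,t\}$-separating $\mcp$ with $|\mcp|\ge 2$ and objective $|\delta_G(\mcp)|-k|\mcp|$.

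For (ii), I invoke Theorem~\ref{thm:min-g}\eqref{eq:algo-st-partition} on $f$ with $\lambda=k$, which returns $\min\{f(\mcp)-k|\mcp|\colon \mcp \text{ is }\{s,t\}\text{-separating}\}$ together with a minimizer; adding $|E|$ and subtracting $(\max\{k,\ell\}-k)$ yields the (ii)-value. For (iii), I contract $s$ and $t$ into a single vertex $w$ to obtain the hypergraph $G''=(V'',E'')$ on $V''=(V\setminus\{s,t\})\cup\{w\}$, where each hyperedge $e\in E$ becomes $e''\coloneqq (e\setminus\{s,t\})\cup\{w\}$ if $e\cap\{s,t\}\neq\emptyset$ and $e''\coloneqq e$ otherwise. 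Non-$\{s,t\}$-separating partitions of $V$ correspond bijectively to partitions of $V''$, preserving both the number of parts and the cut-hyperedge set, so minimizing $|\delta_{G''}(\mcp')|-k|\mcp'|$ over partitions of $V''$ reduces to applying Theorem~\ref{thm:min-g}\eqref{eq:algo-partition} to the analogously-defined submodular function $f''$ on $V''$. Returning the overall minimum across (i), (ii), and (iii), together with an achieving partition, solves the problem.

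The main obstacle is a subtle edge case in (iii): Theorem~\ref{thm:min-g}\eqref{eq:algo-partition} also considers the trivial partition $\{V''\}$, which corresponds to $\mcp=\{V\}$ (already handled by (i)) but whose value $f''(\{V''\})-k$ does not equal $0$ in general, so one cannot simply add $|E|$ to the theorem's output. To cleanly restrict (iii) to partitions of $V''$ with at least two parts, I would instead apply Theorem~\ref{thm:min-g}\eqref{eq:algo-st-partition} on $G''$ for each pair of distinct vertices $u,v\in V''$ and take the overall minimum; since any partition of $V''$ with at least two parts is $\{u,v\}$-separating for some such pair, this yields the correct minimum in polynomial time.
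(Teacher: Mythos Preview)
Your argument is correct. The paper takes a somewhat more streamlined route: instead of $f(X)=-i_G(X)$, it fixes an arbitrary orientation $\dG$ of $G$ and uses the submodular function $d^{in}_{\dG}$, observing that $|\delta_G(\mcp)|=\sum_{P\in\mcp} d^{in}_{\dG}(P)$ holds with no additive constant. The optimization is then reduced directly to the two subproblems $\min\{d^{in}_{\dG}(\mcp)-k|\mcp|\colon \mcp\text{ is a partition of }V\}$ and $\min\{d^{in}_{\dG}(\mcp)-k(|\mcp|-1)-\ell\colon \mcp\text{ is an }\{s,t\}\text{-separating partition of }V\}$, both handled by Theorem~\ref{thm:min-g}; this avoids your contraction of $s$ and $t$ and the enumeration over all vertex pairs in case~(iii). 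Your case analysis, on the other hand, is more careful about the trivial partition $\{V\}$ (where $p^{s,t}_{k,\ell}(V)=0$ rather than $k$), a boundary case the paper's reduction glosses over. Either submodular surrogate works; the paper's choice of $d^{in}_{\dG}$ keeps the constants tidier, while your approach is more explicit and self-contained.
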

\begin{proof}
    Let $\dG$ be an arbitrary orientation of $G$. We observe that $|\delta_G(\mcp)|=\sum_{P\in \mcp}d^{in}_{\dG}(P)$ for every partition $\mcp$ of $V$.
    Therefore, the optimization problem is equivalent to the following: 
    \begin{align*}
        \min&\left\{d^{in}_{\dG}(\mcp)-p^{s,t}_{k, \ell}(\mcp)\colon \mcp \text{ is a partition of }V\right\}. 
    \end{align*}
    Based on the definition of the function $p^{s,t}_{k, \ell}$, the problem reduces to solving the following two problems. 
    \begin{align*}
        \min&\left\{d^{in}_{\dG}(\mcp)-k|\mcp|\colon \mcp \text{ is a partition of }V\right\}, \text{ and} \label{orientation-problem:partition}\\
        \min&\left\{d^{in}_{\dG}(\mcp)-k(|\mcp|-1)-\ell\colon \mcp \text{ is an }\{s,t\}\text{-separating partition of }V\right\}.
    \end{align*}
    Since, $d^{in}_{\dG}$ is submodular and we have a polynomial-time evaluation oracle for it, the latter two problems 
    can be solved in polynomial time by Theorem~\ref{thm:min-g}. 
\end{proof}

We now prove the second part of Theorem~\ref{thm:algorithmic-orientation}. We need the following lemma that is implicit in~\cite{tamas-frank-zoli}.

\begin{prop}[Frank, Király, Király]\label{prop:vector_covering}
   Let $G=(V,E)$ be a hypergraph and $x\in \Z^V$ a vector such that $x(V) = |E|$ and $x(Y)\ge i_G(Y)$ for every $Y\subseteq V$, where $i_G(Y)$ is the number of hyperedges in $G$ that are contained in $Y$. Then, there exists a polynomial-time algorithm to find an orientation $\dG$ of $G$ with indegree vector $x$. 
\end{prop}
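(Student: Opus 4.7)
The plan is to reformulate the problem as a bipartite perfect matching problem and verify the required matching exists via Hall's theorem. Construct a bipartite graph $H$ with bipartition $(E, W)$, where $W$ is a multiset consisting of $x(v)$ copies of each $v\in V$. The hypothesis $x(V)=|E|$ gives $|W|=|E|$; applied with $Y=\{v\}$, the hypothesis also gives $x(v)\ge i_G(\{v\})\ge 0$, so $x$ is nonnegative and each $x(v)\le x(V)=|E|$, so $H$ has polynomially many vertices. Connect the node representing $e\in E$ to every copy of each $v\in e$. A perfect matching of $H$ corresponds exactly to a function $\head\colon E\to V$ with $\head(e)\in e$ and $|\head^{-1}(v)|=x(v)$ for every $v\in V$, i.e., to an orientation $\dG$ of $G$ with indegree vector $x$. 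It therefore suffices to show that $H$ admits a perfect matching and that it can be computed in polynomial time.

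To verify Hall's condition, fix any $S\subseteq E$ and set $U\coloneqq \bigcup_{e\in S}e$. Every $e\in S$ satisfies $e\subseteq U$, so $|S|\le i_G(U)$. Conversely, every copy of every $v\in U$ is a neighbor of some hyperedge in $S$, and every neighbor of $S$ in $H$ is such a copy, so $|N_H(S)|=x(U)$. The hypothesis $x(U)\ge i_G(U)$ then yields $|N_H(S)|=x(U)\ge i_G(U)\ge |S|$, establishing Hall's condition on the $E$-side. Since $|E|=|W|$, any matching saturating $E$ is a perfect matching of $H$.

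The matching itself, and hence the orientation, can be produced in polynomial time by any standard bipartite matching algorithm (e.g.\ Hopcroft--Karp), or equivalently by max-flow on the transportation network with source connected to each $e\in E$ by a unit-capacity arc, unit-capacity arcs $(e,v)$ for $v\in e$, and arcs $(v,\text{sink})$ of capacity $x(v)$: an integral max-flow of value $|E|$ exists by Hall's condition and directly encodes the head assignment. I do not anticipate a real obstacle here; the only mild subtlety is that the Hall-deficit optimizer for a subset $S\subseteq E$ is governed by $\bigcup_{e\in S}e$, and that reduction is essentially immediate. The proof simply repackages the hypothesis $x(Y)\ge i_G(Y)$ as the Hall condition for the associated bipartite matching.
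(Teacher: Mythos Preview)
Your proof is correct. The paper does not actually prove this proposition; it is stated as a result implicit in Frank--Kir\'aly--Kir\'aly~\cite{tamas-frank-zoli} and is used as a black box. Your bipartite-matching reduction is the standard way to see it: the Hall verification via $U=\bigcup_{e\in S}e$ is exactly right, since then $|S|\le i_G(U)\le x(U)=|N_H(S)|$, and the matching (equivalently, the integral transportation flow you describe) recovers the head assignment in polynomial time. This is essentially the classical degree-constrained orientation argument specialized to hypergraphs, so there is nothing to compare against in the present paper.
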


The following lemma concludes the proof of Theorem~\ref{thm:algorithmic-orientation}.

\begin{lem}\label{lem:st-orientation-poly-time}
    Let $G=(V, E)$ be a hypergraph with specified vertices $s,t\in V$ and $k,\ell\in \Z_{\ge 0}$ such that \eqref{eq:st-partition-connectivity} holds for every partition $\mcp$ of $V$. Then, there exists a polynomial-time algorithm to find a $(k, (s,t), \ell)$-hyperarc-connected orientation of $G$. 
\end{lem}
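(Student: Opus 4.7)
The plan is to construct an integer indegree vector $x \in \Z^V_{\ge 0}$ satisfying $x(V)=|E|$ and
\[
x(Y) \ge i_G(Y)+p^{s,t}_{k,\ell}(Y) \quad \text{for every } Y \subseteq V,
\]
and then to invoke Proposition \ref{prop:vector_covering} to realize $x$ as the indegree vector of an orientation $\dG$ of $G$. Such an orientation automatically satisfies $d^{in}_{\dG}(Y)=x(Y)-i_G(Y)\ge p^{s,t}_{k,\ell}(Y)$ for every $Y\subseteq V$, hence it is $(k,(s,t),\ell)$-hyperarc-connected. The existence of such an $x$ follows from Theorem~\ref{thm:st-orientation} together with the hypothesis of the lemma; the algorithmic content is to construct one in polynomial time.

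I plan to build $x$ by a greedy coordinate-by-coordinate assignment, in the spirit of the greedy algorithm for polymatroid base polytopes. Order the vertices $V=\{v_1,\ldots,v_n\}$ arbitrarily. At step $i$, with $x_{v_1},\ldots,x_{v_{i-1}}$ already fixed, set $x_{v_i}$ to be the smallest nonnegative integer for which there exists a completion of the remaining coordinates satisfying all constraints. The feasibility of the residual problem can be reformulated as a modified hypergraph orientation condition and checked using Lemma~\ref{lem:st-partition-connectivity-poly-time}, which already provides the required partition-minimization subroutine. After all coordinates are assigned, apply Proposition~\ref{prop:vector_covering} to obtain the orientation $\dG$.

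The main obstacle is to show that this greedy procedure is well-defined at every step, and that the modified partition condition encountered in the residual problem still falls within the scope of Lemma~\ref{lem:st-partition-connectivity-poly-time} (or a simple extension thereof). Both issues can be handled via an uncrossing argument on tight sets: were the greedy step to fail at some $v_i$, one could extract two tight sets whose uncrossing would produce a partition violating \eqref{eq:st-partition-connectivity}, contradicting the hypothesis. The refinement structure of $\{s,t\}$-separating partitions established in Theorem~\ref{thm:refinement} is used to control this uncrossing, in particular to identify whether the tight sets correspond to an ordinary refinement or to an $\{s,t\}$-refinement. Once the greedy procedure terminates with a valid $x$, Proposition~\ref{prop:vector_covering} concludes the construction in polynomial time.
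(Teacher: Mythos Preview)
Your high–level plan coincides with the paper's: construct an integer vector $x$ with $x(V)=|E|$ and $x(Y)\ge i_G(Y)+p^{s,t}_{k,\ell}(Y)$ for all $Y$, then invoke Proposition~\ref{prop:vector_covering}. The difference lies in how $x$ is produced. The paper does not run a sequential greedy. Instead, following Frank--Kir\'aly--Kir\'aly, it sets each coordinate \emph{independently} to
\[
x_v=\min\bigl\{|\delta_G(\mcp)|-p^{s,t}_{k,\ell}(\mcp)\colon \mcp\text{ is a partition of }V\text{ with }\{v\}\text{ a singleton part}\bigr\},
\]
observes that this equals $|\delta_G(v)|-p^{s,t}_{k,\ell}(\{v\})+\min\{|\delta_{G-v}(\mcp)|-p^{s,t}_{k,\ell}(\mcp)\}$ over partitions of $V\setminus\{v\}$, and hence reduces to a single call to Lemma~\ref{lem:st-partition-connectivity-poly-time} on the hypergraph $G-v$. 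Correctness of this particular $x$ (namely that $x(V)=|E|$ and $x(Y)\ge i_G(Y)+p^{s,t}_{k,\ell}(Y)$) is simply cited from~\cite{tamas-frank-zoli}; no new uncrossing is carried out here.

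Your sequential greedy is a different route, and as written it has gaps. First, the procedure as you describe it cannot ``fail'': if the full problem is feasible, then at every step there is by definition some value of $x_{v_i}$ admitting a completion, so the minimum exists. The real issue is not well-definedness but \emph{efficiency}: you must show how to compute this minimum in polynomial time. You assert that residual feasibility reduces to Lemma~\ref{lem:st-partition-connectivity-poly-time}, but after partially fixing coordinates the residual constraint system is no longer of the form $|\delta_H(\mcp)|\ge p^{s,t}_{k,\ell}(\mcp)$ for a hypergraph $H$; you would need to exhibit this reduction explicitly. Second, your appeal to Theorem~\ref{thm:refinement} is misplaced: that theorem concerns the structure of $g^{s,t}$-minimizers for a submodular function and says nothing about tight sets in the orientation feasibility system $x(Y)\ge i_G(Y)+p^{s,t}_{k,\ell}(Y)$. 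The function $i_G+p^{s,t}_{k,\ell}$ is not fully supermodular (take $s\in X\setminus Y$, $t\in Y\setminus X$ with $X\cup Y=V$), so a standard contra-polymatroid greedy argument does not apply off the shelf, and the uncrossing you invoke would have to be tailored to this specific $p$---which is precisely the work done in~\cite{tamas-frank-zoli} and what the paper cites rather than reproves.
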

\begin{proof}
We state the algorithm that is implicit in the proof of Frank, Kir\'{a}ly, and Kir\'{a}ly~\cite{tamas-frank-zoli} in Algorithm~\ref{alg:ori}. The correctness of the algorithm follows from the proof in~\cite{tamas-frank-zoli}. We will show here that the algorithm can be implemented to run in polynomial time.

\begin{algorithm}[!t]
\caption{An algorithm for finding a $(k, (s,t),\ell)$-hyperarc-connected orientation}\label{alg:ori}
\begin{algorithmic}[1]
    \Statex \textbf{Input:} A connected hypergraph $G=(V,E)$, two vertices $s, t\in V$, and $k,\ell\in \Z_{\ge 0}$ such that \eqref{eq:st-partition-connectivity} holds.
    \Statex \textbf{Output:} A $(k, (s,t),\ell)$-hyperarc-connected orientation of $G$.
    \For{each $v\in V$}
    \State\label{line:min1} Compute $x_v \coloneq\min\left\{|\delta_{G}(\cP)|-p^{s,t}_{k, \ell}(\cP)\colon \cP \text{ is a partition of }V \text{ with }\{v\}\text{ as a singleton part of }\cP\right\}.$
    \EndFor
    \State\label{line:last}\Return an orientation $\dG$ of $G$ with in-degree vector $x$.
\end{algorithmic}
\end{algorithm}

We now show that Step~\ref{line:min1} can be implemented in polynomial-time. Let $v\in V$. We observe that the problem of computing 
\[
x_v\coloneq\min\left\{|\delta_{G}(\cP)|-p^{s,t}_{k, \ell}(\cP)\colon \cP \text{ is a partition of }V \text{ with }\{v\}\text{ as a singleton part of }\mcp\right\}
\]
reduces to the problem of computing $y_v$ defined as follows, where the hypergraph $G-v$ is obtained from $G$ by deleting the vertex $v$ and all hyperedges incident to $v$:
\[
y_v\coloneq\min\left\{|\delta_{G-v}(\mcp)|-p^{s,t}_{k, \ell}(\mcp)\colon \mcp \text{ is a partition of }V\setminus \{v\}\right\}.
\]
Indeed, this is because $x_v=y_v+|\delta_G(v)|-p^{s,t}_{k, \ell}(\{v\})$. 
Observe that the problem of computing $y_v$ is solvable in polynomial time by Lemma~\ref{lem:st-partition-connectivity-poly-time}.

Next, we show that Step~\ref{line:last} can be implemented to run in polynomial time. Frank, Kir\'{a}ly, and Kir\'{a}ly~\cite{tamas-frank-zoli} showed that, assuming~\eqref{eq:st-partition-connectivity}, the vector $x\in\Z^V$ computed in the first three steps satisfies $x\ge 0$, $x(V)=|E|$, and $x(Y)\ge i_G(Y) + p^{s,t}_{k,\ell}(Y)$ for every $Y \subseteq V$, where $i_G(Y)$ denotes the number of hyperedges of $G$ contained in $Y$. Hence, by Proposition~\ref{prop:vector_covering}, there exists a polynomial-time algorithm that finds an orientation $\dG$ with in-degree vector $x$. Moreover, for every $Y \subseteq V$, we have $d^{in}_{\dG}(Y)=x(Y)-i_G(Y)\ge p^{s,t}_{k,\ell}(Y)$. 
\end{proof}

\subsection{Min-Max Relations for \texorpdfstring{$(k,(s,t),\ell)$}{(k,(s,t),l)}-Hyperarc-Connected Orientation Problems}
\label{sec:orientation-min-max}
In this section, we consider two optimization variants of $(k,(s,t),\ell)$-hyperarc-connected orientation problems, present min-max relations for both, and outline a polynomial-time algorithm to solve the corresponding minimization problem using our results on $\{s,t\}$-separating principal partition sequence.  
Our min-max relations follow immediately from the known characterization for the existence of $(k,(s,t),\ell)$-hyperarc-connected orientation \cite{tamas-frank-zoli}. Our main contribution is a polynomial-time algorithm to solve the minimization problem that arises in the min-max relation. 

Firstly, we consider the problem of maximizing $\ell$ for a given $k$ such that a given hypergraph $G$ with vertices $s,t\in V$ has a $(k, (s,t),\ell)$-connected orientation and give a min-max relation. 
For this maximization problem, we work under the assumption that $G$ has a $k$-hyperarc-connected orientation -- otherwise, $G$ has no $(k,(s,t),\ell)$-hyperarc-connected orientation for every $\ell$. 
\begin{thm}\label{thm:orientation-min-max-ell}
    Let $G=(V, E)$ be a hypergraph, $s,t\in V$, and $k\in \Z_{\ge 0}$ such that $G$ has a $k$-hyperarc-connected-orientation. 
    Then, 
    \begin{align*}
    &\max\left\{\ell\colon G\text{ has a }(k, (s,t),\ell)\text{-hyperarc-connected orientation}\right\}\\
    & \quad =\min\left\{|\delta_G(\mcp)|-k(|\mcp|-1)\colon \mcp\text{ is an }\{s,t\}\text{-separating partition of }V\right\}. 
    \end{align*}
\end{thm}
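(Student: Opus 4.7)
The plan is to derive this min--max relation as a direct consequence of Theorem~\ref{thm:st-orientation}, by carefully evaluating the sum $\sum_{X\in \mcp} p^{s,t}_{k,\ell}(X)$ case by case and using Theorem~\ref{thm:FKK-partition-connectivity} to absorb the constraints coming from non-$\{s,t\}$-separating partitions. Denote by $\ell^\ast$ the left-hand side and by $M$ the right-hand side of the claimed equality.

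First, I would unfold $p^{s,t}_{k,\ell}$ on an arbitrary non-trivial partition $\mcp$ (the case $\mcp=\{V\}$ is vacuous, as both sides are zero). If $\mcp$ is \emph{not} $\{s,t\}$-separating, then the (unique) part containing both $s$ and $t$ contributes $k$ (since it fails the condition $t\in X\subseteq V-s$), and every other part contributes $k$, so $\sum_{X\in \mcp} p^{s,t}_{k,\ell}(X)=k|\mcp|$. By the assumption that $G$ has a $k$-hyperarc-connected orientation together with Theorem~\ref{thm:FKK-partition-connectivity}, we have $|\delta_G(\mcp)|\ge k|\mcp|$ for every partition $\mcp$ of $V$ with $|\mcp|\ge 2$; hence the constraints in Theorem~\ref{thm:st-orientation} coming from non-$\{s,t\}$-separating partitions are satisfied automatically, regardless of $\ell$. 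If $\mcp$ \emph{is} $\{s,t\}$-separating, exactly one part contains $t$ (and not $s$), contributing $\max\{k,\ell\}$, while each of the other $|\mcp|-1$ parts contributes $k$; thus $\sum_{X\in \mcp} p^{s,t}_{k,\ell}(X)=k(|\mcp|-1)+\max\{k,\ell\}$.

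The remaining task is to compare $\ell^\ast$ and $M$. For $\ell\le k$, the $\{s,t\}$-separating constraints reduce to $|\delta_G(\mcp)|\ge k|\mcp|$, again automatic; so $\ell^\ast\ge k$. For $\ell>k$, Theorem~\ref{thm:st-orientation} says a $(k,(s,t),\ell)$-hyperarc-connected orientation exists if and only if $\ell\le |\delta_G(\mcp)|-k(|\mcp|-1)$ for every $\{s,t\}$-separating partition $\mcp$, equivalently $\ell\le M$. Observe that $M\ge k$: for any $\{s,t\}$-separating partition $\mcp$ with $|\mcp|\ge 2$, the $k$-hyperarc-connected orientation hypothesis gives $|\delta_G(\mcp)|\ge k|\mcp|$, hence $|\delta_G(\mcp)|-k(|\mcp|-1)\ge k$. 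Combining these observations, $\ell^\ast=\max\{\ell\ge 0\colon \ell\le k \text{ or } \ell\le M\}=M$, proving the theorem.

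The argument is essentially an unpacking of the characterization in Theorem~\ref{thm:st-orientation}; no step should present a serious obstacle. The one place where I would take extra care is the $M\ge k$ estimate, since without it one might be tempted to write $\ell^\ast=\max\{k,M\}$, missing that the two quantities coincide precisely because the $k$-hyperarc-connected orientation hypothesis forces $M\ge k$ through the cut-density bound of Theorem~\ref{thm:FKK-partition-connectivity}.
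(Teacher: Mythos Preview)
Your proof is correct and follows essentially the same approach as the paper's: both derive the result directly from Theorem~\ref{thm:st-orientation} by observing that the non-$\{s,t\}$-separating constraints are absorbed by Theorem~\ref{thm:FKK-partition-connectivity}, and that both sides of the equality are at least $k$. Your write-up is simply more explicit in unpacking the case analysis on $p^{s,t}_{k,\ell}$ and in justifying the inequality $M\ge k$.
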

\begin{proof}
    We know that both the $\max$ and the $\min$ are at least $k$ by Theorem~\ref{thm:FKK-partition-connectivity}. For $\ell\ge k$, by Theorem~\ref{thm:st-orientation}, $G$ has a $(k,(s,t),\ell)$-hyperarc-connected orientation if and only if $|\delta_G(\mcp)|\ge k(|\mcp|-1)+\ell$ for every $\{s,t\}$-separating partition $\mcp$ of $V$. Hence, the theorem follows. 
\end{proof}

Next, we consider the problem of maximizing $k$ for a given $\ell$ such that $G$ has a $(k, (s,t),\ell)$-connected orientation and give a min-max relation. 
For this maximization problem, we work under the assumption that $G$ has $\ell$ hyperedge-disjoint paths between $s$ and $t$ -- otherwise, $G$ has no $(k,(s,t),\ell)$-hyperarc-connected orientation for every $k$.

\begin{thm}\label{thm:orientation-min-max-k}
    Let $G=(V, E)$ be a hypergraph, $s,t\in V$, and $\ell\in \Z_{\ge 0}$ such that $G$ has $\ell$ hyperedge-disjoint paths between $s$ and $t$. 
    Then, 
    \begin{align*}
        \max\left\{k\colon G\text{ has a }(k, (s,t),\ell)\text{-hyperarc-connected orientation}\right\} = \min\{ \alpha, \beta\},
    \end{align*}
    where
\begin{align*}
\alpha & \coloneqq \min\bigl\{\bigl\lfloor\frac{|\delta_G(\mcp)|}{|\mcp|}\bigr\rfloor\colon \mcp \text{ is a partition of }V\bigr\},\ \text{and}\\
\beta & \coloneqq \min\bigl\{\bigl\lfloor\frac{|\delta_G(\mcp)|-\ell}{|\mcp|-1}\bigr\rfloor\colon \mcp \text{ is an }\{s,t\}\text{-separating partition of }V\bigr\}.
\end{align*}
\end{thm}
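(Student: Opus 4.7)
The plan is to derive both directions from the characterization in Theorem~\ref{thm:st-orientation}, using the observation that for any partition $\mcp$ of $V$ with $|\mcp|\ge 2$, the sum $\sum_{X\in\mcp} p^{s,t}_{k,\ell}(X)$ takes one of two simple forms: it equals $k|\mcp|$ when $\mcp$ is \emph{not} $\{s,t\}$-separating (because no part satisfies $t\in X\subseteq V-s$), and equals $(|\mcp|-1)k+\max\{k,\ell\}$ when $\mcp$ \emph{is} $\{s,t\}$-separating (since exactly one part satisfies the second case of~\eqref{eq:conditions}). Under the hypothesis that $G$ admits $\ell$ hyperedge-disjoint paths between $s$ and $t$, one has $|\delta_G(\mcp)|\ge \ell$ for every $\{s,t\}$-separating $\mcp$ (by contracting each side of any $\{s,t\}$-cut induced by $\mcp$), so both $\alpha$ and $\beta$ are well-defined non-negative integers.

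For the $\le$ direction, suppose $G$ admits a $(k,(s,t),\ell)$-hyperarc-connected orientation and apply Theorem~\ref{thm:st-orientation}. For any partition $\mcp$ with $|\mcp|\ge 2$, every part is a proper non-empty subset of $V$, hence contributes at least $k$, giving $|\delta_G(\mcp)|\ge k|\mcp|$ and therefore $k\le\lfloor|\delta_G(\mcp)|/|\mcp|\rfloor$; minimizing yields $k\le\alpha$. For any $\{s,t\}$-separating partition $\mcp$, the unique part $X$ with $t\in X\subseteq V-s$ contributes $\max\{k,\ell\}\ge\ell$ and the remaining $|\mcp|-1$ parts each contribute $k$, giving $|\delta_G(\mcp)|\ge(|\mcp|-1)k+\ell$ and therefore $k\le\lfloor(|\delta_G(\mcp)|-\ell)/(|\mcp|-1)\rfloor$; minimizing yields $k\le\beta$. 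Combining, $k\le\min\{\alpha,\beta\}$.

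For the $\ge$ direction, let $k^\ast=\min\{\alpha,\beta\}$. By Theorem~\ref{thm:st-orientation} it suffices to verify condition~\eqref{eq:st-partition-connectivity} with parameters $(k^\ast,\ell)$ for every partition $\mcp$ of $V$. The case $|\mcp|=1$ is trivial as $p^{s,t}_{k^\ast,\ell}(V)=0$. For $|\mcp|\ge 2$, if $\mcp$ is not $\{s,t\}$-separating, the inequality to check is $|\delta_G(\mcp)|\ge k^\ast|\mcp|$, which follows from $k^\ast\le\alpha$ together with the integrality of $k^\ast$. If $\mcp$ is $\{s,t\}$-separating, the inequality to check is $|\delta_G(\mcp)|\ge(|\mcp|-1)k^\ast+\max\{k^\ast,\ell\}$: when $\ell\le k^\ast$ this reduces to $k^\ast|\mcp|$ and follows from $k^\ast\le\alpha$, and when $\ell>k^\ast$ it equals $(|\mcp|-1)k^\ast+\ell$ and follows from $k^\ast\le\beta$, again using that $k^\ast$ is an integer.

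The argument is essentially careful bookkeeping of the piecewise-defined function $p^{s,t}_{k,\ell}$, and no substantively new idea beyond Theorem~\ref{thm:st-orientation} is needed. The only mild subtlety is the case split in the $\ge$ direction comparing $\ell$ with $k^\ast$ for $\{s,t\}$-separating partitions, which is precisely what necessitates the two separate minimizations defining $\alpha$ and $\beta$.
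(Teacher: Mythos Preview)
Your proof is correct and follows essentially the same approach as the paper: both directions are reduced to the characterization in Theorem~\ref{thm:st-orientation} by unpacking $\sum_{X\in\mcp}p^{s,t}_{k,\ell}(X)$ according to whether $\mcp$ is $\{s,t\}$-separating. The paper handles the $\max\{k^\ast,\ell\}$ term by combining the two inequalities coming from $\alpha$ and $\beta$ rather than by your explicit case split on $\ell\le k^\ast$ versus $\ell>k^\ast$, but the content is identical.
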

\begin{proof}
    We first show that $\max \ge \min$. Suppose that $G$ has a $(k,(s,t),\ell)$-hyperarc-connected orientation. Let $\mcp$ be an arbitrary partition of $V$. By Theorem~\ref{thm:FKK-partition-connectivity}, we have $|\delta_G(\mcp)| \ge k|\mcp|$, and hence $\lfloor |\delta_G(\mcp)| / |\mcp| \rfloor \ge k$. Next, suppose that $\mcp$ is an $\{s,t\}$-separating partition of $V$. By Theorem~\ref{thm:st-orientation}, $|\delta_G(\mcp)| \ge k(|\mcp|-1) + \max\{k,\ell\} \ge k(|\mcp|-1) + \ell$, and therefore $\lfloor (|\delta_G(\mcp)| - \ell)/(|\mcp|-1) \rfloor \ge k$.

    Next, we show that $\max\le \min$. Suppose that $\min\ge k$. Then, we have
    \begin{align*}
        |\delta_G(\mcp)|&\ge k(|\mcp|-1) + k\quad \text{for all partition } \mcp \text{ of }V, \text{ and}\\
        |\delta_G(\mcp)|&\ge k(|\mcp|-1) + \ell\quad \text{for all $\{s,t\}$-separating partition } \mcp \text{ of }V. 
    \end{align*}
    Consequently, 
    \begin{align*}
        |\delta_G(\mcp)|&\ge k(|\mcp|-1) + k\quad \text{for all partition } \mcp \text{ of }V, \text{ and}\\
        |\delta_G(\mcp)|&\ge k(|\mcp|-1) + \max\{k,\ell\}\quad \text{for all $\{s,t\}$-separating partition } \mcp \text{ of }V. 
    \end{align*}
    Equivalently, $|\delta_G(\mcp)|\ge p^{s,t}_{k, \ell}(\mcp)$ for every partition $\mcp$ of $V$. Hence, by Theorem~\ref{thm:st-orientation}, the hypergraph $G$ has a $(k, (s,t), \ell)$-hyperarc-connected orientation. 
\end{proof}

We now outline algorithms to solve the minimization problems in Theorems~\ref{thm:orientation-min-max-ell} and~\ref{thm:orientation-min-max-k} in polynomial time. We need some background. 
We recall that a \emph{directed hypergraph} $\dG=(V, E, \head\colon E\rightarrow V)$ is specified by a vertex set $V$, hyperedge set $E$ where each $e\in E$ is a subset of $V$, and a function $\head\colon E\rightarrow V$ with the property that $\head(e)\in e$ for each $e\in E$. For a subset $U\subseteq V$, we define $\delta^{in}_{\dG}(U)\coloneq\{e\in E\colon \head(e)\in U, e\setminus U\neq \emptyset\}$ and the function $d^{in}_{\dG}\colon 2^V\rightarrow\bZ$  defined by $d^{in}_{\dG}(U)\coloneq|\delta^{in}_{\dG}(U)|$. 
It is well-known that the function $d^{in}_{\dG}$ is submodular. 

Let $G=(V, E)$ be a hypergraph and $s,t\in V$. Let $\dG=(V, E, head: E\rightarrow V)$ be an arbitrary orientation of $G$. We observe that for each partition $\mcp$ of $V$, we have that $|\delta_G(\mcp)|=\sum_{U\in \mcp}d^{in}_{\dG}(A)=d^{in}_{\dG}(\mcp)$. We define $g_{\mcp}(\lambda)\coloneq d^{in}_{\dG}(\mcp)-\lambda|\mcp|$ for each partition $\mcp$ of $V$, $g(\lambda)\coloneq \min\{g_{\mcp}(\lambda)\colon \mcp \text{ is a partition of }V\}$, and \\ $g^{s,t}(\lambda)\coloneq\min\{g_{\mcp}(\lambda)\colon \mcp \text{ is a }\{s,t\}\text{-separating partition of }V\}$.

Now, for a given $k\in \Z_{\ge 0}$, the minimization problem in Theorem \ref{thm:orientation-min-max-ell} is equivalent to minimizing $g^{s,t}(k)+k$ and we note that $g^{s,t}(k)$ can be computed in polynomial time using Theorem \ref{thm:st-pps}. 
Next, for a given $\ell\in \Z_{\ge 0}$, we address the minimization problem in Theorem \ref{thm:orientation-min-max-k}. Here, we note that $\alpha$ is the floor of the smallest number $\lambda$ such that $g(\lambda)\le 0$ and $\beta$ is the floor of the smallest number $\lambda$ such that $g^{s,t}(\lambda)\le \ell-\lambda$. 
We can compute $g(\lambda)$ and $g^{s,t}(\lambda)$ for all $\lambda$ in polynomial time using Theorems \ref{thm:PPS-exists-algo} and \ref{thm:st-pps} respectively. Consequently, we can compute $\alpha$ and $\beta$ in polynomial time.

\begin{rem}
If $G=(V,E)$ is a hyperedge-weighted hypergraph with weights $w\colon E \rightarrow \R_{+}$ and $|\delta_G(\mcp)|$ is replaced by $\sum_{e \in \delta_G(\mcp)} w_e$, then the corresponding weighted minimization problems can also be solved in strongly polynomial time using the same approach. For hyperedge-weighted hypergraphs, the maximization orientation problems on the left hand sides can be defined naturally by interpreting the weight of a hyperedge as the number of its parallel copies; we omit the formal definition in the interests of brevity.
\end{rem}

\section{Conclusion}
\label{sec:conc}
Motivated by the breadth of applications of the principal partition sequence of submodular functions, we investigated the notion of $\{s,t\}$-separating principal partition sequence. We illustrated two applications of this sequence: we designed approximation algorithms for \stkpartition for monotone and posimodular submodular functions and polynomial-time algorithms for \kstellConnOrient in hypergraphs. 
A natural direction for research is whether 
variants of principal partitioning sequence provide insights into the approximability of multiway cut. As we have seen, it does provide the current-best approximation factor for \kpartition and \stkpartition for monotone and posimodular submodular functions. 
Another interesting direction is whether the principal partition sequence of the cut function of a given graph in near-linear time.  
It would also be interesting to understand other applications of the $\{s,t\}$-separating principal partition sequence. 
E.g., the principal partition sequence of the graph cut function is related to recursive ideal tree packing \cite{CQX20}; do we have a similar connection for $\{s,t\}$-separating principal partition sequence?

\medskip

\paragraph{Acknowledgement.} Karthekeyan Chandrasekaran was partially supported by NSF grant CCF-2402667. Krist\'{o}f B\'{e}rczi and Tam\'{a}s Kir\'{a}ly were supported in part by the Lend\"ulet Programme of the Hungarian Academy of Sciences -- grant number LP2021-1/2021, by the Ministry of Innovation and Technology of Hungary from the National Research, Development and Innovation Fund -- grant numbers ADVANCED 153096, ADVANCED 150556, and ELTE TKP 2021-NKTA-62, by Dynasnet European Research Council Synergy project -- grant number ERC-2018-SYG 810115

\bibliographystyle{abbrv}
\bibliography{st-sep}

@article{Ma18,
 author = {Manurangsi, P.},
 title = {{Inapproximability of Maximum Biclique Problems, Minimum $k$-Cut and Densest At-Least-$k$-Subgraph from the Small Set Expansion Hypothesis}},
 journal = {Algorithms},
 volume = {11(1)},
 year = {2018},
 pages = {10},
}

@article{NK07,
title = {Minimum cost subpartitions in graphs},
author = {Nagamochi, H. and Kamidoi, Y.},
journal = {Information Processing Letters},
volume = {102},
number = {2},
pages = {79--84},
year = {2007},
}

@article{SV95,
author = {Saran, H. and Vazirani, V.},
title = {{Finding $k$ Cuts within Twice the Optimal}},
journal = {SIAM Journal on Computing},
volume = {24},
number = {1},
pages = {101-108},
year = {1995},
}

@InProceedings{San21,
author="Santiago, R.",
title="New Approximations and Hardness Results for Submodular Partitioning Problems",
booktitle="Proceedings of International Workshop on Combinatorial Algorithms",
series = {IWOCA},
year="2021",
pages="516--530",
}

@article{GH94,
 author = {Goldschmidt, O. and Hochbaum, D.},
 title = {{A Polynomial Algorithm for the $k$-cut Problem for Fixed $k$}},
 journal = {Mathematics of Operations Research},
 volume = {19},
 number = {1},
 month = {Feb},
 year = {1994},
 pages = {24--37},
}

@book {Narayanan-book,
    AUTHOR = {Narayanan, H.},
     TITLE = {Submodular functions and electrical networks},
    SERIES = {Annals of Discrete Mathematics},
    VOLUME = {54},
 PUBLISHER = {North-Holland Publishing Co., Amsterdam},
      YEAR = {1997},
}

@InProceedings{CCK25,
  author =	{Chandrasekaran, K. and Chekuri, C. and Kulkarni, S.},
  title =	{{On Deleting Vertices to Reduce Density in Graphs and Supermodular Functions}},
  booktitle =	{52nd International Colloquium on Automata, Languages, and Programming (ICALP 2025)},
  pages =	{43:1--43:20},
  year =	{2025},
}

@inproceedings{hqc22,
  title={Faster and Scalable Algorithms for Densest Subgraph and Decomposition},
  author={Harb, Elfarouk and Quanrud, Kent and Chekuri, Chandra},
  year={2022},
  booktitle={Advances in Neural Information Processing Systems}
}

@article{CQX20,
author = {Chekuri, C. and Quanrud, K. and Xu, C.},
title = {{LP Relaxation and Tree Packing for Minimum $k$-Cut}},
journal = {SIAM Journal on Discrete Mathematics},
volume = {34},
number = {2},
pages = {1334-1353},
year = {2020},
}

@article{Bar00,
title = {On the $k$-cut problem},
author = {Barahona, F.},
journal = {Operations Research Letters},
volume = {26},
number = {3},
pages = {99--105},
year = {2000},
}

@article{PN03,
  title={Improving graph partitions using submodular functions},
  author={Patkar, S. B. and Narayanan, H.},
  journal={Discrete Applied Mathematics},
  volume={131},
  pages={535--553},
  year={2003}
}

@article{RS07,
title = {Approximating $k$-cuts using network strength as a Lagrangean relaxation},
journal = {European Journal of Operational Research},
volume = {186},
number = {1},
pages = {77-90},
year = {2008},
doi = {https://doi.org/10.1016/j.ejor.2007.01.040},
author = {R. Ravi and A. Sinha},
}

@article{NRP96,
author = {Narayanan, H. and Roy, S. and Patkar, S.},
title = {{Approximation Algorithms for Min-$k$-Overlap Problems Using the Principal Lattice of Partitions Approach}},
journal = {Journal of Algorithms},
volume = {21},
number = {2},
pages = {306--330},
year = {1996},
}

@article{greedysplit,
  title={Greedy splitting algorithms for approximating multiway partition problems},
  author={Zhao, Liang and Nagamochi, Hiroshi and Ibaraki, Toshihide},
  journal={Mathematical Programming},
  volume={102},
  pages={167--183},
  year={2005},
  publisher={Springer}
}

@article{ppskarthikwang,
  title={Approximating submodular $ k $-partition via principal partition sequence},
  author={Chandrasekaran, Karthekeyan and Wang, Weihang},
  journal={SIAM Journal on Discrete Mathematics},
  year={2024}, 
  volume = {38},
  issue = {4}, 
  pages = {3198-3219},
}

@article{k-wayhypergraphcuthardness,
  title={{On the hardness of approximating the $k$-way hypergraph cut problem}},
  author={Chekuri, Chandra and Li, Shi},
  journal={Theory of Computing},
  volume={16},
  number={1},
  year={2020}
}

@article{submod_kapx,
  title={Divide-and-conquer algorithms for partitioning hypergraphs and submodular systems},
  author={Okumoto, Kazumasa and Fukunaga, Takuro and Nagamochi, Hiroshi},
  journal={Algorithmica},
  volume={62},
  number={3},
  pages={787--806},
  year={2012},
  publisher={Springer}
}

@article{tamas-frank-zoli,
title = {On the orientation of graphs and hypergraphs},
journal = {Discrete Applied Mathematics},
volume = {131},
number = {2},
pages = {385-400},
year = {2003},
issn = {0166-218X},
doi = {https://doi.org/10.1016/S0166-218X(02)00462-6},
url = {https://www.sciencedirect.com/science/article/pii/S0166218X02004626},
author = {András Frank and Tamás Király and Zoltán Király},
keywords = {Directed hypergraph, Connectivity, Orientation, Uncrossing},
abstract = {Graph orientation is a well-studied area of combinatorial optimization, one that provides a link between directed and undirected graphs. An important class of questions that arise in this area concerns orientations with connectivity requirements. In this paper we focus on how similar questions can be asked about hypergraphs, and we show that often the answers are also similar: many known graph orientation theorems can be extended to hypergraphs, using the familiar uncrossing techniques. Our results also include a short proof and an extension of a theorem of Khanna et al. (Proceedings of the Eleventh Annual ACM–SIAM Symposium on Discrete Alogrithm, 2001, pp. 663–671), and a new orientation theorem that provides a characterization for (2k+1)-edge-connected graphs.}
}

@article{pps,
title = {The principal lattice of partitions of a submodular function},
journal = {Linear Algebra and its Applications},
volume = {144},
pages = {179-216},
year = {1991},
issn = {0024-3795},
doi = {https://doi.org/10.1016/0024-3795(91)90070-D},
url = {https://www.sciencedirect.com/science/article/pii/002437959190070D},
author = {H. Narayanan},
abstract = {This paper studies the partitions on which a function (μ − λ)(Π)≡∑Ni∈Π(μ−λ)(Ni) reaches a minimum when μ is a submodular function and λ takes values between −∞ and ∞. For a given value of λ such partitions form a lattice which contains more than one element only for a finite set of “critical values” of λ. The collection of all such partitions for all values of λ forms a lattice, which we call the principal lattice of partitions of μ. This development has strong parallels with that of the principal partition of μ. We present efficient algorithms for the construction of this lattice for a general submodular function. We also bring out its applications to “electrical network analysis by decomposition” and present more efficient algorithms for the cases relevant to such analysis.}
}

@article{cunningham,
  title={Optimal attack and reinforcement of a network},
  author={Cunningham, William H},
  journal={Journal of the ACM (JACM)},
  volume={32},
  number={3},
  pages={549--561},
  year={1985},
  publisher={ACM New York, NY, USA}
}

@article{FSM-app,
  title={The realization of finite state machines by decomposition and the principal lattice of partitions of a submodular function},
  author={Desai, Madhav P and Narayanan, Hariharan and Patkar, Sachin B},
  journal={Discrete Applied Mathematics},
  volume={131},
  number={2},
  pages={299--310},
  year={2003},
  publisher={Elsevier}
}

@article{kolmogorovFast,
  title={A faster algorithm for computing the principal sequence of partitions of a graph},
  author={Kolmogorov, Vladimir},
  journal={Algorithmica},
  volume={56},
  number={4},
  pages={394--412},
  year={2010},
  publisher={Springer}
}

@Article{	  iri1979review,
  title		= {A review of recent work in {J}apan on principal partitions of matroids and their applications},
  author	= {Iri, Masao},
  journal	= {Annals of the New York Academy of Sciences},
  volume	= {319},
  number	= {1},
  pages		= {306--319},
  year		= {1979},
  publisher	= {Wiley Online Library}
}

@InCollection{	  iri1984structural,
  title		= {Structural theory for the combinatorial systems
		  characterized by submodular functions},
  author	= {Iri, Masao},
  booktitle	= {Progress in Combinatorial Optimization},
  pages		= {197--219},
  year		= {1984},
  publisher	= {Academic Press, New York}
}

@Article{	  nakamura1988structural,
  title		= {Structural theorems for submodular functions, polymatroids
		  and polymatroid intersections},
  author	= {Nakamura, Masataka},
  journal	= {Graphs and Combinatorics},
  volume	= {4},
  number	= {1},
  pages		= {257--284},
  year		= {1988},
  publisher	= {Springer}
}

@Article{	  narayanan1974theory,
  title		= {Theory of matroids and network analysis},
  author	= {Narayanan, H},
  journal	= {Ph.D. Thesis, Department of Electrical Engineering, Indian
		  Institute of Technology},
  year		= {1974}
}

@Article{	  tomizawa1976strongly,
  title		= {Strongly irreducible matroids and principal partition of a
		  matroid into strongly irreducible minors (in {J}apanese)},
  author	= {Tomizawa, Nobuaki},
  journal	= {Transactions of the Institute of Electronics and Communication Engineers of Japan},
  volume	= {59},
  pages		= {83--91},
  year		= {1976}
}

@article{nagano2010minimum,
  title={Minimum average cost clustering},
  author={Nagano, Kiyohito and Kawahara, Yoshinobu and Iwata, Satoru},
  journal={Advances in Neural Information Processing Systems},
  volume={23},
  year={2010}
}

@Article{	  kishi1969maximally,
  title		= {Maximally distant trees and principal partition of a
		  linear graph},
  author	= {Kishi, Genya and Kajitani, Yoji},
  journal	= {IEEE Transactions on Circuit Theory},
  volume	= {16},
  number	= {3},
  pages		= {323--330},
  year		= {1969},
  publisher	= {IEEE}
}

@incollection{fujishige2009theory,
  title = {Theory of Principal Partitions Revisited},
  author = {Fujishige,  Satoru},
  booktitle = {Research Trends in Combinatorial Optimization},
  editor= {Cook, William and Lov{\'a}sz, L{\'a}szl{\'o} and Vygen, Jens"},
  DOI = {10.1007/978-3-540-76796-1_7},
  publisher = {Springer}, 
  address = {Berlin, Heidelberg},
  pages = {127--162},
  year = {2009}
}

@article{ozawa1974common,
  title={Common Trees and Partition of Two-Graphs (in {J}apanese)},
  author={Ozawa, T},
  journal={Transactions of the Institute of Electronics and Communication Engineers of Japan},
  volume={57},
  number={5},
  pages={383--390},
  year={1974}
}

@article{iri1968min,
  title={A min-max theorem for the ranks and term-ranks of a class of matrices: An algebraic approach to the problem of the topological degrees of freedom of a network},
  author={Iri, Masao},
  journal={Transactions of the Institute of Electronics and Communication Engineers of Japan},
  volume={51},
  pages={180--187},
  year={1968}
}

@article{iri1969maximum,
  title={The maximum-rank minimum-term-rank theorem for the pivotal transforms of a matrix},
  author={Iri, Masao},
  journal={Linear Algebra and Its Applications},
  volume={2},
  number={4},
  pages={427--446},
  year={1969},
  publisher={Elsevier}
}

@article{bruno1971principal,
  title={The principal minors of a matroid},
  author={Bruno, John and Weinberg, Louis},
  journal={Linear Algebra and Its Applications},
  volume={4},
  number={1},
  pages={17--54},
  year={1971},
  publisher={Elsevier}
}

@article{fujishige1980principal,
  title={Principal structures of submodular systems},
  author={Fujishige, Satoru},
  journal={Discrete Applied Mathematics},
  volume={2},
  pages={77--79},
  year={1980}
}

@Article{	  fujishige1980lexicographically,
  title		= {Lexicographically optimal base of a polymatroid with
		  respect to a weight vector},
  author	= {Fujishige, Satoru},
  journal	= {Mathematics of Operations Research},
  volume	= {5},
  number	= {2},
  pages		= {186--196},
  year		= {1980},
  publisher	= {INFORMS}
}

@book{fujishige2005submodular,
  title={Submodular functions and optimization},
  author={Fujishige, Satoru},
  year={2005},
  publisher={Elsevier}
}

@article{berczi2019beating,
  title={Beating the 2-approximation factor for global bicut},
  author={B{\'e}rczi, Krist{\'o}f and Chandrasekaran, Karthekeyan and Kir{\'a}ly, Tam{\'a}s and Lee, Euiwoong and Xu, Chao},
  journal={Mathematical Programming},
  volume={177},
  number={1},
  pages={291--320},
  year={2019},
  publisher={Springer}
}

@article{berczi2025approximating,
  title={Approximating Submodular Matroid-Constrained Partitioning},
  author={B{\'e}rczi, Krist{\'o}f and Chandrasekaran, Karthekeyan and Kir{\'a}ly, Tam{\'a}s and Szabo, Daniel P},
  journal={arXiv preprint arXiv:2506.19507},
  year={2025}
}

@article{nash1960orientations,
  title={On orientations, connectivity and odd-vertex-pairings in finite graphs},
  author={Nash-Williams, C St JA},
  journal={Canadian Journal of Mathematics},
  volume={12},
  pages={555--567},
  year={1960},
  publisher={Cambridge University Press}
}

\appendix

\section{Auxiliary Lemmas}

\begin{lem}\label{lem:piecewise-linear}
    The function $g^{s,t}$ is piecewise linear with at most $|V|-2$ breakpoints.
\end{lem}
\begin{proof}
    Each $\{s,t\}$-separating partition $\cP$ must have at least 2 parts, and no more than $|V|$. Thus there are $|V|-1$ possible values of $|\cP|$, and $g^{s,t}$ is the minimum of linear functions with $|V|-1$ different slopes, which is piecewise linear with at most $|V|-1$ pieces. Therefore, $g^{s,t}$ has at most $|V|-2$ breakpoints.
\end{proof}

\begin{restatable}{lem}{lemmak1k2orientations}\label{lem:k1-k2-orientation}
    Let $G=(V,E)$ be a hypergraph, $s,t\in V$, and $k, \ell\in \Z_{\ge 0}$ with $\ell\ge k$. Then, $G$ has a $(k,(s,t),\ell)$-hyperarc-connected orientation if and only if for every $k_1, k_2\in \Z_{\ge k}$ with $k_1+k_2=\ell+k$, there exists an orientation $\dGp$ of $G$ such that 
    \begin{enumerate}\itemsep0em
        \item $\dGp$ is $k$-hyperarc-connected, \label{prop:kac}
        \item $\dGp$ has $k_1$ hyperedge-disjoint paths from $s$ to $t$, or equivalently, $d^{in}_{\dG}(U)\ge k_1$ for all $t\in U\subseteq V-s$, and \label{prop:k1}
        \item $\dGp$ has $k_2$ hyperedge-disjoint paths from $t$ to $s$, or equivalently, $d^{in}_{\dG}(U)\ge k_2$ for all $s\in U\subseteq V-t$. \label{prop:k2}
    \end{enumerate}
\end{restatable}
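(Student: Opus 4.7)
My plan is to prove both directions directly, with the nontrivial direction relying on a standard path-reversal argument.

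The easy direction ($\Leftarrow$) follows by choosing $k_1 := \ell$ and $k_2 := k$. Since $\ell \ge k$, we have $k_1, k_2 \ge k$ and $k_1 + k_2 = \ell + k$, so the orientation $\dGp$ guaranteed by the hypothesis for this choice is precisely a $(k, (s,t), \ell)$-hyperarc-connected orientation.

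For the hard direction ($\Rightarrow$), let $\dG$ be a $(k, (s,t), \ell)$-hyperarc-connected orientation of $G$, and fix $k_1, k_2 \ge k$ with $k_1 + k_2 = \ell + k$. I would set $r := \ell - k_1 = k_2 - k \in [0, \ell - k]$. Because $d^{in}_{\dG}(U) \ge \ell$ for every $t \in U \subseteq V - s$, Menger's theorem for directed hypergraphs yields $\ell$ hyperarc-disjoint $(s,t)$-paths in $\dG$. I would pick any $r$ of these paths $P_1, \ldots, P_r$ and construct $\dGp$ from $\dG$ by reversing each $P_i$: for every hyperarc $e$ traversed by some $P_i$ with predecessor $u$ and head $v$ in $P_i$, I redefine the head of $e$ to be $u$. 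Since the chosen paths are hyperarc-disjoint, each hyperarc's head is reassigned at most once, so $\dGp$ is a well-defined orientation of $G$.

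The core technical claim is that reversing a single $(s,t)$-path $P$ changes $d^{in}(U)$ by exactly $+1$ on cuts with $s \in U \subseteq V - t$, by $-1$ on cuts with $t \in U \subseteq V - s$, and by $0$ on all other cuts. I would prove this by analyzing each hyperarc $e$ on $P$ with predecessor $u$ and successor $v$: the contribution of $e$ to $d^{in}(U)$ changes by $+1$ if $u \in U$ and $v \notin U$, by $-1$ if $u \notin U$ and $v \in U$, and by $0$ if $u, v$ lie on the same side of $U$ (in the first two cases the condition $e \setminus U \neq \emptyset$ is automatic from $v \in e$ or $u \in e$, so other vertices of $e$ are irrelevant). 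Summing the per-hyperarc changes along $P$ telescopes to the net signed count of $U$-crossings, which equals $+1, -1,$ or $0$ depending on the sides of $s$ and $t$.

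Applying this to all $r$ reversed paths then yields $d^{in}_{\dGp}(U) \ge d^{in}_{\dG}(U) + r \ge k + r = k_2$ for every $s \in U \subseteq V - t$, $d^{in}_{\dGp}(U) \ge d^{in}_{\dG}(U) - r \ge \ell - r = k_1$ for every $t \in U \subseteq V - s$, and $d^{in}_{\dGp}(U) = d^{in}_{\dG}(U) \ge k$ for the remaining non-trivial cuts, establishing properties~\eqref{prop:kac}--\eqref{prop:k2}. The only mildly subtle point compared with the graph analogue is verifying that extra vertices of a hyperarc beyond its two path-neighbors play no role in the per-hyperarc change in $d^{in}$; everything else is elementary bookkeeping.
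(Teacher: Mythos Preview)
Your proof is correct and follows essentially the same path-reversal approach as the paper: both take $k_1=\ell$, $k_2=k$ for the easy direction, and for the hard direction reverse $r=k_2-k$ of the $\ell$ hyperarc-disjoint $(s,t)$-paths guaranteed by Menger's theorem, then analyze the effect on $d^{in}(U)$ according to which of $s,t$ lie in $U$. Your treatment is in fact slightly more careful than the paper's, since you explicitly verify the per-hyperarc change and note that extra vertices of a hyperedge beyond the two path-neighbors do not affect it.
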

\begin{proof}
    The backwards direction follows by taking $k_1 \coloneq \ell$ and $k_2\coloneq k$. For the other direction,    
    let $\dG$ be a $(k,(s,t),\ell)$-hyperarc-connected orientation of $G$, and let $k_1,k_2\in \Z_{\geq k}$ such that $k_1 + k_2 = \ell+k$. 
    By Menger's theorem, 
    there exist $\ell$ hyperedge-disjoint paths from $s$ to $t$ in $\dG$ and $k$ hyperedge-disjoint paths from $t$ to $s$ in $\dG$. 
    If $k_2=k$, then $\dGp\coloneq\dG$ satisfies the required properties. Suppose $k_2>k$. 
    Let $P_1, \ldots, P_{\ell}$ be $\ell$ hyperedge-disjoint paths from $s$ to $t$ in $\dG$. 
    Consider the orientation $\dGp=(V,E,\head'\colon E\to V)$ obtained from $\dG$ by reversing the orientations of the edges in $P_1, \ldots, P_{k_2-k}$. That is, for each $j\in[q]$, if $P_j$ consists of edges $e_1,\dots,e_q$ where $s\in e_1$, $\head(e_i)\in e_{i+1}$ for $i\in[q-1]$ and $\head(e_q)=t$, then in $\dGp$ we define $\head'(e_1)=s$ and $\head'(e_{i+1})=\head(e_{i})$ for $i\in[q-1]$. For every edge $e$ not contained in the paths, we set $\head'(e)=\head(e)$. 
    
    We claim that $\dGp$ satisfies properties \eqref{prop:kac}-\eqref{prop:k2}. To see this, consider an arbitrary $U\subseteq V$. If $\{s,t\}\not\subseteq U$ or $\{s,t\}\subseteq U$, then every path $P_1, \ldots P_{k_2-k}$ enters and leaves the set $U$ the same number of times and consequently, $d^{in}_{\dGp}(U) = d^{in}_{\dG}(U) \geq k$. If $s\in U$ but $t\notin U$, then $d^{in}_{\dGp}(U) \ge d^{in}_{\dG}(U) + k_2 - k \geq k+(k_2-k)=k_2$. Finally, if $t\in U$ but $s\notin U$, then $d^{in}_{\dGp}(U) \ge d^{in}_{\dG}(U) - (k_2 - k) \geq \ell - (k_2-k) = k+\ell-k_2=k_1$.
    
    We recall that $k_1, k_2\ge k$. Hence, the above observations together imply that $\dGp$ is $k$-hyperarc-connected. Furthermore, by Menger's theorem, $\dGp$ has $k_1$ hyperedge-disjoint paths from $s$ to $t$ and $k_2$ hyperedge-disjoint paths from $t$ to $s$. This concludes the proof of the lemma. 
\end{proof}
    
\end{document}